%%
%% This is file `sample-sigconf.tex',
%% generated with the docstrip utility.
%%
%% The original source files were:
%%
%% samples.dtx  (with options: `sigconf')
%% 
%% IMPORTANT NOTICE:
%% 
%% For the copyright see the source file.
%% 
%% Any modified versions of this file must be renamed
%% with new filenames distinct from sample-sigconf.tex.
%% 
%% For distribution of the original source see the terms
%% for copying and modification in the file samples.dtx.
%% 
%% This generated file may be distributed as long as the
%% original source files, as listed above, are part of the
%% same distribution. (The sources need not necessarily be
%% in the same archive or directory.)
%%
%%
%% Commands for TeXCount
%TC:macro \cite [option:text,text]
%TC:macro \citep [option:text,text]
%TC:macro \citet [option:text,text]
%TC:envir table 0 1
%TC:envir table* 0 1
%TC:envir tabular [ignore] word
%TC:envir displaymath 0 word
%TC:envir math 0 word
%TC:envir comment 0 0
%%
%%
%% The first command in your LaTeX source must be the \documentclass
%% command.
%%
%% For submission and review of your manuscript please change the
%% command to \documentclass[manuscript, screen, review]{acmart}.
%%
%% When submitting camera ready or to TAPS, please change the command
%% to \documentclass[sigconf]{acmart} or whichever template is required
%% for your publication.
%%
%%
%\documentclass[sigconf]{acmart}

%\documentclass[anonymous,review,sigconf]{acmart}
\documentclass[sigconf]{acmart}

\acmConference[]{}{}{}
\settopmatter{printacmref=false} % Removes citation information below abstract
\renewcommand\footnotetextcopyrightpermission[1]{} % removes footnote with conference information in first column
\pagestyle{plain}
%\documentclass[manuscript, screen, review]{acmart}
%%
%% \BibTeX command to typeset BibTeX logo in the docs
\AtBeginDocument{%
  }

\usepackage{tikz,pgffor}
\usepackage{ifthen}
\usetikzlibrary{arrows,backgrounds,calc,patterns.meta}
\usetikzlibrary{automata,positioning,arrows,shapes,math,arrows.meta,decorations.pathmorphing}
\tikzstyle{min}=[thick,circle,draw,minimum size=1.4em,inner sep=0em,text centered]
\tikzstyle{dec}=[circle,draw,fill,minimum size=.8ex,inner sep=0em]

\usepackage{thmtools} 
\usepackage{thm-restate}
\usepackage{multicol}

\newcommand{\N}{\mathbb{N}}
\newcommand{\R}{\mathbb{R}}
\newcommand{\Q}{\mathbb{Q}}

\newcommand{\A}{\mathcal{A}}

\newcommand{\F}{\mathcal{F}}

\newcommand{\M}{\mathcal{M}}

\newcommand{\X}{\mathcal{X}}

\newcommand{\calL}{\mathcal{L}}

\newcommand{\calO}{\mathcal{O}}

\newcommand{\AP}{\textit{AP}}
\newcommand{\run}{\textit{Run}}
\newcommand{\NP}{\textbf{NP}}

\newcommand{\ex}[1]{\langle #1 \rangle}
\newcommand{\Ex}[1]{\langle\hspace*{-.7ex}\langle #1 \rangle\hspace*{-.7ex}\rangle}
\newcommand{\As}[1]{\mathtt{A}_{#1}}
\newcommand{\Bs}[1]{\mathtt{B}_{#1}}
\newcommand{\Cs}[1]{\mathtt{C}_{#1}}

\newcommand{\Init}{\mathit{Init}}
\newcommand{\bInit}{\textit{\textbf{Init}}}

\newcommand{\Abandon}{\mathit{Abandon}}
\newcommand{\Simulate}{\mathit{Sim}}

\newcommand{\Invariant}{\mathit{Invariant}}

\newcommand{\Free}{\mathit{Free}}
\newcommand{\bFree}{\textit{\textbf{Free}}}

\newcommand{\Fin}{\mathit{Fin}}
\newcommand{\bFin}{\textit{\textbf{Fin}}}
\newcommand{\Trans}{\mathit{Trans}}

\newcommand{\LTrans}{\mathit{LTrans}}
\newcommand{\CTrans}{\mathit{CTrans}}
\newcommand{\bCTrans}{\textit{\textbf{CTrans}}}

\newcommand{\bTrans}{\textit{\textbf{Trans}}}
\newcommand{\Transient}{\mathit{Transient}}
\newcommand{\LPass}{\mathit{LPass}}
\newcommand{\Step}{\mathit{Step}}
\newcommand{\NStep}{\mathit{STEP}}
\newcommand{\bStep}{\textit{\textbf{{Step}}}}
\newcommand{\Interval}{\mathit{Interval}}
\newcommand{\bInterval}{\textit{\textbf{Interval}}}
\newcommand{\Suc}{\mathit{Suc}}

\newcommand{\Fsuc}{\mathit{FSuc}}
\newcommand{\Zsuc}{\mathit{ZSuc}}
\newcommand{\Psuc}{\mathit{PSuc}}
\newcommand{\bZsuc}{\textit{\textbf{ZSuc}}}
\newcommand{\bPsuc}{\textit{\textbf{PSuc}}}
\newcommand{\Csuc}{\mathit{CSuc}}

\newcommand{\IZsuc}{\mathit{IZSuc}}
\newcommand{\IPsuc}{\mathit{IPSuc}}
\newcommand{\bIZsuc}{\textit{\textbf{IZSuc}}}
\newcommand{\bIPsuc}{\textit{\textbf{IPSuc}}}

\newcommand{\OZsuc}{\mathit{OZSuc}}
\newcommand{\OPsuc}{\mathit{OPSuc}}
\newcommand{\OZer}{\mathit{OZer}}
\newcommand{\OPos}{\mathit{OPos}}
\newcommand{\bFsuc}{\textit{\textbf{FSuc}}}
\newcommand{\Zero}{\mathit{Zero}}
\newcommand{\bZero}{\textit{\textbf{Zero}}}
\newcommand{\Eq}{\mathit{Eq}}
\newcommand{\bEq}{\textit{\textbf{Eq}}}
\newcommand{\Points}{\mathit{Points}}
\newcommand{\Line}{\mathit{Line}}
\newcommand{\Lines}{\mathit{LSegs}}
\newcommand{\Area}{\mathit{Area}}
\newcommand{\Out}{\mathit{Out}}
\newcommand{\Ins}{\mathit{Ins}}
\newcommand{\slope}{\mathit{slope}}

\newcommand{\Labels}{\mathit{Labels}}
\newcommand{\ta}{\mathit{ta}}
\newcommand{\tb}{\mathit{tb}}
\newcommand{\tc}{\mathit{tc}}
\newcommand{\te}{\mathit{tx}}
\newcommand{\Succ}{\mathit{succ}}

\renewcommand{\vec}[1]{\pmb{#1}}

%\addto{\captionsenglish}{\renewcommand*{\appendixname}{MyAppx}}

% shortcut for \varphi
\newcommand*{\vp}{\varphi}

% symbol for the numeric comparison operator

% shortcuts for comparison operator and numeric constant

% probabilistic operators
\newcommand*{\opx}{\operatorname{\pmb{\mathtt{X}}}}
\newcommand*{\opu}{\operatorname{\pmb{\mathtt{U}}}}
\newcommand*{\opf}{\operatorname{\pmb{\mathtt{F}}}}
\newcommand*{\opg}{\operatorname{\pmb{\mathtt{G}}}}
% symbol for powerset

% symbol for natural numbers

% symbol used for sigma-algebra

% symbol used for probablity measure
\newcommand*{\m}{\mathbb{P}}

\sloppy

%%
%% end of the preamble, start of the body of the document source.
\begin{document}

%%
%% The "title" command has an optional parameter,
%% allowing the author to define a "short title" to be used in page headers.
\title[PCTL Satisfiability is Undecidable]{The General and Finite Satisfiability Problems\\ for PCTL are Undecidable}

%%
%% The "author" command and its associated commands are used to define
%% the authors and their affiliations.
%% Of note is the shared affiliation of the first two authors, and the
%% "authornote" and "authornotemark" commands
%% used to denote shared contribution to the research.
\author{Miroslav Chodil}
% \email{trovato@corporation.com}
% \orcid{1234-5678-9012}
\author{Anton{\'{\i}}n Ku{\v{c}}era}
%\authornotemark[1]
%\email{webmaster@marysville-ohio.com}
\affiliation{%
  \institution{Faculty of Informatics, Masaryk University}
  \streetaddress{Botanick\'{a} 68a}
  \city{Brno}
 % \state{Ohio}
  \country{Czechia}
  \postcode{60200}}
%\email{tony@fi.muni.cz, }

% \author{Lars Th{\o}rv{\"a}ld}
% \affiliation{%
%   \institution{The Th{\o}rv{\"a}ld Group}
%   \streetaddress{1 Th{\o}rv{\"a}ld Circle}
%   \city{Hekla}
%   \country{Iceland}}
% \email{larst@affiliation.org}

% \author{Valerie B\'eranger}
% \affiliation{%
%   \institution{Inria Paris-Rocquencourt}
%   \city{Rocquencourt}
%   \country{France}
% }

% \author{Aparna Patel}
% \affiliation{%
%  \institution{Rajiv Gandhi University}
%  \streetaddress{Rono-Hills}
%  \city{Doimukh}
%  \state{Arunachal Pradesh}
%  \country{India}}

% \author{Huifen Chan}
% \affiliation{%
%   \institution{Tsinghua University}
%   \streetaddress{30 Shuangqing Rd}
%   \city{Haidian Qu}
%   \state{Beijing Shi}
%   \country{China}}

% \author{Charles Palmer}
% \affiliation{%
%   \institution{Palmer Research Laboratories}
%   \streetaddress{8600 Datapoint Drive}
%   \city{San Antonio}
%   \state{Texas}
%   \country{USA}
%   \postcode{78229}}
% \email{cpalmer@prl.com}

% \author{John Smith}
% \affiliation{%
%   \institution{The Th{\o}rv{\"a}ld Group}
%   \streetaddress{1 Th{\o}rv{\"a}ld Circle}
%   \city{Hekla}
%   \country{Iceland}}
% \email{jsmith@affiliation.org}

% \author{Julius P. Kumquat}
% \affiliation{%
%   \institution{The Kumquat Consortium}
%   \city{New York}
%   \country{USA}}
% \email{jpkumquat@consortium.net}

%%
%% By default, the full list of authors will be used in the page
%% headers. Often, this list is too long, and will overlap
%% other information printed in the page headers. This command allows
%% the author to define a more concise list
%% of authors' names for this purpose.
\renewcommand{\shortauthors}{Chodil and Ku\v{c}era}

%%
%% The abstract is a short summary of the work to be presented in the
%% article.
\begin{abstract}
  The general/finite PCTL satisfiability problem asks whether a given PCTL formula has a general/finite model. We show that the finite PCTL satisfiability problem is \emph{undecidable}, and the general PCTL satisfiability problem is even \emph{highly undecidable} (beyond the arithmetical hierarchy). Consequently, there are no   
  sound deductive systems proving all generally/finitely valid PCTL formulae.
  % The undecidability result holds even for formulae of the form $\varphi_1 \wedge \opg_{=1} \varphi_2$ where the validity of $\varphi_1,\varphi_2$ depends only on the states reachable in at most two transitions. Consequently, the problem of whether a given PCTL formula is valid in all finite-state Markov chains is not even semi-decidable. Hence, there is no complete deductive system proving all such formulae.
\end{abstract}

%%
%% The code below is generated by the tool at http://dl.acm.org/ccs.cfm.
%% Please copy and paste the code instead of the example below.
%%
\begin{CCSXML}
  <ccs2012>
     <concept>
         <concept_id>10003752.10003790.10003793</concept_id>
         <concept_desc>Theory of computation~Modal and temporal logics</concept_desc>
         <concept_significance>500</concept_significance>
     </concept>
   </ccs2012>
\end{CCSXML}
  
\ccsdesc[500]{Theory of computation~Modal and temporal logics}

%%
%% Keywords. The author(s) should pick words that accurately describe
%% the work being presented. Separate the keywords with commas.
\keywords{Probabilistic Temporal Logics, Satisfiability, PCTL}
%% A "teaser" image appears between the author and affiliation
%% information and the body of the document, and typically spans the
%% page.

\received{20 February 2007}
\received[revised]{12 March 2009}
\received[accepted]{5 June 2009}

%%
%% This command processes the author and affiliation and title
%% information and builds the first part of the formatted document.
\maketitle

\section{Introduction}
\label{sec-intro}

 Probabilistic CTL (PCTL) \cite{HJ:logic-time-probability-FAC} is a temporal logic interpreted over states in discrete Markov chains. PCTL is obtained from the standard CTL (Computational Tree Logic, see, e.g., \cite{Emerson:temp-logic-handbook}) by replacing the existential/universal path quantifiers with the probabilistic operator $P(\Phi) \bowtie r$. Here,  $\Phi$ is a path formula, $\bowtie$ is a comparison such as ${\geq}$ or ${<}$, and $r$ is a rational numerical constant. A~formula $P(\Phi) \bowtie r$ holds in a state $s$ if the probability of all runs initiated in~$s$ satisfying~$\Phi$ is $\bowtie$-bounded by $r$.  %The \emph{satisfiability problem for PCTL}, asking whether a given PCTL formula has a model, is a long-standing open question in probabilistic verification, resisting numerous research attempts.

Unlike CTL and other non-probabilistic temporal logics, PCTL does not have the small model property guaranteeing the existence of a bounded-size model for every satisfiable formula. In fact, one can easily construct satisfiable PCTL formulae without \emph{any} finite model (see, e.g., \cite{BFKK:satisfiability}). Hence, the PCTL satisfiability problem is studied in two basic variants: (1) \emph{finite satisfiability}, where we ask about the existence of a finite model, and (2) \emph{general satisfiability}, where we ask about the existence of an unrestricted model. 

At first glance, the finite satisfiability problem appears simpler. Let $\varphi$ be a PCTL formula. The existence of a model for $\varphi$ with a \emph{given} number of states is decidable by encoding the question into first-order theory of the reals (see, e.g., \cite{ChK:PCTL-quatitative-fragments-JCSS}). Hence, the finite satisfiability problem is at least semi-decidable. To prove its decidability, it suffices to establish \emph{some} computable upper bound on the number of states of a model for a finite-satisfiable formula. One is tempted to conjecture the existence of such a bound, because there is no apparent way how a finite-satisfiable PCTL formula $\varphi$ can ``enforce'' the existence of $F(|\varphi|)$ distinct states in a model of $\varphi$, where $F$ grows faster than every computable function (such as the Ackermann function). Despite numerous research attempts resulting in positive decidability results for various PCTL fragments (see Related work), the decidability of general/finite PCTL satisfiability has remained open for almost 30~years.

\textbf{Our Contribution.} In this paper, we show that the general and the finite PCTL satisfiability problems are \emph{undecidable}.

The undecidability result for finite PCTL satisfiability holds even for a simple PCTL fragment consisting of formulae of the form \mbox{$\varphi_1 \wedge \opg_{=1} \varphi_2$}, where $\varphi_1,\varphi_2$ contain only the path connectives $\opx$ and $\opf^2$ ($\opx \psi$ says that $\psi$ holds in the next state, and $\opf^2 \psi$ says that $\psi$ holds in a state reachable in at most two steps).  An immediate consequence is that the \emph{finite validity} problem for PCTL is not even semi-decidable. Hence, there is no sound\,\&\,complete  deductive system proving all finitely valid PCTL formulae.

For general PCTL satisfiability, we show that the problem is even \emph{highly} undecidable ($\Sigma_1^1$-hard). This result holds even for a PCTL fragment consisting of formulae of the form \mbox{$\varphi_1 \wedge \opg_{=1} \varphi_2 \wedge \opg_{=1} \opf_{=1} \varphi_3$}, where $\varphi_1,\varphi_2$ contain only the path connectives $\opx$ and $\opf^2$, and $\varphi_3$ is a Boolean combination of atomic propositions. This implies that the (general) validity problem for PCTL is also highly undecidable.

\textbf{Paper Organization.}
The results are obtained by constructing a formula $\Psi$ simulating a computation of a given non-deterministic two-counter Minsky machine. The construction of $\Psi$ is based on combining several novel techniques. To make the construction comprehensible, we explain these techniques gradually and proceed in four main steps.
\begin{itemize}
    \item[(1)] In Section~\ref{sec-geom}, we introduce \emph{characteristic vectors} as a way of representing counter values, and two transformations $\tau,\sigma$ representing the \emph{decrement} and \emph{increment} operations on the counter (due to the chosen encoding, testing the counter for zero is trivial). 
    \item[(2)] In Section~\ref{sec-large}, we show that there exists a \emph{fixed} PCTL formula enforcing arbitrarily large finite models just by changing numerical constants $x$ and $y$ in its subformulae  $\opx_{=x} a$ and $\opx_{=y} b$. This result is perhaps interesting on its own because it reveals a specific power of probability constraints. 
    Intuitively, the constants $x$ and $y$ encode a characteristic vector representing a counter value $n$ (by choosing appropriate $x$ and $y$, the $n$ can be arbitrarily large), and the formula implements the function $\tau$ decrementing the counter. This enforces the existence of states whose characteristic vectors represent the counter values ranging from $n$ to $0$, and such states must be pairwise different.
    \item[(3)] In Section~\ref{sec-counter}, we extend the result of~(2) by constructing a formula $\psi$ simulating the computation of a non-deterministic one-counter Minsky machine. The crucial new ingredient is the construction implementing the \emph{increment} function~$\sigma$.
    \item[(4)] Finally, in Section~\ref{sec-Minsky-simulation}, we construct a formula $\Psi$ simulating a non-deterministic two-counter Minsky machine~$\M$. Technically, $\M$ is first ``translated'' into a synchronized product of two non-deterministic one-counter Minsky machines $\M_1$ and $\M_2$. Then, we use the formulae $\psi_1$ and $\psi_2$ constructed for $\M_1$ and $\M_2$ by the method of Section~\ref{sec-counter} and ``merge'' them into the formula $\Psi$.
\end{itemize}
In each step, we re-use the results of the previous steps, possibly after some necessary modifications. This leads to substantial simplifications at the cost of frequent references to previous sections. We compensate for this inconvenience by suggestive notation, writing the re-used formulae consistently in \textit{\textbf{boldface}}. The exact semantics of this notation is explained at the beginning of Section~\ref{sec-counter} and Section~\ref{sec-Minsky-simulation}.

The presented constructions bring some additional consequences formulated in Section~\ref{sec-concl}.

% results do \emph{not} prove the undecidability of general satisfiability for PCTL. In fact, the formula $\Psi$ of item~(3) is \emph{always} satisfiable, although the model of $\Psi$ may require infinitely many states. This means that the finite satisfiability problem for PCTL is undecidable even for the subset of generally satisfiable PCTL formulae. Other consequences of our constructions are formulated in Section~\ref{sec-concl}.

\textbf{Related Work.}
The probabilistic extension of CTL (and also CTL$^*$) has been initially studied in its \emph{qualitative} form, where the range of admissible probability constraints is restricted to $\{{=}0, {>}0, {=}1, {<}1\}$ \cite{LS:time-chance-IC,HS:Prob-temp-logic,KL:qPCTL-satisfiability}. Both general and finite satisfiability for qualitative PCTL are shown decidable in these works. A precise complexity classification of general and finite satisfiability for qualitative PCTL, together with a construction of (a finite description of) a model, are given in \cite{BFKK:satisfiability}. In the same paper, it is also shown that both general and finite satisfiability are undecidable when the class of admissible models is restricted to Markov chains with a $k$-bounded branching degree, where $k \geq 2$ is an arbitrary constant (this technique is not applicable to general Markov chains). A variant of the bounded satisfiability problem, where transition probabilities are restricted to $\{\frac{1}{2},1\}$, is proven \NP-complete
in \cite{BFS:bounded-PCTL}. 

The decidability of finite satisfiability for various quantitative PCTL fragments (with general probability constrains) is established in \cite{KR:PCTL-unbounded,CHK:PCTL-simple,ChK:PCTL-quatitative-fragments-JCSS}. More concretely, in \cite{CHK:PCTL-simple}, it is shown that every formula $\varphi$ of the \emph{bounded fragment} of PCTL, where the validity of~$\varphi$ in a state $s$ depends only on a bounded prefix of a run initiated in~$s$, has a bounded-size tree model. In \cite{KR:PCTL-unbounded}, several PCTL fragments based on $F$ and $G$ operators are studied. For each of these fragments, it is shown that every finite satisfiable formula has a bounded-size model where every non-bottom SCC is a singleton. In \cite{ChK:PCTL-quatitative-fragments-JCSS}, a more abstract decidability result based on isolating the progress achieved along a chain of visited SCCs is presented. 

The \emph{model-checking} problem for PCTL has been studied both for finite Markov chains (see, e.g., \cite{BK:PCTL-fairness,BA:MDP-PCTL,HK:quantitative-mu-calculus-LICS,BK:book}) and for infinite Markov chains generated by probabilistic pushdown automata and their subclasses \cite{EKM:prob-PDA-PCTL-LMCS,BKS:pPDA-temporal,EY:RMC-LTL-complexity-TCL}. PCTL formulae have also been used as \emph{objectives} in Markov decision processes (MDPs) and stochastic games, where the players controlling non-deterministic states strive to \mbox{satisfy/falsify} a given PCTL formula. Positive decidability results exist for finite MDPs and qualitative PCTL formulae \cite{BFK:MDP-PECTL-objectives}. For general PCTL and finite MDPs, the problem becomes undecidable~\cite{BBFK:Games-PCTL-objectives}. %Let us note that the aforementioned undecidability results for the (finite) PCTL satisfiability problem in subclasses of Markov chains with bounded branching degree follow by utilizing proof techniques of~\cite{BBFK:Games-PCTL-objectives}. 

\section{Preliminaries}
\label{sec-prelim}

The sets of non-negative integers, rational numbers, and real numbers are denoted by $\N$, $\Q$, and $\R$, respectively. The intervals of real numbers are written in the standard way, e.g., $[0,1)$ is the set of all $r \in \R$ such that $0 \leq r < 1$.  %For a set $A$, we use $|A|$ to denote the cardinality of $A$.

We use $\vec{v},\vec{u},\vec{\kappa},\ldots$ to denote the elements of $\R \times \R$. The first and the second components of $\vec{v}$ are denoted by $\vec{v}_1$ and $\vec{v}_2$, respectively.

The $n$-fold composition $f \circ \cdots \circ f$ of a function $f : A \to A$ (where $A$ is some set) is denoted by $f^n$. 
%Since we do not use any polynomials in this paper, there is no risk of confusion.

\subsection{The Logic PCTL}

The logic PCTL \cite{HJ:logic-time-probability-FAC} is obtained from the standard CTL (Computational Tree Logic \cite{Emerson:temp-logic-handbook}) by replacing the existential and universal path quantifiers with the probabilistic operator $P(\Phi) \bowtie r$, where $\Phi$ is a path formula, $\bowtie$ is a comparison, and $r \in [0,1]$ is a rational constant. 

\begin{definition}[PCTL]
\label{def-pctl}
   Let $\AP$ be a set of atomic propositions. The syntax of PCTL state and path formulae is defined by the following abstract syntax equations:
   \[
   \begin{array}{lcl}
      \varphi & ~~::=~~ & a \mid \neg \vp \mid \vp_1 \wedge \vp_2 \mid P(\Phi) \bowtie r\\
      \Phi & ::= &\opx \vp \mid \vp_1 \opu \varphi_2 \mid \vp_1 \opu^{k} \varphi_2
   \end{array} 
   \]    
   Here, $a \in \AP$, ${\bowtie} \in \{{\geq}, {>}, {\leq},{<},{=}\}$, $r \in [0,1]$ is a rational constant, and $k \in \N$.
\end{definition}
The formulae $\textit{true},\textit{false}$ and the other Boolean connectives are defined using $\neg$ and $\wedge$ in the standard way. We also use $\opf \vp$ and $\opf^k \vp$ to abbreviate the formulae $\textit{true} \opu \vp$ and $\textit{true} \opu^k \vp$, respectively. Furthermore, we often abbreviate a formula of the form $P(\Phi) \bowtie r$ by omitting $P$ and adjoining the probability constraint directly to the topmost path operator of $\Phi$. For example, we write $\opx_{{=}1} \varphi$ instead of $P(\opx\varphi) = 1$. We also write $\opg_{=1} \vp$ instead of $\opf_{=0} \neg\vp$.

PCTL formulae are interpreted over Markov chains where every state $s$ is assigned a subset $v(s) \subseteq \AP$ of propositions valid in~$s$.
  
\begin{definition}[Markov chain]
    A Markov chain is a triple $M = (S,P,v)$, where $S$ is a finite or countably infinite set of \emph{states}, \mbox{$P \colon S \times S \rightarrow [0,1]$} is a function such that $\sum_{t \in S} P(s,t)=1$ for every $s \in S$, and $v
    \colon S \rightarrow 2^{\AP}$ is a \emph{valuation}. We say that $M$ is finite if $S$ is a finite set.
\end{definition}

For $s,t \in S$, we say that $t$ is an \emph{immediate successor} of $s$ if \mbox{$P(s,t) > 0$}.
A \emph{path} in $M$ is a finite sequence $w = s_0, \ldots ,s_n$ of states where $n \geq 0$ and 
\mbox{$P(s_i,s_{i+1}) > 0$} for all $i <n$. We say that $t$ is \emph{reachable} from $s$ if there is a path where the first and the last state is~$s$ and~$t$, respectively.

A \emph{run} in $M$ is an infinite sequence $\pi = s_0, s_1, \ldots$ of states such that every finite prefix of $\pi$ is a path in $M$. We also use $\pi(i)$ to denote the state $s_i$ of~$\pi$. 

% We say that $M$ is \emph{tree-like} if its underlying graph contains only trivial cycles, i.e., for every $s \in S$ and every path $s_0,\ldots,s_n$ such that $s = s_0 = s_n$ we have that $n\leq 1$. The \emph{branching degree} of $M$ is the least $d$ such that every $s \in S$ has at most $d$ immediate successors.

%A \emph{strongly connected component (SCC)} of $M$ is a maximal $U \subseteq S$ such that, for all $s,t \in U$, there is a path from $s$ to $t$. Given two SCCs $U$ and $V$, we say that $V$ is a \emph{successor/predecessor} of $U$ if $U \neq V$ and there exists a path from/to a state of $U$ to/from a state of $V$. A \emph{bottom SCC (BSCC)} is a SCC without successors, and a \emph{top SCC} is a SCC without predecessors. The successor relation is a strict partial order and determines the standard directed acyclic graph (DAG) of SCCs.

For every path $w = s_0, \ldots ,s_n$, let $\run(w)$ be the set of all runs starting with~$w$, and let $\m(\run(w)) = \prod_{i=0}^{n-1} P(s_i,s_{i+1})$. To every state $s$, we associate the probability space $(\run(s),\F_{s},\m_{s})$, where $\F_{s}$ is the \mbox{$\sigma$-field} generated by all $\run(w)$ where $w$ starts in $s$, and $\m_{s}$ is the unique probability measure obtained by extending $\m$ in the standard way (see, e.g., \cite{Billingsley:book}). 

The \emph{validity} of a PCTL state/path formula for a given state/run of $M$ is defined inductively as follows:
\[
\begin{array}{lcl}
  s \models a & \mbox{ ~~iff~~ } & a \in v(s),\\
  s \models \neg\varphi & \mbox{ ~~iff~~ } & s \not\models \varphi,\\
  s \models \vp_1 \wedge \vp_2 &  \mbox{ iff } & s \models \vp_1 \mbox{ and } s \models \vp_2,\\
  s \models P(\Phi) \bowtie r  &  \mbox{ iff } &  \m_{s}(\{ \pi \in \run(s) \mid \pi \models \Phi \}) \bowtie r,\\[1ex]
  \pi \models \opx \vp  &  \mbox{ iff } & \pi(1) \models \vp \mbox{ for some } i \in \N,\\
  \pi \models \vp_1 \opu \varphi_2 &  \mbox{ iff } & \mbox{there is } j\geq 0 \mbox{ such that }
\pi(j) \models \vp_2\\
  &&  \mbox{and }  \pi(i) \models \vp_1 \mbox{ for all } 0\leq i < j,\\
  \pi \models \vp_1 \opu^k \varphi_2 &  \mbox{ iff } & \mbox{there is } 0 \leq j\leq k \mbox{ such that } \pi(j) \models \vp_2\\
    &&  \mbox{and }  \pi(i) \models \vp_1 \mbox{ for all } 0\leq i < j.\\
\end{array} 
\]

We say that $M$ is a \emph{model} of $\varphi$ if $s \models \varphi$ for some state $s$ of $M$. 
The \emph{general/finite PCTL satisfiability problem} is the question of whether a given PCTL formula has a general/finite model.

\subsection{Parameterized PCTL Formulae}

A \emph{parameterized PCTL formula} is a PCTL formula where some probability constraints are replaced with \emph{parameters} ranging over rationals in $[0,1]$. For example, $\xi(x) \equiv \opf_{\geq 0.6} a \wedge \opg_{= x} \neg a$ is a parameterized PCTL formula with one parameter $x$. For a parameterized PCTL formula $\varphi(x_1,\ldots,x_k)$ and rational constants $p_1,\ldots,p_k$ in the interval $[0,1]$, we use $\varphi[p_1,\ldots,p_k]$ to denote the PCTL formula obtained from $\varphi(x_1,\ldots,x_k)$ by substituting every $x_i$ with $p_i$ (we say that $\varphi[p_1,\ldots,p_k]$  is an \emph{instance} of $\varphi(x_1,\ldots,x_k)$). For example, $\xi[0.1]$ is the formula  $\opf_{\geq 0.6} a \wedge \opg_{= 0.1} \neg a$.

\subsection{Minsky Machines}
\label{sec-Minsky}

A \emph{non-deterministic Minsky machine $\M$ with $k \geq 1$ counters} is a finite program
\[
    1: \Ins_1; \ \cdots  \ m: \Ins_m;%\ \ m{+}1: \Hlt; 
\]
where $m \geq 1$ and every $i: \Ins_i$ is a labeled instruction of one of the following types:
\begin{itemize}
  \item[I.] $i: \textit{inc } c_j; \textit{ goto } L;$ 
  \item[II.] $i: \textit{if } c_j{=}0 \textit{ then goto } L \textit{ else dec } c_j; \textit{ goto } L'$
\end{itemize}
Here, $j \in \{1,\ldots,k\}$ is a counter index and $L,L' \subseteq \{1,\ldots,m\}$
are sets of labels with one or two elements. We say that $\M$ is \emph{deterministic} if all $L,L'$ occurring in the instructions of $\M$ are singletons\footnote{Our definition of non-deterministic Minsky machines is equivalent to the standard one where the target sets of labels are singletons, and there is also a Type~III instruction of the form $i: \textit{goto } u \textit{ or } u'$. For purposes of this paper, the adopted definition is more convenient.}.

A \emph{configuration} of $\M$ is a tuple $(i,n_1,\ldots,n_k)$ of non-negative integers where $1 \leq i \leq m$ represents the current control position and $n_1,\ldots,n_k$ represent the current counter values. A configuration $(i',n_1',\ldots,n_k')$ is a \emph{successor} of a configuration  $(i,n_1,\ldots,n_k)$, written $(i,n_1,\ldots,n_k) \mapsto (i',n_1',\ldots,n_k')$, if the tuple  $(n_1',\ldots,n_k')$ is obtained from $(n_1,\ldots,n_k)$ by performing $\Ins_i$, and $i'$ is an element of the corresponding $L$ (or $L'$) in $\Ins_i$. Note that every configuration has either one or two successor(s). A \emph{computation} of $\M$ is an infinite sequence of configurations $\omega \equiv C_0,C_1,\ldots$ such that $C_0 = (1,0,\ldots,0)$ and $C_i \mapsto C_{i+1}$ for all $i \in \N$. We say that $\omega$ is \emph{periodic} if there are $i,j \in \N$ such that $i < j$ and the infinite sequences $C_i,C_{i+1},\ldots$ and $C_j,C_{j+1},\ldots$ are the same.

Now, we recall the standard undecidability results for Minsky machines. The symbols $\Sigma_1^0$ and $\Sigma_1^1$ denote the corresponding levels in the arithmetical and the analytical hierarchies, respectively.
\smallskip

% (1) The \emph{label-reachability problem} for deterministic two-counter Minsky machines is undecidable and $\Sigma_1^0$-complete. Here, the question is whether the unique computation of a given $\M$ contains a configuration of the form $(j,n_1,n_2)$ for a given label~$j$. This problem is equivalent to the standard \emph{halting} problem for deterministic two-counter Minsky machines, see \cite{Minsky:book}.
% \smallskip

(1) The \emph{boundedness problem} for a given deterministic two-counter Minsky machine $\M$ is undecidable and $\Sigma_1^0$-complete \cite{KSCh:Minsky-boundedness-PCS}. Here, $\M$ is bounded if the unique computation $\omega$ contains only finitely many pairwise different configurations (i.e., $\omega$ is periodic).
\smallskip

(2) The \emph{recurrent reachability problem} for a given non-deterministic \mbox{two-counter} Minsky machine $\M$ is highly undecidable a $\Sigma_1^1$-complete \cite{Harel:Infinite-trees-JACM}. Here, the question is whether there exists a \emph{recurrent} computation $\omega$ of~$\M$ such that the instruction $\Ins_1$ is executed infinitely often along~$\omega$.

\section{Representing a Counter}
\label{sec-geom}

In this section, we introduce several ``geometrical'' concepts underpinning our results. Furthermore, we show how to represent a non-negative counter value by a pair of quantities, and we design functions modeling the decrement/increment operation on the counter. Missing proofs are in the Appendix. 

Let us fix a rational constant $q$ such that $\frac{3}{4} < q < 1$ and $\sqrt{4q -3}$ is rational. For example, we can put $q = \frac{13}{16}$. Furthermore, we define
\[
I_q = \left(\frac{1-\sqrt{4q-3}}{2},  \frac{1+\sqrt{4q-3}}{2}\right)
\]
By our choice of $q$, we immediately obtain that $I_q \subseteq (0,1)$.
Finally, we fix $\vec{\kappa} \in (0,1)^2$ with rational components such that $\vec{\kappa}_1 \in I_q$ and $\vec{\kappa}_1 + \vec{\kappa}_2 \leq 1$. 

The rational constants $q$, $\vec{\kappa}_1$, and $\vec{\kappa}_2$ are used as probability constraints in the PCTL formulae constructed in the next sections. The defining properties of $q$ and $I_q$ are explained in Lemma~\ref{lem-tausigma}.

\begin{definition}[characteristic vector, $\As{t}$, $\Bs{t}$, and $\Cs{t}$ sets]
    Let $t$ be a state of a Markov chain with transition function~$P$. Let $\As{t}$, $\Bs{t}$, and $\Cs{t}$ be the sets of all immediate successors of $t$ satisfying the atomic propositions $a$, $b$, and $c$, respectively.

    The \emph{characteristic vector} of $t$ is the vector $\vec{v}[t] \in [0,1]^2$ where $\vec{v}[t]_1 = \sum_{u \in \As{t}} P(t,u)$ and $\vec{v}[t]_2 = \sum_{u \in \Bs{t}} P(t,u)$. 
\end{definition}

Observe that $\vec{v}[t]_1$ and $\vec{v}[t]_2$ is the probability of satisfying the path formula $\opx a$ and $\opx b$ in $t$, respectively.
Intuitively, we use characteristic vectors to encode non-negative integers, where $\vec{\kappa}$ represents zero, and the decrement/increment operations correspond to performing the functions $\tau$/$\sigma$ introduced in our next definition.   

\begin{definition}[$\tau$, $\sigma$, and $W$]
\label{def-W-t-s}
Let $W = I_q \times [0,\infty)$. Furthermore, let $\tau,\sigma : W \to \R^2$ be functions\footnote{$\tau$ and $\sigma$ are not arbitrary; they must satisfy several properties simultaneously to enable the presented constructions. There is no trivial intuition behind their design.} defined as follows: 
\begin{itemize}
    \item $\tau(\vec{v}) = \big( (q{-}1{+}\vec{v}_1)/\vec{v}_1,\, \vec{v}_2/\vec{v}_1 \big)$ 
    \item $\sigma(\vec{v}) = \big( (1{-}q)/(1{-}\vec{v}_1), (\vec{v}_2(1{-}q))/(1{-}\vec{v}_1)  \big)$.
\end{itemize}  
\end{definition}

The \emph{slope} of a line or a line segment in $\R^2$ is defined in the standard way. For all $\vec{v},\vec{u} \in \R^2$ where $\vec{v}_1 \neq \vec{u}_1$, we use $\slope(\vec{v},\vec{u})$ to denote the slope of the line containing $\vec{v},\vec{u}$.

In the next lemma, we use the defining properties of $q$ and $I_q$ (this explains their purpose).

\begin{restatable}{lemma}{tausigma}
\label{lem-tausigma}
For every $\vec{v} \in W$, we have the following: 
\begin{itemize}
    \item[(a)] $\tau(\vec{v}),\sigma(\vec{v}) \in W$;
    \item[(b)] $\tau(\vec{v})_1 > \vec{v}_1$ and $\tau(\vec{v})_2 \geq \vec{v}_2$; if $\vec{v}_2 > 0$, then $\tau(\vec{v})_2 > \vec{v}_2$; 
    \item[(c)] let $\vec{u} = (\vec{v}_1,0)$; then $\slope(\vec{u},\tau(\vec{v})) = \slope(\tau(\vec{v}),\tau^2(\vec{v}))$;
    \item[(d)] let $\vec{u} = (\vec{v}_1,y)$ where $0 \leq y <  \vec{v}_2$. Then $\slope(\vec{u},\tau(\vec{u})) < \slope(\vec{v},\tau(\vec{v}))$;
%    \item $\slope(\vec{v},\tau(\vec{v})) < \slope(\tau(\vec{v}),\tau^2(\vec{v}))$;
    \item[(e)] $\sigma(\tau(\vec{v})) = \tau(\sigma(\vec{v})) = \vec{v}$.
\end{itemize}
\end{restatable}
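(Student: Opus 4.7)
The plan is to set $p = 1-q$, denote the endpoints of $I_q$ by $\alpha = (1-\sqrt{4q-3})/2$ and $\beta = (1+\sqrt{4q-3})/2$, and observe the key algebraic fact: $\alpha+\beta = 1$ and $\alpha\beta = p$, so $\alpha,\beta$ are the roots of $t^2 - t + p = 0$. Hence $x \in I_q$ is equivalent to $x^2 - x + p < 0$, or equivalently $D(x) := x - p - x^2 > 0$ and $x(1-x) > p$. The constraint $q > 3/4$ is exactly what makes the discriminant $4q-3$ positive so that $I_q$ is a nonempty real interval.

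For \textbf{(a)}, write $\tau(\vec{v})_1 = 1 - p/x$ and show $\alpha < 1 - p/x < \beta$ by rewriting each inequality using $p = \alpha\beta$ and $\alpha, \beta < 1$; the two resulting inequalities reduce to $x > \alpha$ and $x < \beta$, which hold by assumption. The argument for $\sigma(\vec{v})_1 = p/(1-x) \in I_q$ is symmetric (using the substitution $x \mapsto 1-x$, which maps $I_q$ to itself since $\alpha + \beta = 1$). The second components are nonnegative as $\vec{v}_2 \geq 0$ and the denominators are positive. For \textbf{(b)}, $\tau(\vec{v})_1 - \vec{v}_1 = (D(x))/x > 0$ directly from $x \in I_q$, and $\tau(\vec{v})_2 - \vec{v}_2 = \vec{v}_2(1-x)/x \geq 0$, with strictness when $\vec{v}_2 > 0$.

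The main work is \textbf{(c)}. Writing $\vec{v} = (x,y)$ and $x' = (x-p)/x$, I will compute both slopes explicitly:
\[
\slope(\vec{u}, \tau(\vec{v})) \;=\; \frac{y/x}{D(x)/x} \;=\; \frac{y}{D(x)},
\qquad
\slope(\tau(\vec{v}), \tau^2(\vec{v})) \;=\; \frac{y(1-x')/(x x')}{D(x')/x'} \;=\; \frac{y(1-x')}{x\, D(x')}.
\]
Equality of these then amounts to the identity $x\, D(x') = (1-x')\, D(x)$. The key calculation is $D(x') = p\, D(x)/x^2$, obtained by expanding $D((x-p)/x)$ over the common denominator $x^2$ and recognizing the numerator as $p(-x^2 + x - p) = p\,D(x)$. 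Combined with $1-x' = p/x$, both sides of the target identity equal $p\, D(x)/x$. This is the only nonobvious step; everything else is bookkeeping.

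For \textbf{(d)}, the same slope computation gives $\slope(\vec{u},\tau(\vec{u})) = y(1-x)/D(x)$ and $\slope(\vec{v}, \tau(\vec{v})) = \vec{v}_2(1-x)/D(x)$; the factor $(1-x)/D(x)$ is strictly positive on $I_q$ (note that $y < \vec{v}_2$ forces $\vec{v}_2 > 0$), so the strict inequality follows from $y < \vec{v}_2$. For \textbf{(e)}, a direct substitution suffices: using $1 - x' = p/x$, one gets $\sigma(\tau(\vec{v}))_1 = p/(p/x) = x$ and $\sigma(\tau(\vec{v}))_2 = (y/x)\cdot(p/(p/x)) = y$; the reverse identity $\tau(\sigma(\vec{v})) = \vec{v}$ is entirely analogous. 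The only real obstacle, as noted, is the slope identity in (c), which hinges on the clean factorization $D(x') = p D(x)/x^2$.
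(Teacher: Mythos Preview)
Your proof is correct and follows essentially the same approach as the paper's: a direct computational verification of each item, with the same slope expressions $y/D(x)$ and $y(1-x)/D(x)$ (the paper writes $D(x)$ as $q-1+\vec{v}_1(1-\vec{v}_1)$). Your use of Vieta's relations $\alpha+\beta=1$, $\alpha\beta=p$ for item~(a) and the isolated identity $D(x')=pD(x)/x^2$ for item~(c) are cosmetic repackagings of the paper's explicit rationalizations and expansions, not a different method.
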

% An immediate consequence of Lemma~\ref{lem-tausigma}~(c) is that $\slope(\vec{v},\tau(\vec{v})) < \slope(\tau(\vec{v}),\tau^2(\vec{v}))$ for every $\vec{v} \in W$ where $\vec{v}_2 > 0$.

% Our proof of Theorem~\ref{thm-paramaterized} is based on observing that for every transient state $t$, the formula $\Interval \wedge \Eq_i$ guarantees $\vec{v}[u] = \tau(\vec{v}[t])$ for 
% every $u \in \As{t}$ (see Theorem~\ref{thm-area}). 
%This implies that every finite model of $\psi[\vec{\alpha}_1,\vec{\alpha}_2]$, where $\vec{\alpha} = \sigma^n(\vec{\kappa})$, inevitably contains a transient state $t_i$ satisfying $\vec{v}[t_i] = \sigma^i(\vec{\kappa})$ for every $i \leq n$. Since $\sigma$ is injective, these states must be pairwise different. 
%To prove this observation, we need several ``geometric'' notions.

For every $\vec{u} \in W$, let $L(\vec{u})$ be the line segment between the points $\vec{u}$ and $\tau(\vec{u})$, including $\vec{u}$ and excluding $\tau(\vec{u})$, i.e., 
\[
    L(\vec{u}) = \{\vec{w} \in \R^2 \mid \vec{w} = \lambda \vec{u} + (1{-}\lambda)\tau(\vec{u}) \mbox{ for some } \lambda \in (0,1] \}.
\]

\begin{figure}[t]\centering
    \begin{tikzpicture}[x=1.5cm, y=1.2cm,font=\small]   
    \path[thick,-stealth, at start] 
        (-0.2,0) edge  (5.5,0)
        (0,-0.2) edge node[below] {$0$} (0,6);
        \path[thick,-, at start]
            (5,-.1)  edge node[below] {$1$} (5,0.1);
        \path[thick, dotted, at start] 
            (.8,-0.2) edge node[below] {$\frac{1-\sqrt{4q-3}}{2}$} (0.8,6)
            (4.2,-0.2) edge node[below] {$\frac{1+\sqrt{4q-3}}{2}$} (4.2,6);  
    \tikzset{ver/.style={draw, circle, inner sep=0pt, minimum size=.15cm, fill=black}}
    \coordinate (a) at (1.4,.1);
    \coordinate (b) at (1.9,.25);
    \coordinate (b0) at (1.9,0);
    \coordinate (c)  at (2.5,.7);
    \coordinate (c0) at (2.5,0);
    \coordinate (d) at (3.1,1.4);
    \coordinate (d0) at (3.1,0);
    \coordinate (e) at (3.65,2.8);
    \coordinate (e0) at (3.65,0);
    \coordinate (f) at (3.95,4.3);
    \coordinate (s) at (.8,0);
    \coordinate (t) at (4.1,6);
    \coordinate (tt) at (0.8,6);
    \draw [fill=gray!20,draw opacity=0] plot coordinates {(s) (a) (b) (c) (d) (e) (f) (t) (tt)};
%    \node[ver, label={below, xshift=6mm, yshift=4mm: $\sigma^2(\vec{v})$}] (A) at (a) {};
    \draw[thick, dotted] (b) -- (b0) -- (c)-- (c0) -- (d)-- (d0) -- (e)-- (e0); 
    \node[ver] (A) at (a) {};
    \node[ver, label={above, xshift=-3mm, yshift=-1.5mm: $\sigma(\vec{v})$}] (B) at (b) {};
    \node[ver, label={above, xshift=-2mm, yshift=-1mm: $\vec{v}$}] (C) at (c) {};
    \node[ver, label={above, xshift=-3mm, yshift=-1mm: $\tau(\vec{v})$}] (D) at (d) {};
    \node[ver, label={above, xshift=-4mm, yshift=-1mm: $\tau^2(\vec{v})$}] (E) at (e) {};
    \draw[thick] (a) -- (b) -- (c) -- (d) -- (e);
    \draw[thick,dashed] (s) -- (a);
    \draw[thick,dashed] (e) -- (f) -- (t);
    \node (T) at (2,5) {$\Area(\vec{v})$};
%    \node (W) at (2.5,.7) {$W$};
\end{tikzpicture}
\caption{$\Points(\vec{v})$, $\Lines(\vec{v})$, and $\Area(\vec{v})$.}
\label{fig-area}
\end{figure}

We use $\Line(\vec{u})$ to denote the line obtained by prolonging the line segment $L(\vec{u})$, and $H(\vec{u})$ to denote the closed half-space above $\Line(\vec{u})$, i.e.,
\[
    H(\vec{u}) = \{\vec{u}+\vec{\alpha} \mid \vec{\alpha} \in \R^2, (\tau(\vec{u}_2){-}\vec{u}_2,\vec{u}_1{-}\tau(\vec{u})_1)\cdot \vec{\alpha} \leq 0\}.
\]
\begin{definition}
   For every $\vec{v} \in W$ where $\vec{v}_2 > 0$, let 
   \begin{eqnarray*}
    \Points(\vec{v}) & = & \{\tau^k(\vec{v}), \sigma^k(\vec{v}) \mid k \in \N \},\\[1ex]
    \Lines(\vec{v}) & = & \bigcup_{\vec{u}\in \Points(\vec{v})} L(\vec{u}),\\[1ex]
    \Area(\vec{v}) & = & W \cap \bigcap_{\vec{u} \in \Points(\vec{v})} H(\vec{u}).
   \end{eqnarray*}
\end{definition}

\noindent
The structure of $\Points(\vec{v})$, $\Lines(\vec{v})$, and $\Area(\vec{v})$ is shown in Fig.~\ref{fig-area}, where the dotted lines illustrate the property of Lemma~\ref{lem-tausigma}~(c).

%Note that $\lim_{n \to \infty}\sigma^{n}(\vec{v}) = ((1{-}\sqrt{4q{-}3})/2,0)$  for every $\vec{v} \in W$. 

Now, we present a sequence of technical observations culminating with (crucial) Theorem~\ref{thm-area}.

A convex combination of vectors $\vec{u^1},\vec{u^2},\ldots$ is \emph{positive} if all coefficients used in the combination are positive. The next lemma is a trivial corollary to Lemma~\ref{lem-tausigma} (see Fig.~\ref{fig-area}).

\begin{restatable}{lemma}{gproperties}
\label{lem-gproperties}
For every $\vec{v} \in W$ where $\vec{v}_2 > 0$ and every $\vec{u} \in \Points(\vec{v})$, we have the following:
\begin{itemize}
    \item[(a)] If $\vec{u}$ is a positive convex combination of $\vec{u^1},\vec{u^2},\ldots$ where $\vec{u^i} \in \Area(\vec{v})$ for all $i \in \N$, then $\vec{u^i} = \vec{u}$ for all $i \in \N$.
    \item[(b)] If $\vec{w} \in L(\vec{u})$ is a positive convex combination of 
    $\vec{u^1},\vec{u^2},\ldots$ where $\vec{u^i} \in \Area(\vec{v})$ for all $i \in \N$, then $\vec{u^i} \in L(\vec{u}) \cup \{\tau(\vec{u})\}$ for all $i \in \N$.
    % \item[C.] For every $\vec{w} \in W$ such that $\vec{w}_1 =\vec{v}_1$ and $\vec{w}_2 < \vec{v}_2$ we have that $\Area(\vec{v}) \subset \Area(\vec{w})$.
\end{itemize} 
\end{restatable}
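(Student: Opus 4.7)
The key structural fact I will exploit is that $\Area(\vec{v})$ is the intersection of $W$ with the (convex) upper half-planes $H(\vec{u'})$ for $\vec{u'}\in \Points(\vec{v})$, and that by Lemma~\ref{lem-tausigma}(d) the slopes of consecutive boundary lines $\Line(\sigma(\vec{u'})),\Line(\vec{u'}),\Line(\tau(\vec{u'})),\ldots$ are strictly increasing. For any $\vec{u}\in \Points(\vec{v})$, since $\Points(\vec{v}) = \{\tau^k(\vec{v}),\sigma^k(\vec{v})\mid k\in\N\}$ we have $\sigma(\vec{u}),\tau(\vec{u})\in \Points(\vec{v})$ as well, so $\Line(\sigma(\vec{u}))$, $\Line(\vec{u})$, and $\Line(\tau(\vec{u}))$ are all boundary lines of $\Area(\vec{v})$. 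Moreover, by Lemma~\ref{lem-tausigma}(e) we have $\tau(\sigma(\vec{u}))=\vec{u}$, so $\vec{u}$ lies on both $\Line(\sigma(\vec{u}))$ and $\Line(\vec{u})$, while $\tau(\vec{u})$ lies on both $\Line(\vec{u})$ and $\Line(\tau(\vec{u}))$. The strict monotonicity of slopes then implies that $\vec{u}$ is the unique intersection of $\Line(\sigma(\vec{u}))$ with $\Line(\vec{u})$, and $\tau(\vec{u})$ is the unique intersection of $\Line(\vec{u})$ with $\Line(\tau(\vec{u}))$.

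\textbf{Part (a).} Suppose $\vec{u}=\sum_i\lambda_i \vec{u^i}$ is a positive convex combination with $\vec{u^i}\in \Area(\vec{v})\subseteq H(\vec{u})\cap H(\sigma(\vec{u}))$. Since $\vec{u}$ lies on the boundary $\Line(\vec{u})$ of the closed half-plane $H(\vec{u})$, and every $\vec{u^i}$ lies in $H(\vec{u})$, the standard supporting-hyperplane argument (each strict-inequality summand would push the convex combination strictly into the interior) forces $\vec{u^i}\in\Line(\vec{u})$ for every $i$. The same argument applied to $H(\sigma(\vec{u}))$ gives $\vec{u^i}\in\Line(\sigma(\vec{u}))$. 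Because the two lines intersect only at $\vec{u}$, we conclude $\vec{u^i}=\vec{u}$ for all $i$.

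\textbf{Part (b).} Let $\vec{w}\in L(\vec{u})$ be a positive convex combination $\sum_i\lambda_i\vec{u^i}$ with $\vec{u^i}\in \Area(\vec{v})$. Since $\vec{w}\in\Line(\vec{u})$, the same supporting-hyperplane argument as above yields $\vec{u^i}\in\Line(\vec{u})$ for every $i$. It remains to show $\vec{u}_1\leq \vec{u^i}_1\leq \tau(\vec{u})_1$, since then $\vec{u^i}$ lies on the closed segment from $\vec{u}$ to $\tau(\vec{u})$, i.e.\ $\vec{u^i}\in L(\vec{u})\cup\{\tau(\vec{u})\}$. For the lower bound: $\vec{u^i}\in H(\sigma(\vec{u}))$; the two lines $\Line(\sigma(\vec{u}))$ and $\Line(\vec{u})$ meet only at $\vec{u}$ and $\Line(\vec{u})$ has the strictly larger slope (Lemma~\ref{lem-tausigma}(d)), so the part of $\Line(\vec{u})$ lying in $H(\sigma(\vec{u}))$ is exactly $\{\vec{z}\in\Line(\vec{u})\mid \vec{z}_1\geq \vec{u}_1\}$. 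For the upper bound: $\vec{u^i}\in H(\tau(\vec{u}))$; analogously, $\Line(\tau(\vec{u}))$ has strictly larger slope than $\Line(\vec{u})$ and the two meet at $\tau(\vec{u})$, so the part of $\Line(\vec{u})$ lying in $H(\tau(\vec{u}))$ is $\{\vec{z}\in\Line(\vec{u})\mid \vec{z}_1\leq \tau(\vec{u})_1\}$. Combining the two bounds finishes the proof.

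\textbf{Main obstacle.} The only genuinely delicate step is turning the slope inequality of Lemma~\ref{lem-tausigma}(d) into the correct ``which side of the intersection'' conclusion for the half-planes. Since both $\Line(\sigma(\vec{u}))$ and $\Line(\vec{u})$ are graphs of non-negative-slope affine functions sharing the point $\vec{u}$, and $\Line(\vec{u})$ is steeper, the function $\Line(\vec{u})-\Line(\sigma(\vec{u}))$ is non-negative exactly for first coordinates $\geq\vec{u}_1$, which is what ``above $\Line(\sigma(\vec{u}))$'' requires on points of $\Line(\vec{u})$; the same reasoning handles the pair $\Line(\vec{u}),\Line(\tau(\vec{u}))$ at $\tau(\vec{u})$. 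Having Lemma~\ref{lem-tausigma}(c)--(d) available makes these geometric facts essentially immediate, so no further technical ingredient is needed beyond the supporting-hyperplane observation for positive convex combinations (which extends to countable sums by considering any finite partial sum and its complement).
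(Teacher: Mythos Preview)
Your proof is correct and follows the same underlying idea as the paper's: both recognise that $\{\vec{u}\}$ and the closed segment $L(\vec{u})\cup\{\tau(\vec{u})\}$ are faces (an extreme point and an edge) of the convex set $\Area(\vec{v})$. The paper states this in one line and cites a convexity textbook for the defining property of a face; you instead spell the face property out explicitly by intersecting with the two adjacent supporting half-planes $H(\sigma(\vec{u}))$ and $H(\tau(\vec{u}))$, which makes the argument self-contained at the cost of a few more lines.

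One small correction: the strict monotonicity of consecutive slopes, $\slope(L(\sigma(\vec{u})))<\slope(L(\vec{u}))<\slope(L(\tau(\vec{u})))$, is not what Lemma~\ref{lem-tausigma}(d) states---(d) compares $L$-slopes at two points with the \emph{same} first coordinate. The inequality you actually need follows from part~(c): setting $\vec{v'}=\sigma(\vec{u})$, part~(c) gives $\slope\big((\vec{v'}_1,0),\tau(\vec{v'})\big)=\slope(L(\tau(\vec{v'})))=\slope(L(\vec{u}))$, and since $\vec{v'}_2>0$ and $\tau(\vec{v'})_1>\vec{v'}_1$ (part~(b)), the slope from $\vec{v'}$ to $\tau(\vec{v'})$ is strictly smaller than from $(\vec{v'}_1,0)$ to $\tau(\vec{v'})$. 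With this citation adjusted, your argument goes through unchanged.
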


% \begin{restatable}{lemma}{lemmamu}
% \label{lem-mu}
% For every $\vec{u} \in W \smallsetminus \Area(\vec{\kappa})$, there exists $\vec{\mu[\vec{u}]} \in W$ such that $\vec{\mu[\vec{u}]}_1 = \vec{\kappa}_1$, $\vec{\mu[\vec{u}]}_2 < \vec{\kappa}_2$, and $\vec{u} \in \Lines(\vec{\mu[\vec{u}]})$.
% \end{restatable}
    
\begin{restatable}{lemma}{tauconvex}
\label{lem-tau-convex}
Let $\vec{w} \in W$ and $\vec{u} \in L(\vec{w})$. Then $\tau(\vec{u}) \in L(\tau(\vec{w}))$.
\end{restatable}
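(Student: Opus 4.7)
The plan is to parametrize the segment explicitly and show that $\tau$ maps it onto the correct line, then use monotonicity to pin down the correct sub-segment.

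First I would write $\vec{u} = \lambda\vec{w} + (1-\lambda)\tau(\vec{w})$ for some $\lambda \in (0,1]$ and compute the two components of $\vec{u}$ in closed form, namely
\[
u_1 \;=\; (\lambda w_1 + 1 - \lambda) + (1-\lambda)(q-1)/w_1,\qquad u_2 \;=\; w_2(\lambda w_1 + 1 - \lambda)/w_1.
\]
The goal is then to produce a $\mu \in (0,1]$ with $\tau(\vec{u}) = \mu\tau(\vec{w}) + (1-\mu)\tau^2(\vec{w})$, and I would split this into two sub-tasks: (i) $\tau(\vec{u})$ lies on the line $\Line(\tau(\vec{w}))$, and (ii) it lies on the correct half-open segment.

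For (i) I would compute directly
\[
\tau(\vec{u})_1 - \tau(\vec{w})_1 \;=\; \frac{(1-q)(u_1 - w_1)}{u_1 w_1},\qquad
\tau(\vec{u})_2 - \tau(\vec{w})_2 \;=\; \frac{u_2 w_1 - u_1 w_2}{u_1 w_1},
\]
using $\tau(\vec{v})_1 = 1 - (1-q)/\vec{v}_1$ for the first identity. A short calculation gives $u_2 w_1 - u_1 w_2 = w_2(1-\lambda)(1-q)/w_1$ and $u_1 - w_1 = (1-\lambda)(q - 1 + w_1 - w_1^2)/w_1$. Substituting, both differences acquire the common factor $(1-\lambda)(1-q)/(w_1^2 u_1)$, so their ratio equals $w_2/(q-1+w_1-w_1^2)$, a constant depending only on $\vec{w}$. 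This shows $\tau(\vec{u})$ lies on a fixed line through $\tau(\vec{w})$, and taking the boundary case $\lambda = 0$ (where $\vec{u} = \tau(\vec{w})$ and $\tau(\vec{u}) = \tau^2(\vec{w})$) identifies this line as $\Line(\tau(\vec{w}))$. The degenerate case $w_2 = 0$ (so the whole segment sits on the $x$-axis and is mapped to the $x$-axis) is handled by the same formula, the ratio being $0/(\cdot)$.

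For (ii) I would use that $\tau$ is a continuous bijection $W \to W$ with inverse $\sigma$ (Lemma~\ref{lem-tausigma}(e)), hence injective and continuous on the closed segment $\overline{L(\vec{w})}$ from $\vec{w}$ to $\tau(\vec{w})$. Its image is therefore a topological arc inside $\Line(\tau(\vec{w}))$ joining the endpoints $\tau(\vec{w})$ and $\tau^2(\vec{w})$, and a continuous injection from a segment into a line must be monotonic along the line. Consequently the image of $\overline{L(\vec{w})}$ is exactly the closed segment from $\tau(\vec{w})$ to $\tau^2(\vec{w})$, with $\vec{w}\mapsto\tau(\vec{w})$ and $\tau(\vec{w})\mapsto\tau^2(\vec{w})$; restricting to $\lambda\in(0,1]$ removes precisely the endpoint $\tau^2(\vec{w})$, yielding $\tau(\vec{u}) \in L(\tau(\vec{w}))$.

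The main obstacle is the bookkeeping in step~(i): verifying that the slope from $\tau(\vec{w})$ to $\tau(\vec{u})$ does not depend on $\lambda$. Once the common factor $(1-\lambda)(1-q)/(w_1^2 u_1)$ is extracted from both coordinate differences, everything else is routine, and one could alternatively derive an explicit formula $\mu = \lambda w_1 u_1 \cdot [\text{denominator}]^{-1}$ and verify $\mu\in(0,1]$ directly, but the geometric argument via continuity and injectivity avoids that bookkeeping.
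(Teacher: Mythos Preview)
Your proof is correct, but it takes a different route from the paper. The paper's argument is a single explicit identity: for any $\vec{x},\vec{y}\in W$ and $\lambda\in(0,1]$,
\[
\tau(\lambda\vec{x}+(1-\lambda)\vec{y})\;=\;\lambda'\tau(\vec{x})+(1-\lambda')\tau(\vec{y}),
\qquad
\lambda'=\frac{\lambda\vec{x}_1}{\lambda\vec{x}_1+(1-\lambda)\vec{y}_1},
\]
verified by direct substitution, with $\lambda'\in(0,1]$ immediate from positivity of $\vec{x}_1,\vec{y}_1$. Specialising to $\vec{x}=\vec{w}$, $\vec{y}=\tau(\vec{w})$ gives the result in one line. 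In other words, the paper does exactly the ``alternative'' you mention in your last paragraph, and the explicit coefficient is simply $\mu=\lambda w_1/u_1$ (your formula ``$\mu=\lambda w_1 u_1\cdot[\text{denominator}]^{-1}$'' there is slightly garbled). The paper's approach exploits that $\tau$ acts on the first coordinate as a M\"obius-type map, so convex combinations are preserved up to a reweighting by first coordinates; this makes both parts (i) and (ii) of your decomposition unnecessary. Your approach trades that algebraic observation for a slope computation plus a topological monotonicity argument; it is longer but perfectly valid, and has the mild advantage of not requiring one to guess the closed form for $\mu$.
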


\begin{restatable}{lemma}{outlineseg}
    \label{lem-outlineseg}
    For all $\vec{v} \in W \smallsetminus \Area(\vec{\kappa})$ where $\vec{v}_1 \leq \vec{\kappa}_1$, there exists $\vec{u} \in W \smallsetminus \Area(\vec{\kappa})$ such that $\vec{u}_1 = \sigma^k(\vec{\kappa})_1$ for some $k \geq 0$ and $\vec{v} \in L(\vec{u})$.
\end{restatable}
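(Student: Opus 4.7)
My plan is to choose the index $k$ based on where $\vec{v}_1$ sits in the sequence $(\sigma^k(\vec{\kappa})_1)_{k\geq 0}$, and then solve for $\vec{u}_2$ explicitly from the incidence condition $\vec{v}\in L(\vec{u})$.

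First, I would analyze the orbit $(\sigma^k(\vec{\kappa})_1)_{k\geq 0}$. Since $\sigma(\vec{w})_1 = (1{-}q)/(1{-}\vec{w}_1)$ satisfies $\sigma(\vec{w})_1 < \vec{w}_1$ for every $\vec{w}_1\in I_q$ (as the quadratic $x(1{-}x)-(1{-}q)$ is positive on $I_q$), and Lemma~\ref{lem-tausigma}(a) keeps the orbit inside $I_q$, this sequence is strictly decreasing and bounded below. A fixed-point calculation for $x\mapsto(1{-}q)/(1{-}x)$ identifies its fixed points as the endpoints of $I_q$, of which only the lower endpoint $\ell=(1{-}\sqrt{4q-3})/2$ is attracting; hence $\sigma^k(\vec{\kappa})_1\to\ell$. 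Since $\vec{v}\in W$ gives $\vec{v}_1>\ell$ and the hypothesis gives $\vec{v}_1\leq\vec{\kappa}_1=\sigma^0(\vec{\kappa})_1$, there is a unique $k\geq 0$ with $\vec{v}_1 \in [\sigma^k(\vec{\kappa})_1, \tau(\sigma^k(\vec{\kappa}))_1)$; in the degenerate case $k=0$ this forces $\vec{v}_1=\vec{\kappa}_1$, and I would simply take $\vec{u}=\vec{v}$, which lies in $L(\vec{v})$ (at $\lambda=1$) and outside $\Area(\vec{\kappa})$ by hypothesis.

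For the generic case $k\geq 1$, I would set $\vec{u}_1=\sigma^k(\vec{\kappa})_1$ and solve for $\vec{u}_2$ from $\vec{v}=\lambda\vec{u}+(1{-}\lambda)\tau(\vec{u})$. Because $\tau(\vec{u})_1=(q{-}1{+}\vec{u}_1)/\vec{u}_1=\sigma^{k-1}(\vec{\kappa})_1$ depends only on $\vec{u}_1$, the first-coordinate equation uniquely determines $\lambda\in(0,1]$, and then $\tau(\vec{u})_2=\vec{u}_2/\vec{u}_1$ turns the second-coordinate equation into
\[
\vec{u}_2 \;=\; \frac{\vec{v}_2}{\lambda + (1-\lambda)/\vec{u}_1} \;\geq\; 0.
\]
Thus $\vec{u}\in W$ and $\vec{v}\in L(\vec{u})$ by construction. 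To verify $\vec{u}\notin\Area(\vec{\kappa})$, observe that the broken line bounding $\Area(\vec{\kappa})$ passes through the corner $\sigma^k(\vec{\kappa})$ at first coordinate $\vec{u}_1=\sigma^k(\vec{\kappa})_1$, so $\vec{u}\notin\Area(\vec{\kappa})$ iff $\vec{u}_2<\sigma^k(\vec{\kappa})_2$. The hypothesis $\vec{v}\notin\Area(\vec{\kappa})$ combined with $\vec{v}_1\in[\sigma^k(\vec{\kappa})_1,\sigma^{k-1}(\vec{\kappa})_1)$ means $\vec{v}_2$ is strictly less than the height of $L(\sigma^k(\vec{\kappa}))$ at $\vec{v}_1$, which by the same parameterization equals $\sigma^k(\vec{\kappa})_2\cdot(\lambda+(1{-}\lambda)/\vec{u}_1)$. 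Dividing by the positive factor $\lambda+(1{-}\lambda)/\vec{u}_1$ yields $\vec{u}_2<\sigma^k(\vec{\kappa})_2$, as required.

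The main obstacle I anticipate is the geometric reduction used in the final step: rewriting $\vec{v}\notin\Area(\vec{\kappa})$, which a priori only says $\vec{v}$ is strictly below \emph{some} $\Line(\vec{u}')$ with $\vec{u}'\in\Points(\vec{\kappa})$, as the specific statement that $\vec{v}$ is below $L(\sigma^k(\vec{\kappa}))$ at first coordinate $\vec{v}_1$. This requires the upper envelope of the $\Line(\vec{u}')$'s restricted to first coordinates $\leq\vec{\kappa}_1$ to coincide with the broken line through the corners $\sigma^j(\vec{\kappa})$, i.e., $\slope(L(\sigma^{j+1}(\vec{\kappa}))) < \slope(L(\sigma^j(\vec{\kappa})))$ for all $j\geq 0$. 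Lemma~\ref{lem-tausigma}(c), applied with $\vec{v}=\sigma^{j+1}(\vec{\kappa})$, identifies $\slope(L(\sigma^j(\vec{\kappa})))$ with $\slope((\sigma^{j+1}(\vec{\kappa})_1,0),\sigma^j(\vec{\kappa}))$; Lemma~\ref{lem-tausigma}(d) then gives the strict decrease of slopes when the second coordinate is replaced by a smaller one at the same first coordinate, delivering the needed convexity.
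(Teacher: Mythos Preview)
Your approach is essentially the same as the paper's: locate $k$ so that $\vec{v}_1\in[\sigma^k(\vec{\kappa})_1,\sigma^{k-1}(\vec{\kappa})_1)$, set $\vec{u}_1=\sigma^k(\vec{\kappa})_1$, and solve the incidence condition for $\vec{u}_2$. The paper establishes the limit $\sigma^k(\vec{\kappa})_1\to(1-\sqrt{4q-3})/2$ via a Cauchy-sequence argument rather than your fixed-point analysis, and parameterizes the segment by slope rather than by $\lambda$, but these are cosmetic differences; in fact the paper's proof stops after writing down $\vec{u}_2$ and omits the verifications $\vec{u}\in W$ and $\vec{u}\notin\Area(\vec{\kappa})$ that you supply.

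One small imprecision: in your final paragraph, Lemma~\ref{lem-tausigma}(d) does not give what you need. After Lemma~\ref{lem-tausigma}(c) identifies $\slope(L(\sigma^j(\vec{\kappa})))$ with $\slope\big((\sigma^{j+1}(\vec{\kappa})_1,0),\,\sigma^j(\vec{\kappa})\big)$, the comparison you need is between this slope and $\slope(L(\sigma^{j+1}(\vec{\kappa})))=\slope(\sigma^{j+1}(\vec{\kappa}),\sigma^j(\vec{\kappa}))$; both are slopes of segments to the \emph{same} target $\sigma^j(\vec{\kappa})$ from two sources with the same first coordinate, and the one from the lower source is larger---this is elementary and not an instance of~(d), which compares $\slope(\vec{u},\tau(\vec{u}))$ with $\slope(\vec{v},\tau(\vec{v}))$. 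Also, to conclude that the upper envelope at $\vec{v}_1$ is $\Line(\sigma^k(\vec{\kappa}))$ you should note (by the same monotonicity of slopes extended across $\vec{\kappa}$) that the lines $\Line(\tau^j(\vec{\kappa}))$, $j\geq 1$, lie below $\Line(\vec{\kappa})$ for first coordinates $\leq\vec{\kappa}_1$, so they can be ignored.
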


% \begin{restatable}{lemma}{outarea}
%     \label{lem-outarea}
%     For all $\vec{v} \in W \smallsetminus \Area(\vec{\kappa})$ we have that $\tau(\vec{v}) \in W \smallsetminus \Area(\vec{\kappa})$.
% \end{restatable} 

Now we prove the main result of this section. 
%The symbols $q$ and $\vec{\kappa}$ still refer to the constants fixed in Section~\ref{sec-large-formula}.

% For every state $t$, we use $U_t$ to denote the set of all states $u$ in the underlying finite Markov chain such that $u \models a$ and $P(t,u) > 0$, where $P$ is the underlying transition function. Recall that $\vec{v}[t]_1 = \sum_{u \in U_t} P(t,u)$. The symbols $q$ and $\vec{\kappa}$ still refer to the constants fixed above.

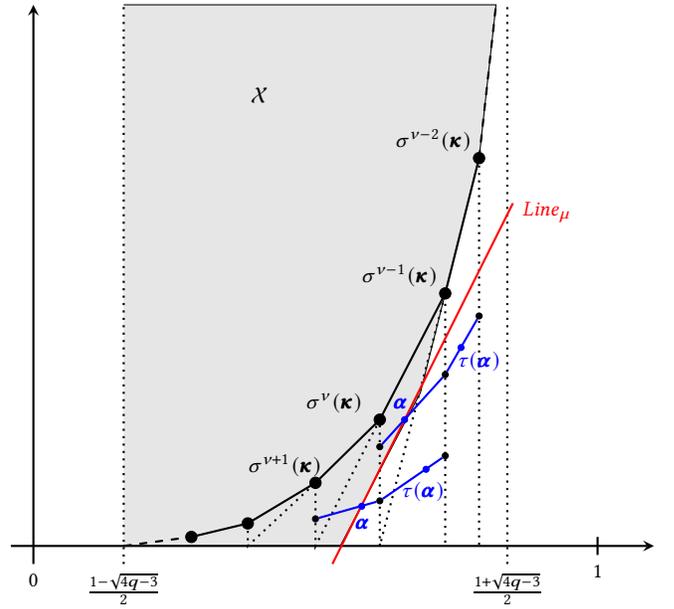
\begin{figure}[t]\centering
    \begin{tikzpicture}[x=1.5cm, y=1.2cm,font=\small]
    \tikzset{ver/.style={draw, circle, inner sep=0pt, minimum size=.15cm, fill=black}}    
    \tikzset{vr/.style={draw, circle, inner sep=0pt, minimum size=.08cm, fill=black}}    
    \coordinate (a) at (1.4,.1);
    \coordinate (b) at (1.9,.25);
    \coordinate (b0) at (1.9,0);
    \coordinate (c)  at (2.5,.7);
    \coordinate (c0) at (2.5,0);
    \coordinate (d) at (3.07,1.4);
    \coordinate (d0) at (3.07,0);
    \coordinate (e) at (3.65,2.8);
    \coordinate (e0) at (3.65,0);
    \coordinate (f) at (3.95,4.3);
    \coordinate (f0) at (3.95,0);
    \coordinate (s) at (0.8,0);
    \coordinate (t) at (4.1,6);
    \coordinate (tt) at (0.8,6);
    \draw [fill=gray!20,draw opacity=0] plot coordinates {(s) ([shift={(0.22,0)}]c0) ([shift={(0.36,1.7)}]d0) (e) (f) (t) (tt)};
    \path[thick,-stealth, at start] 
    (-0.2,0) edge  (5.5,0)
    (0,-0.2) edge node[below] {$0$} (0,6);
    \path[thick,-, at start]
        (5,-.1)  edge node[below] {$1$} (5,0.1);
    \path[thick, dotted, at start] 
        (.8,-0.2) edge node[below] {$\frac{1-\sqrt{4q-3}}{2}$} (0.8,6)
        (4.2,-0.2) edge node[below] {$\frac{1+\sqrt{4q-3}}{2}$} (4.2,6);   
    \draw[thick, dotted] (b) -- (b0) -- (c) -- (c0) -- (d) -- (d0) -- (e) -- (e0);
    \draw[thick, dotted] (f) -- (f0); 
    \node[ver] (A) at (a) {};
    \node[ver] (B) at (b) {};
    \node[ver, label={above, xshift=-4mm, yshift=-1mm: $\sigma^{\nu+1}(\vec{\kappa})$}] (C) at (c) {};
    \node[ver, label={above, xshift=-6mm, yshift=-1mm: $\sigma^{\nu}(\vec{\kappa})$}] (D) at (d) {};
    \node[ver, label={above, xshift=-6mm, yshift=-1mm: $\sigma^{\nu-1}(\vec{\kappa})$}] (E) at (e) {};
    \node[ver, label={above, xshift=-6mm, yshift=-1mm: $\sigma^{\nu-2}(\vec{\kappa})$}] (F) at (f) {};
    \draw[thick] (a) -- (b) -- (c) -- (d) -- (e) -- (f);
    \draw[thick,dashed] (s) -- (a);
    \draw[thick,dashed] (f) -- (t);
    \draw[thick,color=red] ([shift={(0.15,-0.2)}]c0) -- ([shift={(0.6,1.0)}] e);
    \node[thick,color=red] at ([shift={(.9,.9)}] e) {$\Line_\mu$};
    \node[vr] (r1) at ([shift={(0,-.4)}]c) {};
    \node[vr] (r2) at ([shift={(0,-.9)}]d) {};
    \node[vr] (r3) at ([shift={(0,-1.8)}]e) {};
    \draw[thick,color=blue] (r1) -- (r2) -- (r3);   
    \node[vr,color=blue,label={below, color=blue, xshift=.1mm, yshift=-.1mm: $\vec{\alpha}$}] (rho1) at ([shift={(.41,-.26)}]c) {};
    \node[vr,color=blue,label={below, color=blue, xshift=-.3mm, yshift=-.1mm: $\tau(\vec{\alpha})$}] (trho1) at ([shift={(.41,-.55)}]d) {};
    \node[vr] (rr1) at ([shift={(0,-.3)}]d) {};
    \node[vr] (rr2) at ([shift={(0,-.9)}]e) {};
    \node[vr] (rr3) at ([shift={(0,-1.75)}]f) {};
    \draw[thick,color=blue] (rr1) -- (rr2) -- (rr3);   
    \node[vr,color=blue,label={above, color=blue, xshift=-.5mm, yshift=-.1mm: $\vec{\alpha}$}] (rho2) at 
         ([shift={(.22,0)}]d) {};
    \node[vr,color=blue,label={below, color=blue, xshift=2.5mm, yshift=1mm: $\tau(\vec{\alpha})$}] (trho2) at 
         ([shift={(.14,-.6)}]e) {}; 
    \node (T) at (2,5) {$\X$};
%    \node (W) at (2.5,.7) {$W$};
\end{tikzpicture}
\caption{The construction proving $\Out = \emptyset$.}
\label{fig-thm-area}
\end{figure}

\begin{theorem}
\label{thm-area}
Let $T$ be a subset of states of some Markov chain with transition function $P$ such that for every $t \in T$, we have that $\vec{v}[t]_1 \leq \vec{\kappa}_1$ and if $\vec{v}[t] \neq \vec{\kappa}$, then  $\As{t} \subseteq T$ and the following equations are satisfied:
\begin{eqnarray}
q & = & 1 - \vec{v}[t]_1 + \sum_{u \in \As{t}} P(t,u)\cdot \vec{v}[u]_1
   \label{eq-convex-one}\\
q & = &  1 - \vec{v}[t]_1 - \vec{v}[t]_2 +  \sum_{u \in \As{t}} P(t,u)\cdot (\vec{v}[u]_1 + 
         \vec{v}[u]_2)
   \label{eq-convex-two}  
\end{eqnarray}
Then $\vec{v}[t] \in \Area(\vec{\kappa})$ for every $t \in T$. Furthermore, for every $t \in T$ such that $\vec{v}[t] \neq \vec{\kappa}$ and every $u \in \As{t}$, we have that $\vec{v}[u] = \tau(\vec{v}[t])$.
\end{theorem}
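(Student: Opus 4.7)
The plan is to first extract from the two hypothesised equations the single geometric fact that $\tau(\vec{v}[t])$ is a \emph{positive} convex combination of the vectors $\{\vec{v}[u] : u \in \As{t}\}$ with weights $\lambda_u = P(t,u)/\vec{v}[t]_1$. Equation~\eqref{eq-convex-one}, divided by $\vec{v}[t]_1 = \sum_{u \in \As{t}} P(t,u)$, gives this identity in the first coordinate; subtracting~\eqref{eq-convex-one} from~\eqref{eq-convex-two} and dividing by $\vec{v}[t]_1$ gives it in the second coordinate. Using this, together with the convexity of $\Area(\vec{\kappa})$ and of each half-space $H(\vec{u})$, I prove $\vec{v}[t] \in \Area(\vec{\kappa})$ by infinite descent.

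Set $\Out = \{t \in T : \vec{v}[t] \notin \Area(\vec{\kappa})\}$. Since $\vec{\kappa} \in L(\vec{\kappa}) \subseteq \Area(\vec{\kappa})$, every $t \in \Out$ has $\vec{v}[t] \neq \vec{\kappa}$ and therefore satisfies the convex-combination identity with $\As{t} \subseteq T$. Assume toward contradiction that $\Out \neq \emptyset$. For each $t \in \Out$, Lemma~\ref{lem-outlineseg} supplies $\vec{u}(t) \in W \setminus \Area(\vec{\kappa})$ with $\vec{u}(t)_1 = \sigma^{k}(\vec{\kappa})_1$ and $\vec{v}[t] \in L(\vec{u}(t))$; let $\mu(t)$ denote the unique non-negative integer $k$ for which $\vec{v}[t]_1 \in [\sigma^{k}(\vec{\kappa})_1, \sigma^{k-1}(\vec{\kappa})_1)$ (the level index, depicted as the red line $\Line_\mu$ in Fig.~\ref{fig-thm-area}). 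Pick $t^* \in \Out$ minimising $\mu$ and write $\mu^* = \mu(t^*)$.

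Now Lemma~\ref{lem-tau-convex} gives $\tau(\vec{v}[t^*]) \in L(\tau(\vec{u}(t^*)))$, and a short coordinate computation using $\tau(\vec{u}(t^*))_2 = \vec{u}(t^*)_2/\vec{u}(t^*)_1 < \sigma^{\mu^*-1}(\vec{\kappa})_2$ places $\tau(\vec{u}(t^*))$ strictly below $\sigma^{\mu^*-1}(\vec{\kappa})$, so $L(\tau(\vec{u}(t^*)))$ lies strictly below $L(\sigma^{\mu^*-1}(\vec{\kappa}))$ and $\tau(\vec{v}[t^*]) \notin H(\sigma^{\mu^*-1}(\vec{\kappa}))$. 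The corner case $\mu^* = 0$ is dispatched directly: there $\vec{v}[t^*]_1 = \vec{\kappa}_1$ forces $\tau(\vec{v}[t^*])_1 = \tau(\vec{\kappa})_1 > \vec{\kappa}_1$ by Lemma~\ref{lem-tausigma}(b), contradicting the bound $\tau(\vec{v}[t^*])_1 \leq \vec{\kappa}_1$ (every $\vec{v}[u]_1 \leq \vec{\kappa}_1$ because $\As{t^*} \subseteq T$).

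By convexity of $H(\sigma^{\mu^*-1}(\vec{\kappa}))$ and the positivity of the convex combination, some $u^* \in \As{t^*}$ satisfies $\vec{v}[u^*] \notin H(\sigma^{\mu^*-1}(\vec{\kappa}))$, and so $u^* \in \Out$ because $\Area(\vec{\kappa}) \subseteq H(\sigma^{\mu^*-1}(\vec{\kappa}))$. The step I expect to be the main technical obstacle is showing $\mu(u^*) < \mu^*$, which closes the descent; it rests on Lemma~\ref{lem-tausigma}(c) applied with $\vec{v} = \sigma^{\mu^*}(\vec{\kappa})$, which says that the three points $(\sigma^{\mu^*}(\vec{\kappa})_1,\, 0)$, $\sigma^{\mu^*-1}(\vec{\kappa})$, and $\sigma^{\mu^*-2}(\vec{\kappa})$ are collinear. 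Consequently $\Line(\sigma^{\mu^*-1}(\vec{\kappa}))$ crosses the $x$-axis exactly at $x_1 = \sigma^{\mu^*}(\vec{\kappa})_1$, and $H(\sigma^{\mu^*-1}(\vec{\kappa}))$ therefore swallows the whole strip $W \cap \{x_1 \leq \sigma^{\mu^*}(\vec{\kappa})_1\}$. Thus $\vec{v}[u^*]_1 > \sigma^{\mu^*}(\vec{\kappa})_1$, yielding $\mu(u^*) < \mu^*$ and contradicting minimality. This proves $\Out = \emptyset$. For the second statement, I combine the first with Lemma~\ref{lem-gproperties}: the constraint $\tau(\vec{v}[t])_1 \leq \vec{\kappa}_1$ (convex combination of quantities $\leq \vec{\kappa}_1$) translates to $\vec{v}[t]_1 \leq \sigma(\vec{\kappa})_1$, and iterating this bound along the forward $\As{\cdot}$-closure (using Lemma~\ref{lem-tausigma}(b) to prevent an infinite strictly-decreasing $\sigma^k$-chain inside $I_q$) forces $\vec{v}[t]$ to land at some vertex $\sigma^k(\vec{\kappa}) \in \Points(\vec{\kappa})$; then $\tau(\vec{v}[t]) \in \Points(\vec{\kappa})$ as well, and Lemma~\ref{lem-gproperties}(a) delivers $\vec{v}[u] = \tau(\vec{v}[t])$ for every $u \in \As{t}$.
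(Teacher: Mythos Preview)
Your extraction of the convex-combination identity for $\tau(\vec{v}[t])$ from~\eqref{eq-convex-one} and~\eqref{eq-convex-two} is the same as the paper's opening claim.

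For the first statement you take a genuinely different and more elementary route. The paper fixes the minimal $\nu$ with $\vec{v}[\Out]\smallsetminus H(\sigma^\nu(\vec\kappa))\neq\emptyset$, then takes the \emph{supremum} $\mu$ of heights $y$ for which a line parallel to $\Line(\sigma^\nu(\vec\kappa))$ still bounds $\vec{v}[\Out]$ from below, builds the convex envelope $\X$, and uses continuity of $\tau$ on a sequence of states whose characteristic vectors accumulate at some $\vec\alpha\in\Line_\mu\cap\X$. You instead run a discrete descent on the integer level $\mu(t)$. Your key geometric step---Lemma~\ref{lem-tausigma}(c) gives that $\Line(\sigma^{\mu^*-1}(\vec\kappa))$ meets the $x$-axis exactly at $x_1=\sigma^{\mu^*}(\vec\kappa)_1$, so the half-space $H(\sigma^{\mu^*-1}(\vec\kappa))$ swallows all of $W$ with $x_1\le\sigma^{\mu^*}(\vec\kappa)_1$---cleanly forces $\vec{v}[u^*]_1>\sigma^{\mu^*}(\vec\kappa)_1$ and hence $\mu(u^*)<\mu^*$. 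This avoids the limit/compactness argument altogether. The only caveat is that you tacitly assume $\mu(t)$ is defined for every $t\in\Out$, i.e.\ $\vec{v}[t]_1>(1-\sqrt{4q-3})/2$; this is not listed among the theorem's hypotheses (the paper makes the analogous tacit assumption, and in every application the $\bInterval$ subformula supplies it).

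For the second statement there is a real gap in your proposal. Your plan is to first force $\vec{v}[t]\in\Points(\vec\kappa)$ by ``iterating'' the bound $\vec{v}[t]_1\le\sigma(\vec\kappa)_1$ along the $\As{\cdot}$-successor relation. That iteration does not go through: a successor $u\in\As t$ may satisfy $\vec{v}[u]=\vec\kappa$, so $\vec{v}[u]_1=\vec\kappa_1>\sigma(\vec\kappa)_1$, and the convex-combination upper bound on $\tau(\vec{v}[t])_1$ never improves beyond $\vec\kappa_1$. Even along a branch that never hits $\vec\kappa$, a one-coordinate upper bound on $\vec{v}[t]_1$ cannot by itself pin the two-dimensional vector $\vec{v}[t]$ to a discrete set of vertices; nothing in your sketch controls the second coordinate. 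The paper does not attempt any such argument: it dispatches the second statement in a single line by applying Lemma~\ref{lem-gproperties}(a) directly to the positive convex combination $\tau(\vec{v}[t])=\sum_{u\in\As t}\frac{P(t,u)}{\vec{v}[t]_1}\,\vec{v}[u]$ with all $\vec{v}[u]\in\Area(\vec\kappa)$ (established by the first part). Note that Lemma~\ref{lem-gproperties}(a) is stated for extremal points $\vec u\in\Points(\vec\kappa)$, which is precisely the hypothesis you were trying to manufacture; the paper treats this invocation as immediate rather than arguing for it separately.
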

\begin{proof}
   We start be proving the following claim: for every $t \in T$ such that  \mbox{$\vec{v}[t] \neq \vec{\kappa}$}, the vector $\tau(\vec{v}[t])$ is a positive convex combination of the vectors in $\{\vec{v}[u] \mid u \in \As{t}\}$.
 
   By rewriting~\eqref{eq-convex-one}, we obtain    
    \begin{equation}
        \frac{q-1+\vec{v}[t]_1}{\vec{v}[t]_1} \ = \ 
        \sum_{u \in U_t} \frac{P(t,u)}{\vec{v}[t]_1} \vec{v}[u]_1   
        \label{eq-convex-first}
    \end{equation}
    Note that the left-hand side of \eqref{eq-convex-first} is equal to $\tau(\vec{v}[t])_1$. 
    Furthermore, by simplifying the right-hand side of \eqref{eq-convex-two} using ~\eqref{eq-convex-one}, we obtain
    \begin{equation}
        q \ = \ q  - \vec{v}[t]_2 + \sum_{u \in U_t} P(t,u) \cdot \vec{v}[u]_2
    \label{eq-convex-22}
    \end{equation}
    Thus,
    \begin{equation}
        \frac{\vec{v}[t]_2}{\vec{v}[t]_1} \ = \ \sum_{u \in U_t} \frac{P(t,u)}{\vec{v}[t]_1}\cdot \vec{v}[u]_2
    \label{eq-convex-second}
    \end{equation}
    Note that the left-hand side of \eqref{eq-convex-second} is equal to $\tau(\vec{v}[t])_2$. By combining \eqref{eq-convex-first} and \eqref{eq-convex-second}, we have that 
    \begin{equation}
        \tau(\vec{v}[t]) \ = \ \sum_{u \in U_t} \frac{P(t,u)}{\vec{v}[t]_1}\cdot \vec{v}[u]
    \end{equation}
    which proves the claim.

    Now we prove the first part of the theorem, i.e., $\vec{v}[t] \in \Area(\vec{\kappa})$ for every $t \in T$. Suppose the converse, i.e., the set $\Out$ consisting of all $t \in T$ such that $\vec{v}[t] \not\in \Area(\vec{\kappa})$ is non-empty. We show that this assumption leads to a contradiction. The arguments are illustrated in Fig.~\ref{fig-thm-area}.
    
    Let $\vec{v}[\Out] = \{\vec{v}[t] \mid t \in \Out\}$, and let
    $\nu$ be the \emph{minimal} \mbox{$k \in \N$} such that $\vec{v}[\Out] \smallsetminus H(\sigma^k(\vec{\kappa})) \neq \emptyset$. Furthermore, for every \mbox{$0 \leq y \leq \sigma^{\nu}(\vec{\kappa})_2$}, let $\Line_y$ be the line with the same slope as $\Line(\sigma^\nu(\vec{\kappa}))$ containing the point $(\sigma^\nu(\vec{\kappa})_1, y)$. We use $H(\Line_y)$ to denote the closed half-space above $\Line_y$. 
    Let $\mu$ be the supremum of all \mbox{$y \leq \sigma^{\nu}(\vec{\kappa})_2$} such that
    $\vec{v}[\Out] \subseteq H(\Line_y)$. Clearly, $0 \leq \mu < \sigma^\nu(\vec{\kappa})_2$. 
    Let 
    \[
       \X \quad = \quad W \ \cap \ H(\Line_\mu) \ \cap \ \bigcap_{k=0}^{\nu-1} H(\sigma^k(\vec{\kappa}))
    \]
    Note $\X$ is a convex set and $\Area(\vec{\kappa}) \cup \vec{v}[\Out] \subseteq \X$.  We show that there exists $t \in \Out$ such that $\tau(\vec{v}[t]) \not\in \X$. By the claim proven above, $\tau(\vec{v}[t])$ is a positive convex combination of the vectors in $\{\vec{v}[u] \mid u \in \As{t}\}$. However, this is impossible because all these vectors are in $\X$, and $\X$ is a convex set; we have a contradiction.

    The existence of $t \in \Out$ such that $\tau(\vec{v}[t]) \not\in \X$ is proven as follows. By the definition of $\mu$, there is an infinite sequence $t_1,t_2,\ldots$ such that 
    \begin{itemize}
        \item $t_i \in \Out$ and $\vec{v}[t_i]_1 \leq \vec{\kappa}_1$ for all $i \in \N$,
        \item the distance of $\vec{v}[t_i]$ from $\Line_\mu$ approaches zero as $i \to \infty$.
    \end{itemize}
    This sequence must contain an infinite subsequence converging to some point $\vec{\alpha} \in \Line_\mu \cap \X$.

    Since $\tau$ is continuous, it suffices to show that $\tau(\vec{\alpha}) \not\in \X$ for every $\vec{\alpha} \in \Line_\mu \cap \X$. So, let us fix some $\vec{\alpha} \in \Line_\mu \cap \X$. By Lemma~\ref{lem-outlineseg}, there exists $\vec{u} \in W \smallsetminus \Area(\kappa)$ such that $\vec{u}_ 1 = \sigma^k(\vec{\kappa})_1$ for some $k \geq 0$ and $\vec{\alpha} \in L(\vec{u})$. By the definition of $\Line_{\mu}$, we have that $k$ is either $\nu{+}1$ or $\nu$, depending on whether $\vec{\alpha}_1 < \sigma^{\nu}(\vec{\kappa})_1$ or not, respectively. In the first case, the slope of both $L(\vec{u})$ and $L(\tau(\vec{u}))$ is strictly smaller that the slope of $\Line_\mu$ (see Lemma~\ref{lem-tausigma}). This implies $\tau(\vec{u}) \not\in \X$ and $L(\tau(\vec{u})) \cap \X = \emptyset$, hence  $\tau(\vec{\alpha}) \not\in \X$ because $\tau(\vec{\alpha}) \in L(\tau(\vec{u}))$ by Lemma~\ref{lem-tau-convex}. In the second case, the slope of $L(\vec{u})$ 
    strictly smaller than the slope of $\Line_\mu$, which implies $\tau(\vec{u}) \not\in \X$. Furthermore, the slope of $L(\tau(\vec{u}))$ is strictly smaller than the slope of
    $L(\sigma^{\nu-1}(\vec{\kappa}))$, and hence $L(\tau(\vec{u})) \cap \X = \emptyset$. Since $\tau(\vec{\alpha}) \in L(\tau(\vec{u}))$ by Lemma~\ref{lem-tau-convex}, we obtain 
    $\tau(\vec{\alpha}) \not\in \X$.
    
    The second part of the theorem follows easily. Let $t \in T$ such that $\vec{v}[t] \neq \vec{\kappa}$. By the above claim, $\tau(\vec{v}[t])$ is a positive convex combination of the vectors in $\{\vec{v}[u] \mid u \in \As{t}\}$ and all of these vectors belong to $\Area(\vec{\kappa})$ by the first part of the theorem. Hence, $\vec{v}[u] = \tau(\vec{v}[t])$ for all $u \in \As{t}$ by applying \mbox{Lemma~\ref{lem-gproperties}~(a)} 
\end{proof}

\tikzstyle{state}=[draw,thick,minimum width=12mm,minimum height=12mm,rounded corners,text centered]
\begin{figure*}[t]\centering
\begin{tikzpicture}[x=1.7cm, y=1.7cm,scale=0.83,font=\small]
\node (t) at (0,0) [state, draw=none] {\LARGE$\bullet$};
\node [below of = t,node distance=3ex] (pt) {$\ex{a,r_i}$};
\node [above of = t, node distance=2ex] (lt) {$t$};
\begin{scope}[shift={(8.5,-1)}]
    \node (t1) at (0,0) [state] {};
    \node [below of = t1] (pt1) {$\ex{h,b,S^2(r_i)}$};
    \node (t2) at (2,0) [state] {};
    \node [below of = t2] (pt2) {$\ex{h,c,S^2(r_i)}$};
\end{scope}
\begin{scope}[shift={(5.5,0.5)}]
    \node (u) at (-2,-1.5) [state] {\LARGE$\bullet\ \cdots \ \bullet$};
    \node [above of = u, node distance=2ex] (lu) {$u_1\hspace{5.5ex}u_k$};
    \node [below of = u] (pu) {$\ex{a,S(r_i)}$};
    \node (z1) at (-5.2,-3) [state] {};
    \node [below of = z1] (pz1) {$\ex{a,S^2(r_i)}$};
    \node (z2) at (-4,-3) [state] {};
    \node [below of = z2] (pz2) {$\ex{h,b,S^3(r_i)}$};
    \node (z3) at (-2.8,-3) [state] {};
    \node [below of = z3] (pz3) {$\ex{h,c,S^3(r_i)}$};
    \node (zk1) at (-.7,-3) [state] {};
    \node [below of = zk1] (pzk1) {$\ex{a,S^2(r_i))}$};
    \node (zk2) at (0.5,-3) [state] {};
    \node [below of = zk2] (pzk2) {$\ex{h,b,S^3(r_i)}$};
    \node (zk3) at (1.7,-3) [state] {};
    \node [below of = zk3] (pzk3) {$\ex{h,c,S^3(r_i)}$};
    \draw [thick, dotted] ($(z3) +(.7,0)$)  -- ($(zk1) +(-.7,0)$);
\end{scope}
\draw[-stealth,rounded corners,thick] (t.center) -- ++(8.5,0) --  node[left] {$\vec{v}[t]_2$} (t1);
\draw[-stealth,rounded corners,thick] (t.center) -- ++(10.5,0) -- node[left] {$1{-}\vec{v}[t]_1{-}\vec{v}[t]_2$} (t2);
\draw[-stealth,rounded corners,thick] (t.center) -| node[left, near end] {$\vec{v}[t]_1$} (u);
\draw[-stealth,rounded corners,thick] ([xshift=4ex]u.center) -| node[left, near end] {$\vec{v}[u_k]_1$} (zk1);
\draw[-stealth,rounded corners,thick] ([xshift=4ex]u.center) -| node[left, near end] {$\vec{v}[u_k]_2$} (zk2);
\draw[-stealth,rounded corners,thick] ([xshift=4ex]u.center) -| node[left, near end] {} (zk3);
\draw[-stealth,rounded corners,thick] ([xshift=-4ex]u.center) -| node[left, near end] {$\vec{v}[u_1]_1$} (z1);
\draw[-stealth,rounded corners,thick] ([xshift=-4ex]u.center) -| node[left, near end] {$\vec{v}[u_1]_2$} (z2);
\draw[-stealth,rounded corners,thick] ([xshift=-4ex]u.center) -| node[left, near end] {} (z3);
\end{tikzpicture}
\caption{The structure of transient states in a model of $\psi[c,d]$}
\label{fig-transient}
\end{figure*}
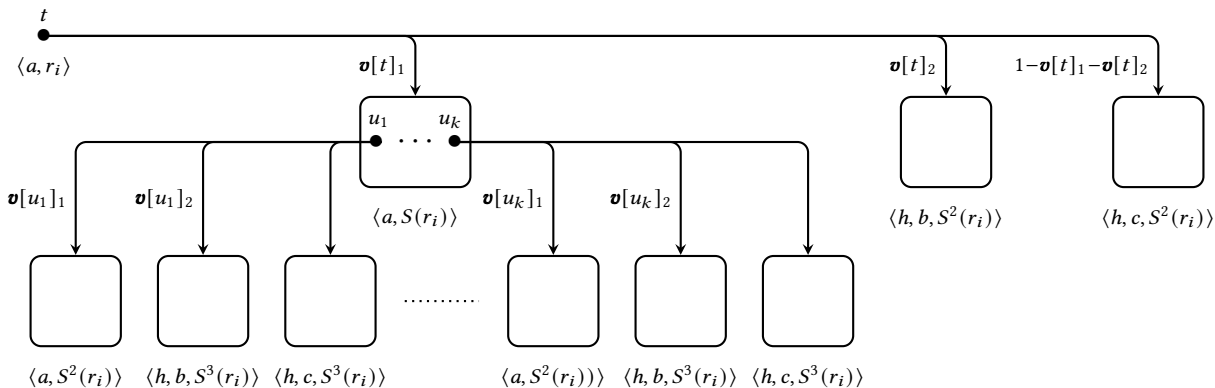

\section{A Parameterized PCTL Formula with Arbitrarily Large Models}
\label{sec-large}

In this section, we prove the following result:

\begin{theorem}
\label{thm-paramaterized}
There exists a parameterized PCTL formula $\psi(x,y)$ such that for every $n \in \N$, there exists an instance $\psi[c,d]$ satisfying the following:
\begin{itemize}
    % \item $\psi[c,d]$ has a tree-like model with $\calO(n)$ states and branching degree at most~$3$.
    \item every model of $\psi[c,d]$ has at least $n$ states;
    \item $\psi[c,d]$ has a finite-state model with $\calO(n)$ states.
\end{itemize}
\end{theorem}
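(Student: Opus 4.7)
The plan is to design $\psi(x,y)$ so that in any model $M$, the initial state $s_0$ has characteristic vector $(x,y)$, and the set $T$ of states reachable from $s_0$ along the $a$-chain satisfies the hypotheses of Theorem~\ref{thm-area}. The formula will be a conjunction of a local part and a global invariant. The local part $\opx_{=x} a \wedge \opx_{=y} b$ pins $\vec v[s_0] = (x,y)$. The global part, of the form $\opg_{=1} \Phi$, reproduces the schema of Fig.~\ref{fig-transient}: using atomic propositions $a$, $b$, $c$, $h$ (``transient'') and a pair of alternating round markers (implemented through $\opx$-depth parity), every non-transient (``main'') state $t$ is forced to split its probability mass into (a) $a$-labeled main successors totalling $\vec v[t]_1$, (b) a $b$-labeled transient successor with mass $\vec v[t]_2$, and (c) a $c$-labeled transient successor with the remaining mass; each transient state is in turn forbidden from producing any $a$- or $b$-labeled successor in one step. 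Under this discipline, the probability from a main state $t$ of reaching an $a$-labeled main state in exactly two steps is $\sum_{u\in \As{t}} P(t,u)\cdot \vec v[u]_1$, since the only two-step $a$-reachable states lie beyond $a$-successors of $t$. Hence the two convex-combination equations \eqref{eq-convex-one} and \eqref{eq-convex-two} of Theorem~\ref{thm-area} can be encoded by pairs of PCTL constraints of the form $\opf^2_{=q}(\cdot)$ over suitable Boolean combinations of $a$, $b$, and the round markers; zero states (those with $\vec v[t]=\vec\kappa$) are singled out by their local $\opx$-probabilities matching $\vec\kappa_1$ and $\vec\kappa_2$ and are allowed to self-loop.

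Granted such a $\psi$, I would choose $c = \sigma^n(\vec\kappa)_1$ and $d = \sigma^n(\vec\kappa)_2$. In any model $M$ of $\psi[c,d]$, the initial state has $\vec v[s_0] = \sigma^n(\vec\kappa)\in W$, the set $T$ of reachable main states meets the hypotheses of Theorem~\ref{thm-area}, so every $a$-successor $u$ of a main $t\in T$ with $\vec v[t]\neq \vec\kappa$ satisfies $\vec v[u]=\tau(\vec v[t])$. Iterating along the $a$-chain from $s_0$ therefore yields states $s_0,s_1,\dots,s_n$ with characteristic vectors $\sigma^n(\vec\kappa),\sigma^{n-1}(\vec\kappa),\dots,\sigma(\vec\kappa),\vec\kappa$; by Lemma~\ref{lem-tausigma}(b), $\tau$ strictly increases the first coordinate, so these vectors are pairwise distinct and the $n+1$ states $s_i$ are pairwise distinct. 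Hence $M$ has at least $n+1 \geq n$ states. For the matching upper bound, I realize the schema of Fig.~\ref{fig-transient} literally: $n+1$ main states $t_0,\dots,t_n$ with $\vec v[t_i]=\sigma^i(\vec\kappa)$ placed on an $a$-chain so that the $a$-mass of $t_i$ is sent to $t_{i-1}$ (and $t_0$ self-loops to model $\vec\kappa$), together with a constant number of transient $b$- and $c$-labeled states per level. This gives a model of size $\calO(n)$.

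The main obstacle is the precise engineering of the global invariant $\Phi$. PCTL cannot directly equate two probabilities, so each of \eqref{eq-convex-one} and \eqref{eq-convex-two} has to be expressed by two opposite inequalities on $\opf^2$-probabilities, and one must simultaneously rule out spurious two-step $a$-paths through transient states and ensure that the round markers encode enough parity information to keep successive main states distinguishable from transient ones. Once this ``protocol'' is fixed, each semantic constraint of Theorem~\ref{thm-area} translates mechanically into a conjunct of $\Phi$, and the combinatorial argument in the previous paragraph closes the proof.
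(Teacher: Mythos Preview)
Your proposal is essentially the paper's proof: the same shape $\psi(x,y)=\Init(x,y)\wedge\opg_{=1}\Invariant$, the same choice $c=\sigma^n(\vec\kappa)_1$, $d=\sigma^n(\vec\kappa)_2$, the same appeal to Theorem~\ref{thm-area} to force $\vec v[u]=\tau(\vec v[t])$ along the $a$-chain, the same pigeonhole on the $n{+}1$ distinct vectors $\sigma^n(\vec\kappa),\ldots,\vec\kappa$, and the same explicit $\calO(n)$ model.

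Two small corrections on the engineering you flag as the obstacle. First, a \emph{pair} of round markers (period~$2$) is not enough: the key constraint is $\opf^2_{=q}\,S^2(r_i)$, and with period~$2$ you get $S^2(r_i)=r_i$, which is already satisfied at $t$ itself and collapses the probability to~$1$. The paper uses five cyclic markers $r_0,\ldots,r_4$ (and notes three would suffice here); you need period $\geq 3$ so that $t$, its depth-$1$ successors, and its depth-$2$ successors carry distinguishable markers in the formula $\opf^2_{=q}\big((S^2(r_i)\wedge\neg b)\vee(S^3(r_i)\wedge b)\big)$ encoding~\eqref{eq-convex-two}. Second, PCTL as defined here \emph{does} allow the comparison ${=}r$ (see Definition~\ref{def-pctl}), so there is no need to split each equation into two opposite inequalities; the paper writes $\opf^2_{=q}(\cdot)$ directly. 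Also, your use of ``transient'' for the $h$-labelled dead-end states is the reverse of the paper's terminology (there $h$ marks \emph{free} states, and ``transient'' refers to the main $a$-chain states), but this is purely cosmetic.
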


\subsection{Constructing $\psi(x,y)$}
\label{sec-large-formula}

Let us fix $q$ and $\vec{\kappa}$ in the same way as in Section~\ref{sec-geom}. 
The set of atomic propositions\footnote{For purposes of this section, it suffices to use only three $r_i$ propositions instead of five (this is apparent when inspecting Fig.~\ref{fig-transient}). However, using five $r_i$'s allows us to reuse some of the constructed formulae in the next sections with only trivial modifications.} occurring in $\psi(x,y)$ is $A = \{a,b,c,h,r_0,r_1,r_2,r_3,r_4\}$.  For every $B \subseteq A$, we use $\ex{B}$ to denote the formula saying that \emph{exactly} the propositions of $B$ are satisfied, i.e.,  $\ex{B} \equiv \bigwedge_{p \in B} p \wedge \bigwedge_{p \in A \smallsetminus B} \neg p$. Slightly abusing our notation, we write, e.g., $\ex{a,r_i}$ instead of $\ex{\{a,r_i\}}$.
Furthermore, for every $r_i \in A$, we use $S(r_i)$ to denote the ``successor'' proposition $r_j \in A$ such that $j = i{+}1~\mathrm{mod}~5$. For example, $S(S(r_3)) = r_0$. Recall that $S^k$ denotes the $k$-fold composition $S\circ \cdots \circ S$.

We put
\[
    \psi(x,y) \ \equiv \ \Init(x,y) \ \wedge \ \opg_{=1} \Invariant 
\]
where
\[
    \Init(x,y) \ \equiv \ \ex{a,r_0} \ \wedge \ \opx_{=x} a \ \wedge \  \opx_{=y} b
\]
is a parameterized \emph{initial condition} that has to be valid in a state $s$ satisfying an instance of $\psi(x,y)$, and 
\[
    \Invariant \ \equiv \ \Fin \vee \Trans \vee \Free
\]
is a formula (with no parameters) that must be valid in every state reachable from~$s$. The formula 
\[
    \Free \ \equiv \ h \wedge \bigvee_{B \subseteq A} (\ex{B} \wedge \opx_{=1} \ex{B})
\]
ensures that every reachable state $t$ satisfying the predicate $h$ has only immediate successors satisfying the same subset of $A$ as~$t$. This enforces that $\Free$ is valid in all states reachable from~$t$ (intuitively, these states are ``free'' in the sense that they do not require further attention).

The formula $\Fin$ is defined as follows:
\[
    \Fin \ \equiv \ \bigvee_{i \in \{0,\ldots,4\}}  \ex{a,r_i} \wedge \Fsuc_i \wedge \Zero
\]
where
\begin{eqnarray*}
    \Fsuc_i & \equiv &  \opx_{=1} (\ex{h,a,S(r_i)} \vee \ex{h,b,S^2(r_i)} \vee \ex{h,c,S^2(r_i)}),\\   
    \Zero & \equiv &  \opx_{=\vec{\kappa}_1} a \ \wedge \  \opx_{=\vec{\kappa}_2} b  \,.
\end{eqnarray*}  
Finally, we put 
\[
    \Trans \ \equiv \ \bigvee_{i \in \{0,\ldots,4\}} \ex{a,r_i} \wedge \Suc_i \wedge \Interval \wedge \Eq_i
\]
where
\begin{eqnarray*}
    \Suc_i & \equiv & \opx_{=1} (\ex{a,S(r_i)} \vee \ex{h,b,S^2(r_i)} \vee \ex{h,c,S^2(r_i)})\\
    \Interval & \equiv & \opx_{>\left(1-\sqrt{4q-3}\right)/2} a \ \wedge \ 
                      \opx_{\leq \vec{\kappa}_2} a \ \wedge \ \opx_{>0} b\\
    \Eq_i  & \equiv & \opf^{2}_{=q} S^2(r_i) \ \wedge \ 
                      \opf^{2}_{=q} ((S^2(r_i) \wedge \neg b) \vee (S^3(r_i) \wedge b))
%                      \opf^{2}_{=q} (\ex{h,S(S(c_i))} \vee \ex{a,S(S(c_i))} \vee \ex{b,h,c_i})
%                      \neg b \opu^{2}_{=q} (S(S(c_i)) \vee (b \wedge S(c_i))
\end{eqnarray*}  
This completes the definition of $\psi(x,y)$.

\subsection{A Proof of Theorem~\ref{thm-paramaterized}}
\label{sec-proof-main-param}

For a given $n \in \N$, let $c = \sigma^n(\vec{\kappa})_1$, $d = \sigma^n(\vec{\kappa})_2$, and consider the instance $\psi[c,d]$ of $\psi(x,y)$ (note that $c,d$ are rational). We show that $\psi[c,d]$ satisfies the two claims of Theorem~\ref{sec-proof-main-param}.

To prove the first claim, let us fix a Markov chain $M = (S,P,v)$ 
such that $s \models \psi[c,d]$ for some $s \in S$. If $S$ is infinite, we are done immediately. We show that if $S$ is finite, then $S$ contains at least $n$~states. Without restrictions, we assume that all states of $S$ are reachable from~$s$. 

Note that for every $t \in S$, we have that \emph{exactly one}\footnote{In particular, note that assuming $t \models \Fin \wedge \Trans$ implies $t \models \opx_{>0} \ex{a,S(r_i)}$ and $t \models \opx_{>0} \ex{h,a,S(r_i)}$, hence $t \not\models \Fsuc_i$ and $t \not\models \Suc_i$ (contradiction).}  of the following possibilities holds: $t \models \Free$, $t \models \Fin$, or $t \models \Trans$.
Thus, $S$ is partitioned into three disjoint subsets of \emph{free}, \emph{final}, and \emph{transient} states, respectively. 
%As we shall see, our constructions crucially depend on analyzing the properties of characteristic vectors of transient states.

\subsubsection*{Properties of free states.}
Observe that a state $t$ is free iff $t \models h$. The formula $\Free$ ensures that all immediate successors of a free state~$t$ (and consequently also all states reachable from $t$) satisfy the same set of atomic propositions as~$t$.

\subsubsection*{Properties of final states.}
The formula $\Fin$ ensures that all immediate successors of every final state are free and the characteristic vector of a final state is equal to $\vec{\kappa}$.

\subsubsection*{Properties of transient states.}
Let $t$ be a transient state, i.e.,
\mbox{$t \models \ex{a,r_i} \wedge \Suc_i \wedge \Interval \wedge \Eq_i$}
for exactly one $i \in \{0,\ldots,4\}$.
The formula $\Suc_i$ says that $t$ has precisely three types of immediate successors:
\begin{itemize}
    \item[I.]   transient or final states satisfying $\ex{a,S(r_i)}$,
    \item[II.]  free states satisfying $\ex{h,b,S^2(r_i)}$,
    \item[III.] free states satisfying $\ex{h,c,S^2(r_i)}$.
\end{itemize}
Note that the first/second component of $\vec{v}[t]$ is precisely the total probability of entering an immediate successor of Type~I/II from~$t$, respectively (see Fig.~\ref{fig-transient}, where $u_1,\ldots,u_k$ are Type~I successors of~$t$, and $\vec{v}[t]_1 = \sum_{j=1}^k P(t,u_i)$).

Let $T$ be the set of all final and transient states. We show that $T$ satisfies the conditions of Theorem~\ref{thm-area}. 

For every final state $t \in T$ we have that $t \models \Zero$, and hence $\vec{v}[t] = \vec{\kappa}$. Now let $t \in T$ be a transient state. Since every state of $\As{t}$ is transient or final, we have that $\As{t} \subseteq T$. Furthermore,
\[
    t \models \ex{a,r_i} \wedge \Suc_i \wedge \Interval \wedge \Eq_i
\]   
for precisely one $i \in \{0,\ldots,4\}$. Since $t \models \Interval$, we obtain $\vec{v}[t]_1 \leq \vec{\kappa}_1$. Now consider the formula
    \[
        \Eq_i \ \equiv\ \opf^{2}_{=q} S^2(r_i) \ \wedge \ 
        \opf^{2}_{=q} ((S^2(r_i) \wedge \neg b) \vee (S^3(r_i) \wedge b))
    \] 
The first conjunct says that the probability of all runs initiated in~$t$ satisfying the path formula $\opf^{2} S^2(r_i)$ is equal to~$q$. By inspecting the structure of transient states enforced by $\Suc_i$ (see Fig.~\ref{fig-transient}), we obtain 
\begin{equation*}
    q \ = \ 1 - \vec{v}[t]_1 + \sum_{u \in \As{t}} P(t,u)\cdot \vec{v}[u]_1 \,.
\end{equation*}
The second conjunct of $\Eq_i$ says that the probability of satisfying $\opf^{2} ((S^2(r_i) \wedge \neg b) \vee (S^3(r_i) \wedge b))$ in $t$ is equal to~$q$. Thus, we obtain
\begin{equation*}
        q \ = \  1 - \vec{v}[t]_1 - \vec{v}[t]_2 +  \sum_{u \in \As{t}} P(t,u) \cdot (\vec{v}[u]_1 + \vec{v}[u]_2) \,.
\end{equation*}

Since $s \models \psi[c,d]$, we have that $s$ is a transient state satisfying $\vec{v}[s] = \sigma^n(\vec{\kappa})$. By applying Theorem~\ref{thm-area}, we obtain that if $t$ is a transient state satisfying $\vec{v}[t] = \sigma^k(\vec{\kappa})$ where $1 \leq k \leq n$, then every $u \in \As{t}$ satisfies  $\vec{v}[u] = \tau(\sigma^k(\vec{\kappa})) = \sigma^{k-1}(\vec{\kappa})$. Consequently, for every $i \in \{0,\ldots,n\}$, there must be a state of $S$ with characteristic vector $\sigma^i(\vec{\kappa})$. These states must be pairwise different because $\sigma^i(\vec{\kappa}) \neq \sigma^j(\vec{\kappa})$ for $i \neq j$ (see Lemma~\ref{lem-tausigma}).

To prove the second claim of Theorem~\ref{thm-paramaterized}, realize that $\psi[c,d]$ has a model with states $t_0,\ldots,t_n$, $b_0,\ldots,b_{n-1}$, $c_0,\ldots,c_{n-1}$ where
    \begin{itemize}
        \item for all $1 \leq i \leq n$, $t_i \models \ex{a,r_j}$ where $j = (n{-}i) \textrm{ mod } 5$;
        \item $t_0 \models \ex{h,a,r(j)}$  where $j = n \textrm{ mod } 5$;
        \item for all $1 \leq i \leq n$, we have that $b_{i-1} \models \ex{h,b,S^2(r_j))}$ and $c_{i-1} \models \ex{h,c,S^2(r_j)}$  where  $j = (n{-}i) \textrm{ mod } 5$;
        %\item $b_0 \models \ex{h,b}$ and $h_0 \models \ex{h,a,S(c_j)}$, where  $j = n \textrm{ mod } 3$;
        \item for all $0 \leq i \leq n{-}1$, we have that $P(b_i,b_i) = P(c_i,c_i) =1$;
        \item for all $1 \leq i \leq n$, we have that $P(t_i,t_{i-1}) = \sigma^i(\vec{\kappa})_1$,
          $P(t_i,b_{i-1}) = \sigma^i(\vec{\kappa})_2$, and  $P(t_i,c_{i-1}) = 1 {-} \sigma^i(\vec{\kappa})_1 {-} \sigma^i(\vec{\kappa})_2$;
        \item $P(t_0,t_0) = 1$.
    \end{itemize}  
It is easy to check that $t_n \models \psi[c,d]$.

\section{Simulating Minsky machines with one counter}
\label{sec-counter}

For the rest of this section, we fix $q$, $I_q$, and $\vec{\kappa}$ in the same way as in Section~\ref{sec-geom}, but we additionally require that $\vec{\kappa}_1 + \vec{\kappa}_2 < q - \frac{1}{2}$. Furthermore, we fix a rational constant  $\gamma$ such that $(1{-}q) \vec{\kappa}_2 < \gamma < \frac{3}{4}q - \frac{5}{4}q + \frac{1}{2}q^2$. For example, we can put $q = \frac{13}{16}$, $\vec{\kappa}= (\frac{17}{64},\frac{1}{32})$, and $\gamma = 0.06$.
These additional assumptions are used in the proof of Theorem~\ref{thm-one-counter}.

Let $\M$ be a non-deterministic one-counter Minsky machine with $m$ instructions, and let $\Labels = \{\ell_i \mid 1\leq i \leq m\}$ be a set of fresh atomic propositions. Let $M = (S,P,v)$ be a Markov chain. We say that a state $t \in S$ \emph{represents a configuration $(i,n)$} of $\M$ iff $t \models \ell_i$, $t \not\models \ell_j$ for all $j \neq i$, and $\vec{v}[t] = \sigma^n(\vec{\kappa})$. Furthermore, we say that a state $s \in S$ \emph{simulates $\M$} if $s$ represents $(1,0)$ and every state $t$ reachable from $s$ satisfies the following condition: If $t$ represents a configuration $C$ of $\M$, then at least one immediate successor of $t$ represents a successor configuration of~$C$. Furthermore, for every immediate successor $t'$ of $t$ that does \emph{not} represent a successor of $C$ we have that $t' \models \opg_{=1} \bigwedge_{i=1}^m \neg\ell_i$.

% one of the following conditions:
% \begin{itemize}
%     \item $t$ and all immediate successors of $t$ satisfy $\bigwedge_{i=1}^m \neg\ell_i$.
%     \item $t$ represents a configuration $C$ of $\M$, and at least one immediate successor of $t$ represents a successor configuration of~$C$. Furthermore, for every immediate successor $t'$ of $t$ that does \emph{not} represent a successor of $C$ we have that $t'$ satisfies $\bigwedge_{i=1}^m \neg\ell_i$.
% \end{itemize}
Let $s \in S$ be a state simulating $\M$, and let $s_0,s_1,\ldots$ be a run such that $s_0 = s$ and every $s_i$ represents a configuration $C_i$ of $\M$. Then $C_0,C_1,\ldots$ is a computation of $\M$ \emph{covered by $s$}. Note that $s$ covers at least one but not necessarily all computations of~$\M$.
In this section, we prove the following theorem:

\begin{theorem}
\label{thm-one-counter}
Let $\M$ be a non-deterministic one-counter Minsky machine. Then there is an effectively constructible PCTL formula $\psi$ satisfying the following conditions:
\begin{itemize}
    \item[(A)] For every Markov chain $M$ and every state $s$ of $M$, we have that if $s \models \psi$, then $s$ simulates $\M$.
    \item[(B)] For every computation $\omega$ of $\M$, there exists a Markov chain $M$ and a state $s$ of $M$ such that $s \models \psi$ and $s$ covers $\omega$. 
    Furthermore, if $\omega$ is periodic, then $M$ has finitely many states.
\end{itemize}
\end{theorem}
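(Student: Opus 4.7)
The plan is to obtain $\psi$ by enriching the template $\psi(x,y)$ from Section~\ref{sec-large} with the instruction labels $\Labels$ and an instruction-dispatching subformula; reused components of $\psi(x,y)$ will be written in \textbf{boldface}. Every state $t$ representing a configuration $(i,n)$ will carry $\ell_i$ and also play the role of a transient (or, when $n{=}0$, final) state of a decrement chain, so that Theorem~\ref{thm-area} forces $\vec{v}[t]=\sigma^n(\vec{\kappa})$. Schematically,
\[
\psi \ \equiv \ \bInit[\vec{\kappa}_1,\vec{\kappa}_2]\wedge\ell_1\wedge\opg_{=1}\bigl(\bInvariant\vee\Run\bigr),
\]
where $\Run=\bigvee_{i=1}^{m}(\ell_i\wedge\Run_i)$. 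Each $\Run_i$ forces (via $\opx_{>0}$) the existence of at least one immediate successor carrying a label of a successor instruction of $\Ins_i$ with the correct characteristic vector, while requiring the remaining immediate successors to satisfy $\opg_{=1}\bigwedge_j\neg\ell_j$ so they cannot interfere with the simulation.

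The first step is to design $\Run_i$ for each instruction type. For a Type-II instruction, the zero-test predicate $\vec{v}[t]=\vec{\kappa}$ is locally witnessed by $\bZero$; in the zero branch I pick a $\vec{\kappa}$-final successor labelled in $L$, and in the nonzero branch (as well as in the Type-I decrement subcase) the required successor already exists as a Type-I successor of the transient state $t$ with characteristic vector $\tau(\vec{v}[t])$, so it suffices to direct one of them to a label in $L'$. The genuinely new ingredient is the increment gadget. To force a successor $s'$ of $t$ with $\vec{v}[s']=\sigma(\vec{v}[t])$, I demand that $s'$ itself satisfy the transient part of the chain (so $\bInterval$ pins $\vec{v}[s']_1\le\vec{\kappa}_1$ and $\bEq$ yields the two convex-combination identities of Theorem~\ref{thm-area}), and additionally that all Type-I successors of $s'$ share the characteristic vector of $t$. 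The first claim established inside the proof of Theorem~\ref{thm-area} then gives $\tau(\vec{v}[s'])=\vec{v}[t]$, and Lemma~\ref{lem-tausigma}(e) inverts this to $\vec{v}[s']=\sigma(\vec{v}[t])$ as required.

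For (A), given $s\models\psi$, I collect the set $T$ of all configuration-carrying states reachable from $s$, verify the hypotheses of Theorem~\ref{thm-area} on $T$ using $\bInterval$ and $\bEq$, and induct on distance from $s$: decrement and zero-test steps are covered directly by Theorem~\ref{thm-area}, and increment steps by the anchoring analysis above. For (B), given $\omega=C_0,C_1,\ldots$ I realize each $C_i=(j_i,n_i)$ by a state $t_i$ with $\ell_{j_i}$ and $\vec{v}[t_i]=\sigma^{n_i}(\vec{\kappa})$, attach the appropriate decrement chain and free-state infrastructure to $t_i$ as in Section~\ref{sec-proof-main-param}, and link $t_i$ to $t_{i+1}$ via the gadget for $\Ins_{j_i}$; a periodic $\omega$ is closed into a loop and all auxiliary sub-chains at repeated counter levels are shared, giving a finite model. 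The hard part will be the increment gadget: since PCTL cannot directly equate characteristic vectors of two states, the equality $\tau(\vec{v}[s'])=\vec{v}[t]$ must be obtained through the indirect anchoring above, and checking that this anchoring does not force any auxiliary characteristic vector outside $\Area(\vec{\kappa})$ is precisely what drives the tightened bounds on $\vec{\kappa}$ and the introduction of the new constant $\gamma$.
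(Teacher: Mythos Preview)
Your overall scaffolding matches the paper's, and you correctly isolate the increment gadget as the crux.  The gap is precisely there: you propose to ``additionally [require] that all Type-I successors of $s'$ share the characteristic vector of $t$'' and then invert $\tau$ via Lemma~\ref{lem-tausigma}(e).  But $\vec{v}[t]$ is a model-dependent quantity, not a constant you can write into a formula, and you give no PCTL mechanism that ties the $a$-successors of $s'$ back to $t$.  Saying ``indirect anchoring'' and naming $\gamma$ does not supply the mechanism; as written, nothing in your $\Run_i$ prevents $s'$ from having $a$-successors with an arbitrary characteristic vector in $\Area(\vec{\kappa})$, so the conclusion $\tau(\vec{v}[s'])=\vec{v}[t]$ does not follow.

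The paper's solution is different in kind.  It never tries to pin the $a$-successors of $s'$ to $\vec{v}[t]$; instead it works one level up, from $t$, and shows directly that $\sigma(\vec{v}[t])$ is a positive convex combination of $\{\vec{v}[t']\mid t'\in\Bs{t}\cup\Cs{t}\}$, then invokes Lemma~\ref{lem-gproperties}(a).  The first coordinate comes from a single constraint $\opf^2_{=1-q}(a\wedge S^3(r_i))$.  The second coordinate is the delicate part: two constraints $\opf^2_{=\gamma}\bigl((b\wedge S^4(r_i))\vee d\bigr)$ and $\opf^2_{=\gamma}\bigl((c\wedge S^4(r_i)\wedge e)\vee d\bigr)$ are imposed with the \emph{same} constant $\gamma$, each decomposing as a known term plus the common unknown $\m_t(R[\opf^2 d])$.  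The shared $d$-term cancels, yielding $\sum_{t'\in\Bs{t}\cup\Cs{t}}P(t,t')\vec{v}[t']_2=(1{-}q)\vec{v}[t]_2$, which after normalising by $1-\vec{v}[t]_1$ is exactly $\sigma(\vec{v}[t])_2$.  The new propositions $d,e$ exist solely to create this cancellable slack, and the bounds on $\gamma$ and $\vec{\kappa}$ are there to keep the corresponding transition probabilities (the $p_n$ of Rule~III) in $(0,1)$ in the model construction for part~(B).  This is the missing idea in your proposal.
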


\subsection{Constructing $\psi$}
\label{sec-one-counter-psi}

For the rest of this section, we fix a non-deterministic one-counter Minsky machine 
$\M \ \equiv\ 1: \Ins_1; \ \cdots  \ m: \Ins_m$. We show how to construct the formula $\psi$ of Theorem~\ref{thm-one-counter}.

Recall that the counter values $0,1,2,\ldots$ are represented by characteristic vectors $\vec{\kappa}$, $\sigma(\vec{\kappa})$, $\sigma^2(\vec{\kappa}),\ldots$. Hence, decrementing and incrementing the counter corresponds to performing the functions $\tau$ and $\sigma$, respectively. Testing the counter for zero is no issue because both components of $\vec{\kappa}$ are rational and can be directly used as probability constraints. The function $\tau$ is implemented in the same way as in the parameterized formula $\psi(x,y)$ constructed in Section~\ref{sec-large}. Consequently, some subformulae of $\psi(x,y)$ are re-used in~$\psi$, possibly after small adjustments. The main challenge is to implement the function~$\sigma$ and orchestrate everything so that Theorem~\ref{thm-area} is still applicable.

The set of propositions used in $\psi$ is $\A = A \cup \Labels \cup \{d,e\}$ where $A$ is the set of propositions occurring in $\psi(x,y)$ (see Section~\ref{sec-large}), and $d,e$ are fresh propositions used for implementing the function $\sigma$. For every $B \subseteq \A$, we use $\Ex{B}$ to denote the formula  $\bigwedge_{p \in B} p \wedge \bigwedge_{p \in \A \smallsetminus B} \neg p$. The intuitive meaning of $\Ex{B}$ is the same as of $\ex{B}$ defined in Section~\ref{sec-large-formula}. The only difference is that the set $A$ is replaced with the richer set~$\A$. 

The structure of $\psi$ closely resembles the structure of $\psi(x,y)$, and the overall intuition behind the subformulae stays essentially the same. We use the same identifiers for denoting adjusted versions of subformulae defined in Section~\ref{sec-large-formula}. If a subformula does not require any adjustment \emph{except for replacing every occurrence of $\ex{\cdot}$ with $\Ex{\cdot}$}, then we re-use this formula and write its identifier in \textit{\textbf{boldface}}. For example, $\bFsuc_i$ denotes the formula
\[
  \opx_{=1} (\Ex{h,a,S(r_i)} \vee \Ex{h,b,S^2(r_i)} \vee \Ex{h,c,S^2(r_i)})\,.   
\]

We put 
\begin{eqnarray*}
     \psi & \equiv &  \Init \ \wedge \ \opg_{=1} \Invariant   
\end{eqnarray*}
where
\begin{eqnarray*}
    \Init & \equiv & \Ex{a,r_0,\ell_1} \ \wedge \ \bZero\\
    \Invariant & \equiv & \bFin \vee \Transient \vee \bFree
\end{eqnarray*}
Note that if $s \models \psi$, then every state $t$ reachable form $s$ can again be classified as either \emph{final}, \emph{transient}, or \emph{free}, depending on whether $t$~satisfies $\bFin$, $\Transient$, or $\bFree$, respectively.  

%
% The formula $\Init$ says that the counter is initially zero and $\Ins_1$ will be simulated next. The formulae $\Free$ and $\Final$ are defined in the same way as $\bFree$ and $\bFinal$, except for using $\Ex{\cdot}$ instead of $\ex{\cdot}$.
% \begin{eqnarray*}
%     \Free & \equiv & h \wedge \bigvee_{B \subseteq \A} (\Ex{B} \wedge \opx_{=0} \neg \Ex{B})\\
%     \Final & \equiv & \bigvee_{i \in \{0,\ldots,4\}} \Ex{a,r_i} \wedge \bFsuc_i \wedge \bZero 
% \end{eqnarray*}
% The formula $\Final$ is defined as follows:
% % \[
% %     \Final \ \equiv \ \bFin \ \vee \ \bigvee_{i \in \{0,\ldots,4\}} (r_i \wedge \ell_{m{+}1} \wedge \bFsuc_i) 
% % \]
% \[
%     \Final \ \equiv \ \bigvee_{i \in \{0,\ldots,4\}} \Ex{a,r_i} \wedge \bFsuc_i \wedge \bZero 
% \]
% Thus, the set of final states is extended with the set of states simulating the $\Halt$ instruction of $\M$. Recall that the counter is zero when $\M$ reaches $\Halt$ (see Section~\ref{sec-Minsky}). 
%
Furthermore, we put
\[
    \Transient \ \equiv \ \bTrans \vee \CTrans \vee \LTrans \,.
\]   
As we shall see, every transient state $t$ satisfies precisely one of the formulae $\Ex{a,r_i}$, $c {\wedge} r_i {\wedge} \neg h$, or $\Ex{x,r_i,\ell}$, where $i \in \{0,\ldots,4\}$, \mbox{$\ell \in \Labels$}, and $x \in \{a,b\}$. The formulae $\bTrans$, $\CTrans$, and $\LTrans$ define the properties of~$t$ in the three respective cases. We define
\begin{eqnarray*}
    % \ATrans & \equiv & \bigvee_{i \in \{0,\ldots,4\}} \Ex{a,r_i} \wedge \bSuc_i \wedge \bInterval \wedge \bEq_i\\
    \CTrans &\equiv & \bigvee_{i \in \{0,\ldots,4\}} c \wedge r_i \wedge \neg h \wedge \Csuc_i \wedge \bInterval \wedge \bEq_i   
\end{eqnarray*}
where
\[
    \Csuc_i \equiv  \opx_{=1} \big( \Ex{a,S(r_i)} \vee \Ex{h,b,S^2(r_i)}
     \vee \ \Ex{h,c,S^2(r_i)}  \vee  \Ex{h,c,S^2(r_i),d} \big)  
\]
% \begin{eqnarray*}
%    \Csuc_i & \equiv & \opx_{=1} \big( \Ex{a,S(r_i)} \vee \Ex{h,b,S^2(r_i)}\\
%    && \hspace*{5ex} \vee \ \Ex{h,c,S^2(r_i)}  \vee  \Ex{h,c,S^2(r_i),d} \big)  
% \end{eqnarray*}
Furthermore, we define 
\begin{eqnarray*}   
    \LTrans & \equiv &\bigvee_{i \in \{0,\ldots,4\}} \bigvee_{\ell \in \Labels} \bigvee_{x \in \{a,b\}} 
    \  (\Ex{x,r_i,\ell} \wedge \Step_{i,\ell})
\end{eqnarray*}
where the formula $\Step_{i,\ell}$ is constructed as follows. Let $\Ins_j$ be the instruction associated with $\ell$, i.e., $\ell = \ell_j$. We distinguish two cases.
\bigskip

\noindent 
(A)  If $\Ins_j \equiv \textit{if } c{=}0 \textit{ then goto } L \textit{ else dec } c; \textit{ goto } L'$, then
    \begin{eqnarray*}
        \Step_{i,\ell} & \equiv & \hspace*{2ex} \left(\bZero \Rightarrow (\Zsuc_{i,\ell} \wedge \opx_{=1} (b \Rightarrow \bZero))\right)\\
        && \wedge \left( \neg\bZero \Rightarrow (\Psuc_{i,\ell} \wedge \bInterval \wedge \bEq_i ) \right)
    \end{eqnarray*}
    where 
    \begin{eqnarray*}
        \Zsuc_{i,\ell} & \equiv & \bigvee_{\ell' \in L} \opx_{=1} \big(\Ex{h,a,S(r_i)} \vee \Ex{h,c,S^2(r_i)} \vee \Ex{h,c,S^2(r_i),e}\\[-3ex]       
        && \hspace*{9ex} \vee~  \Ex{b,S^2(r_i),\ell'} \big) \ \wedge \ 
            \opx_{=1{-}q} \Ex{h,c,S^2(r_i),e}\\
%
    %    \Zsuc_{i,\ell} & \equiv & \opx_{=1} \big(\Ex{h,a,S(r_i)} \vee \Ex{h,c,S^2(r_i)} \vee \Ex{h,c,S^2(r_i),e}\\       
    %     && \hspace*{5ex} \vee \bigvee_{\ell' \in L} \Ex{b,S^2(r_i),\ell'} \big) \ \wedge \ \opx_{=1{-}q} \Ex{h,c,S^2(r_i),e}\\
        \Psuc_{i,\ell} & \equiv & \bigvee_{\ell' \in L'} \opx_{=1} \big(\Ex{h,b,S^2(r_i)} \vee \Ex{h,c,S^2(r_i)} \vee \Ex{h,c,S^2(r_i),e}\\[-3ex]
        &&  \hspace*{9ex} \vee~  \Ex{a,S(r_i),\ell'} \big) \ \wedge \ \opx_{=1{-}q} \Ex{h,c,S^2(r_i),e}  
        % \Psuc_{i,\ell} & \equiv & \opx_{=1} \big(\Ex{h,b,S^2(r_i)} \vee \Ex{h,c,S^2(r_i)} \vee \Ex{h,c,S^2(r_i),e}\\
        % &&  \hspace*{5ex} \vee \bigvee_{\ell' \in L'} \Ex{a,S(r_i),\ell'} \big) \ \wedge \ \opx_{=1{-}q} \Ex{h,c,S^2(r_i),e}  
    \end{eqnarray*}
    % and 
    % \begin{eqnarray*}
    %     \Psuc_{i,\ell,a} & \equiv & \opx_{=1} \big(\Ex{a,S(r_i),\ell_{u'}} \vee \Ex{h,b,S^2(r_i)} \vee \Ex{h,c,S^2(r_i)} \big)\\
    %     \Psuc_{i,\ell,b} & \equiv & \opx_{=1} \big(\Ex{a,S(r_i),\ell_{u'}} \vee \Ex{h,b,S^2(r_i)} \vee \Ex{h,c,S^2(r_i)}\\
    %     &&  \hspace*{5ex} \vee \Ex{h,c,S^2(r_i),e} \big) \ \wedge \ \opx_{=1{-}q} \Ex{h,c,S^2(r_i),e} 
    % \end{eqnarray*} 
\noindent
(B) If $\Ins_j \equiv \textit{inc } c; \textit{ goto } L$, then
\begin{eqnarray*}
    \Step_{i,\ell} & \equiv & 
    \hspace*{2ex} (\bZero \Rightarrow \IZsuc_{i,\ell})\\
    && \wedge \ (\neg\bZero \Rightarrow (\IPsuc_{i,\ell} \wedge \bInterval \wedge \bEq_i)\\
    && \wedge \ \opf^2_{=1{-}q} (a\wedge S^3(r_i))\\
    && \wedge \ \opf^2_{=\gamma} \big((b \wedge S^4(r_i) ) \vee d\big)\\
    && \wedge \ \opf^2_{=\gamma} \big((c\wedge S^4(r_i) \wedge e) \vee d \big)
\end{eqnarray*}
where $\gamma$ is the constant fixed at the beginning of Section~\ref{sec-counter} and
\begin{eqnarray*}
    % \IZsuc_{i,\ell,a} & \equiv & \opx_{=1} \big( \Ex{h,a,S(r_i)} \vee \Ex{b,S^2(r_i),\ell_u} \vee
    % \Ex{c,S^2(r_i)} \big)\\
    % \IPsuc_{i,\ell,a} & \equiv & \opx_{=1} \big( \Ex{a,S(r_i)} \vee \Ex{b,S^2(r_i),\ell_u} \vee
    % \Ex{c,S^2(r_i)} \big)\\
    \IZsuc_{i,\ell} & \equiv & \bigvee_{\ell' \in L} \opx_{=1} \big(\Ex{h,a,S(r_i)}  \vee
    \Ex{c,S^2(r_i)} \vee \Ex{c,S^2(r_i),e}\\[-3ex]
    && \hspace*{9ex}  \vee~  \Ex{b,S^2(r_i),\ell'} \big) \ \wedge \ \opx_{=1{-}q} \Ex{c,S^2(r_i),e}\\
%
    % \IZsuc_{i,\ell} & \equiv & \opx_{=1} \big(\Ex{h,a,S(r_i)}  \vee
    % \Ex{c,S^2(r_i)} \vee \Ex{c,S^2(r_i),e}\\
    % && \hspace*{5ex}  \vee \bigvee_{\ell' \in L} \Ex{b,S^2(r_i),\ell'} \big) \ \wedge \ \opx_{=1{-}q} \Ex{c,S^2(r_i),e}\\
    \IPsuc_{i,\ell} & \equiv & \bigvee_{\ell' \in L} \opx_{=1} \big(\Ex{a,S(r_i)}  \vee
    \Ex{c,S^2(r_i)} \vee \Ex{c,S^2(r_i),e}\\[-3ex]
    && \hspace*{9ex}  \vee~  \Ex{b,S^2(r_i),\ell'} \big) \ \wedge \ \opx_{=1{-}q} \Ex{c,S^2(r_i),e}
    % \IPsuc_{i,\ell} & \equiv & \opx_{=1} \big(\Ex{a,S(r_i)}  \vee
    % \Ex{c,S^2(r_i)} \vee \Ex{c,S^2(r_i),e}\\
    % && \hspace*{5ex}  \vee \bigvee_{\ell' \in L} \Ex{b,S^2(r_i),\ell'} \big) \ \wedge \ \opx_{=1{-}q} \Ex{c,S^2(r_i),e}
\end{eqnarray*}
This completes the construction of $\psi$.

Intuitively, if $t \models \Ex{x,r_i,\ell_j}$ and $\vec{v}[t]$ encodes the current counter value, then $\Step_{i,\ell_j}$ enforces the simulation of $\Ins_j$. More concretely,  
\begin{itemize}
    \item if $\Ins_j \equiv \textit{if } c{=}0 \textit{ then goto } L \textit{ else dec } c; \textit{ goto } L'$, then 
    \begin{itemize}
        \item if $\vec{v}[t] = \vec{\kappa}$ encodes zero, then each $t' \in \Bs{t}$ satisfies some $\ell' \in L$ and $\vec{v}[t']= \vec{\kappa}$. The simulation continues in the states of~$\Bs{t}$.
        \item if $\vec{v}[t]$ encodes a positive value, then each $t' \in \As{t}$ satisfies some $\ell' \in L'$ and $\vec{v}[t'] = \tau(\vec{v}[t])$. The counter is decremented, and the simulation continues in the states of~$\As{t}$.
    \end{itemize}
    \item If $\Ins_j \equiv \textit{inc } c; \textit{ goto } L$, then each $t' \in \Bs{t}$ satisfies some $\ell' \in L$. Furthermore, $\vec{v}[t'] = \sigma(\vec{v}[t])$ for all $t' \in \Bs{t} \cup \Cs{t}$ (this part is tricky). The counter is incremented and the simulation continues in the states of $\Bs{t}$. %The reason why $\vec{v}[t'] = \sigma(\vec{v}[t])$ also for $t' \in \Cs{t}$ is clarified in Section~\ref{sec-prop-one-counter-unbounded}.
\end{itemize}
Furthermore, all transient states~$t$ where $\vec{v}[t] \neq \vec{\kappa}$ satisfy the formula $\bInterval \wedge \bEq_i$ (for an appropriate $i$) so that the conditions of Theorem~\ref{thm-area} are fulfilled for the set of all transient and final states.% (see Section~\ref{sec-prop-one-counter-unbounded}).

\tikzstyle{state}=[draw,thick,minimum width=12mm,minimum height=12mm,rounded corners,text centered]
\begin{figure*}[t]\centering
\begin{tikzpicture}[x=1.8cm, y=1.7cm,scale=0.80,font=\small]
\node (t) at (-1,0) [state, draw=none] {\LARGE$\bullet$};
\node [above of = t,node distance=3ex] (pt) {$\Ex{x,r_i,\ell_j}$};
\node [left of = t, node distance=2ex] (lt) {$t$};
% \begin{scope}[shift={(8.5,-1)}]
%     \node (t1) at (0,0) [state] {};
%     \node [below of = t1] (pt1) {$\Ex{h,b,S^2(r_i)}$};
%     \node (t2) at (2,0) [state] {};
%     \node [below of = t2] (pt2) {$\Ex{h,c,S^2(r_i)}$};
% \end{scope}
\begin{scope}[shift={(6,0.5)}]
    \node (u) at (-2,-1.5) [state] {\LARGE$\bullet\ \cdots \ \bullet$};
    \node [above of = u, node distance=2.5ex] (lu) {$\As{t}$};
    \node [below of = u, xshift=-4ex, node distance=2ex] (luu) {$t_a'$};
    \node [below of = u, node distance=6.5ex] (pu) {$\Ex{a,S(r_i)}$};
    \node (z1) at (-5.2,-2.7) [state] {};
    \node [below of = z1] (pz1) {$a,S^2(r_i)$};
    \node (z2) at (-4,-2.7) [state] {};
    \node [below of = z2] (pz2) {$\Ex{h,b,S^3(r_i)}$};
    \node (z3) at (-2.8,-2.7) [state] {};
    \node [below of = z3] (pz3) {$\Ex{h,c,S^3(r_i)}$};
    \draw[-stealth,rounded corners,thick] (t.center) -| node[left, near end] {$\vec{v}[t]_1$} (u);
    \draw[-stealth,rounded corners,thick] ([xshift=-4ex]u.center) -| node[left, near end] {$\vec{v}[t_a']_1$} (z1);
    \draw[-stealth,rounded corners,thick] ([xshift=-4ex]u.center) -| node[left, near end] {$\vec{v}[t_a']_2$} (z2);
    \draw[-stealth,rounded corners,thick] ([xshift=-4ex]u.center) -| node[left, near end] {} (z3);
\end{scope}
\begin{scope}[shift={(12,0.5)}]
    \node (u) at (-2,-1.5) [state] {\LARGE$\bullet\ \cdots \ \bullet$};
    \node [above of = u, node distance=2.5ex] (lu) {$\Bs{t}$};
    \node [below of = u, xshift=-3.5ex, node distance=2ex] (luu) {$t_b'$};
    \node [below of = u, node distance=6.5ex] (pu) {$\Ex{b,S^2(r_i),\ell'}$};
    \node (z1) at (-6.4,-2.7) [state] {};
    \node [below of = z1] (pz1) {$a,S^3(r_i)$};
    \node (z2) at (-5.2,-2.7) [state] {};
    \node [below of = z2] (pz2) {$b,S^4(r_i)$};
    \node (z3) at (-4,-2.7) [state] {};
    \node [below of = z3] (pz3) {$c,S^4(r_i),\neg e$};
    \node (z4) at (-2.8,-2.7) [state] {};
    \node [below of = z4] (pz4) {$c,S^4(r_i),e$};
    \draw[-stealth,rounded corners,thick] (t.center) -| node[left, near end] {$\vec{v}[t]_2$} (u);
    \draw[-stealth,rounded corners,thick] ([xshift=-4ex]u.center) -| node[left, near end] {$\vec{v}[t_b']_1$} (z1);
    \draw[-stealth,rounded corners,thick] ([xshift=-4ex]u.center) -| node[left, near end] {$\vec{v}[t_b']_2$} (z2);
    \draw[-stealth,rounded corners,thick] ([xshift=-4ex]u.center) -| node[left, near end] {} (z3);   
    \draw[-stealth,rounded corners,thick] ([xshift=-4ex]u.center) -| node[left, near end] {$1{-}q$} (z4);
\end{scope}
\begin{scope}[shift={(6,-3)}]
    \node (u) at (-2,-1.5) [state] {\LARGE$\bullet\ \cdots \ \bullet$};
    \node [above of = u, node distance=2ex] (lu) {};
    \node [below of = u, xshift=-3.5ex, node distance=2ex] (luu) {$t_c'$};
    \node [below of = u, node distance=6.5ex] (pu) {$\Ex{c,S^2(r_i)}$};
    \node (z1) at (-6.4,-2.7) [state] {};
    \node [below of = z1] (pz1) {$\Ex{a,S^3(r_i)}$};
    \node (z2) at (-5.2,-2.7) [state] {};
    \node [below of = z2] (pz2) {$\Ex{h,b,S^4(r_i)}$};
    \node (z3) at (-4,-2.7) [state] {};
    \node [below of = z3] (pz3) {$\Ex{h,c,S^4(r_i)}$};
    \node (z4) at (-2.8,-2.7) [state] {};
    \node [below of = z4] (pz4) {$\Ex{h,c,S^4(r_i),d}$};
    \draw[-stealth,rounded corners,thick] (t.center) -- ++(.6,0) -- ++(0,-3.5) -|  (u);
    \draw[-stealth,rounded corners,thick] ([xshift=-4ex]u.center) -| node[left, near end] {$\vec{v}[t_c']_1$} (z1);
    \draw[-stealth,rounded corners,thick] ([xshift=-4ex]u.center) -| node[left, near end] {$\vec{v}[t_c']_2$} (z2);
    \draw[-stealth,rounded corners,thick] ([xshift=-4ex]u.center) -| node[left, near end] {} (z3);   
    \draw[-stealth,rounded corners,thick] ([xshift=-4ex]u.center) -|  (z4);
\end{scope}
\begin{scope}[shift={(12,-3)}]
    \node (u) at (-2,-1.5) [state] {\LARGE$\bullet\ \cdots \ \bullet$};
    \node [above of = u, node distance=2ex] (lu) {};
    \node [below of = u, xshift=-3ex, node distance=2ex] (luu) {$t_c''$};
    \node [below of = u, node distance=6.5ex] (pu) {$\Ex{c,S^2(r_i),e}$};
    \node (z1) at (-6.4,-2.7) [state] {};
    \node [below of = z1] (pz1) {$\Ex{a,S^3(r_i)}$};
    \node (z2) at (-5.2,-2.7) [state] {};
    \node [below of = z2] (pz2) {$\Ex{h,b,S^4(r_i)}$};
    \node (z3) at (-4,-2.7) [state] {};
    \node [below of = z3] (pz3) {$\Ex{h,c,S^4(r_i)}$};
    \node (z4) at (-2.8,-2.7) [state] {};
    \node [below of = z4] (pz4) {$\Ex{h,c,S^4(r_i),d}$};
    \draw[-stealth,rounded corners,thick] (t.center) -- ++(.6,0) -- ++(0,-3.5) -| node[left, near end] {$1-q$}  (u);
    \draw[-stealth,rounded corners,thick] ([xshift=-4ex]u.center) -| node[left, near end] {$\vec{v}[t_c'']_1$} (z1);
    \draw[-stealth,rounded corners,thick] ([xshift=-4ex]u.center) -| node[left, near end] {$\vec{v}[t_c'']_2$} (z2);
    \draw[-stealth,rounded corners,thick] ([xshift=-4ex]u.center) -| node[left, near end] {} (z3);   
    \draw[-stealth,rounded corners,thick] ([xshift=-4ex]u.center) -|  (z4);
\end{scope}

\end{tikzpicture}
\caption{The structure of transient states satisfying $\Ex{x,r_i,\ell_j}$ where $x \in \{a,b\}$ and $\Ins_j \equiv \textit{inc } c; \textit{ goto } u$}
\label{fig-transient-oc}
\end{figure*}
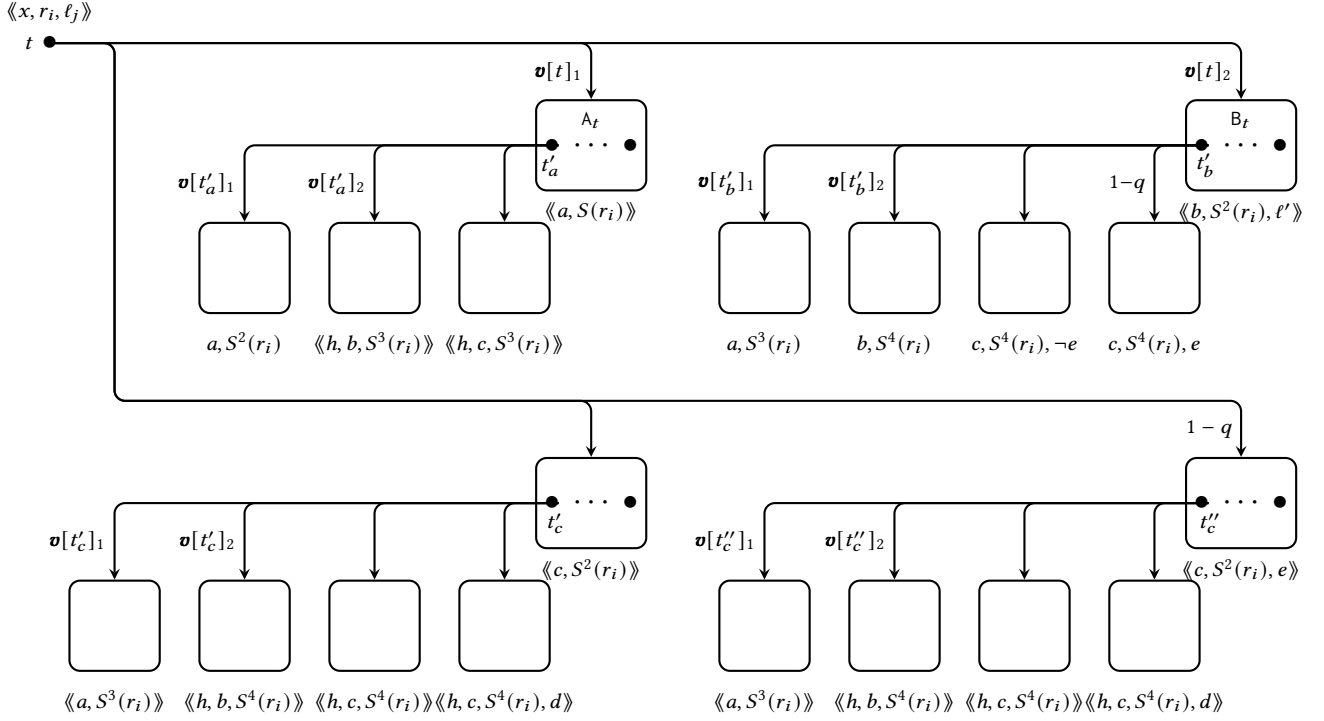

\subsection{A Proof of Theorem~\ref{thm-one-counter}}

Let $\M \ \equiv\ 1: \Ins_1; \ \cdots  \ m: \Ins_m$ be a non-deterministic one-counter Minsky machine, and let $\psi$ be the formula constructed for $\M$ in Section~\ref{sec-one-counter-psi}. The two claims of Theorem~\ref{thm-one-counter} are proven separately in the following subsections.

% \begin{proposition}
% \label{prop-one-counter-unbounded}
%     Let $M =(S,P,v)$ be a finite-state model of $\psi$. Then for every reachable configuration $(j,n)$ of $\M$, there  exists a state $t \in S$ such that $t \models \ell_j$ and $\vec{v}[t] = \sigma^n(\vec{\kappa})$.
% \end{proposition}

% \begin{proposition}
% \label{prop-one-counter-bounded}   
%     If $\M$ is bounded, then $\psi$ has a finite-state model.
% \end{proposition}

% Note that Proposition~\ref{prop-one-counter-unbounded} implies that if $\M$ is not bounded, then the formula $\psi$ does not have a finite-state model. 

\subsubsection{A Proof of Theorem~\ref{thm-one-counter}~(A)}
\label{sec-prop-one-counter-unbounded}

Let $M =(S,P,v)$ be Markov chain and $s \in S$ such that $s \models \psi$.  Let $T$ be the set of all transient states and all final states of $M$. One can easily verify that the conditions of Theorem~\ref{thm-area} are satisfied for $T$ by inspecting the structure of transient states similarly as in Section~\ref{sec-proof-main-param} (in particular, the formula $\bEq_i$ still implies Equalities~\eqref{eq-convex-one} and~\eqref{eq-convex-two}). Thus, by applying Theorem~\ref{thm-area}, we obtain 
\begin{itemize}
    \item[I.] For every $t \in T$, we have that $\vec{v}[t] \in \Area(\vec{\kappa})$.
    \item[II.] For every $t \in T$ where $\vec{v}[t] \neq \vec{\kappa}$ and every $t' \in \As{t}$, we have that $\vec{v}[t']  = \tau(\vec{v}[t])$.
\end{itemize}

We show that $s$ simulates $\M$. Since $s \models \Init$, we have that $s$~represents the configuration $(1,0)$. Let $t$ be a state reachable from $s$ such that $t$ 
% . Then $t$ satisfies exactly one the formulae $\bFin$, $\bFree$, $\bTrans$, $\CTrans$, and $\LTrans$. It follows directly from the construction of $\psi$ that if $t \not\models \LTrans$, then $t \models \opg_{=1} \bigwedge_{i=1}^m \neg\ell_i$. Now, suppose that $t$ 
represents a configuration $(j,n)$ of~$\M$, i.e., $t \models \ell_j$ and $\vec{v}[t] = \sigma^n(\vec{\kappa})$. Then $t \models \LTrans$ and hence also $t \models \Step_{i,\ell_j}$. We distinguish two cases.
\smallskip

\noindent
(A) $\Ins_j \equiv \textit{if } c{=}0 \textit{ then goto } L \textit{ else dec } c; \textit{ goto } L'$.\quad Then, 
\begin{eqnarray*}
    \Step_{i,\ell} & \equiv & \hspace*{2ex} \left(\bZero \Rightarrow (\Zsuc_{i,\ell} \wedge \opx_{=1} (b \Rightarrow \bZero))\right)\\
    && \wedge \left( \neg\bZero \Rightarrow (\Psuc_{i,\ell} \wedge \bInterval \wedge \bEq_i ) \right)
\end{eqnarray*}
and there are two subcases.
\begin{itemize}
    \item $n=0$.  Since $\vec{v}[t] = \sigma^n(\vec{\kappa})$, we obtain $t \models \bZero$. Hence, $t \models \Zsuc_{i,\ell} \wedge \opx_{=1} (b \Rightarrow \bZero)$. This implies that every $t' \in \Bs{t}$ satisfies $t' \models \Ex{b,S^2(r_i),\ell'} \wedge \bZero$ for some $\ell' \in L$. Hence, $t'$ represents a successor configuration of $(j,0)$. 
    Since $\vec{v}[t]_2 = \vec{\kappa}_2 > 0$, we have $\Bs{t} \neq \emptyset$, and hence at least one such $t'$ exists. Furthermore, all states of $\As{t} \cup \Cs{t}$ are free and hence they satisfy the formula $\opg_{=1} \bigwedge_{i=1}^m \neg\ell_i$.
    \item $n > 0$. Since $\vec{v}[t] = \sigma^n(\vec{\kappa})$, we obtain $t \models \neg\bZero$ and hence $t \models \Psuc_{i,\ell} \wedge \bInterval \wedge \bEq_i$. The formula $\Psuc_{i,\ell}$ ensures that every $t' \in \As{t}$  satisfies $t' \models \Ex{a,S(r_i),\ell'}$ for some \mbox{$\ell' \in L'$}. Furthermore, we have that $\vec{v}[t'] = \tau(\sigma^n(\vec{\kappa})) =  \sigma^{n-1}(\vec{\kappa})$ by Observation~II.~above. Hence, $t'$ represents a successor configuration of $(j,n)$. Since $\vec{v}[t]_1 = \sigma^n(\vec{\kappa})_1 > 0$, at least one such $t'$ must exist. Note that all states of $\Bs{t} \cup \Cs{t}$ satisfy the formula $\opg_{=1} \bigwedge_{i=1}^m \neg\ell_i$.
\end{itemize}

\noindent
(B) $\Ins_j \equiv \textit{inc } c; \textit{ goto } L$.\quad Then, %Let us first consider the case $x = b$, i.e., 
\begin{eqnarray*}
    \Step_{i,\ell} & \equiv & 
    \hspace*{2ex} (\bZero \Rightarrow \IZsuc_{i,\ell})\\
    && \wedge \ (\neg\bZero \Rightarrow (\IPsuc_{i,\ell} \wedge \bInterval \wedge \bEq_i)\\
    && \wedge \ \opf^2_{=1{-}q} (a\wedge S^3(r_i))\\
    && \wedge \ \opf^2_{=\gamma} \big((b \wedge S^4(r_i) ) \vee d\big)\\
    && \wedge \ \opf^2_{=\gamma} \big((c\wedge S^4(r_i) \wedge e) \vee d \big)
\end{eqnarray*}
% where
% \begin{eqnarray*}
%     % \IZsuc_{i,\ell} & \equiv & \opx_{=1} \big(\Ex{h,a,S(r_i)} \vee \Ex{b,S^2(r_i),\ell_u} \vee
%     % \Ex{c,S^2(r_i)}\\
%     % && \hspace*{5ex}  \vee \Ex{c,S^2(r_i),e} \big) \ \wedge \ \opx_{=1{-}q} \Ex{c,S^2(r_i),e}\\
%     % \IPsuc_{i,\ell} & \equiv & \opx_{=1} \big(\Ex{a,S(r_i)} \vee \Ex{b,S^2(r_i),\ell_u} \vee
%     % \Ex{c,S^2(r_i)}\\
%     % && \hspace*{5ex}  \vee \Ex{c,S^2(r_i),e} \big) \ \wedge \ \opx_{=1{-}q} \Ex{c,S^2(r_i),e}
%     \IZsuc_{i,\ell} & \equiv & \opx_{=1} \big(\Ex{h,a,S(r_i)}  \vee
%     \Ex{c,S^2(r_i)} \vee \Ex{c,S^2(r_i),e}\\
%     && \hspace*{5ex}  \vee \bigvee_{\ell' \in L} \Ex{b,S^2(r_i),\ell'} \big) \ \wedge \ \opx_{=1{-}q} \Ex{c,S^2(r_i),e}\\
%     \IPsuc_{i,\ell} & \equiv & \opx_{=1} \big(\Ex{a,S(r_i)}  \vee
%     \Ex{c,S^2(r_i)} \vee \Ex{c,S^2(r_i),e}\\
%     && \hspace*{5ex}  \vee \bigvee_{\ell' \in L} \Ex{b,S^2(r_i),\ell'} \big) \ \wedge \ \opx_{=1{-}q} \Ex{c,S^2(r_i),e}
% \end{eqnarray*}   
 
The structure of immediate successors of $t$ enforced by this formula is shown in Fig.~\ref{fig-transient-oc}. The formulae $\IZsuc_{i,\ell}$ and $\IPsuc_{i,\ell}$ ensure that for every $t' \in \Bs{t}$, we have that $t' \models \Ex{b,S^2(r_i),\ell'}$ where $\ell' \in L$. 
Observe that $\Bs{t} \neq \emptyset$ and all states of $\As{t} \cup \Cs{t}$ satisfy the formula $\opg_{=1} \bigwedge_{i=1}^m \neg\ell_i$. Hence, it remains to show that $\vec{v}[t'] = \sigma(\vec{v}[t])$ for all $t' \in \Bs{t}$. We prove a stronger result saying that
$\vec{v}[t'] = \sigma(\vec{v}[t])$ for all $t' \in \Bs{t} \cup \Cs{t}$.

Due to Observation~I above and Lemma~\ref{lem-gproperties}.A, it suffices to prove that $\sigma(\vec{v}[t])$ is a positive convex combination of the vectors in $\{\vec{v}[t'] \mid t' \in \Bs{t} \cup \Cs{t}\}$. For a path formula $\Phi$, we use $R[\Phi]$ to denote the set of all $w \in \run(t)$ such that $w \models \Phi$.

Since $t \models \opf^2_{=1{-}q} (a\wedge S^3(r_i))$, we have $\m_t(R[\opf^2(a\wedge S^3(r_i))]) = 1{-}q$.
By inspecting the structure of immediate successors of~$t$ (see Fig.~\ref{fig-transient-oc}), we obtain
\begin{equation}
   1{-}q \ = \ \m_t(R[\opf^2(a\wedge S^3(r_i))]) \ = \ \sum_{t' \in \Bs{t} \cup \Cs{t}} P(t,t') \cdot \vec{v}[t']_1\,.
\label{eq-sigma-one}   
\end{equation}
Observe that $\sum_{t' \in \Bs{t} \cup \Cs{t}} P(t,t') = 1 {-} \vec{v}[t]_1$. Hence, \eqref{eq-sigma-one} implies
\begin{equation}
   \frac{1{-}q}{1 {-} \vec{v}[t]_1} \ =  \ \sum_{t' \in \Bs{t} \cup \Cs{t}} \frac{P(t,t')}{1 {-} \vec{v}[t]_1} \cdot \vec{v}[t']_1 \,.
\label{eq-sigma-first}
\end{equation}
Note that the left-hand side of~\eqref{eq-sigma-first} is equal to $\sigma(\vec{v}[t])_1$. 

Furhermore, $t \models \opf^2_{=\gamma} \big((b \wedge S^4(r_i) ) \vee d\big)$. By inspecting the structure of immediate successors of~$t$ (see Fig.~\ref{fig-transient-oc}), we obtain\footnote{Here, by writing $C = A \uplus B$ we mean that $C = A \cup B$ and $A \cap B = \emptyset$.}
\begin{eqnarray}
   \gamma & = & \m_t(R[\opf^2((b \wedge S^4(r_i) ) \vee d)]) \nonumber \\
       & = & \m_t(R[\opf^2(b \wedge S^4(r_i))] \uplus R[\opf^2(d)]) \nonumber \\
       & = & \m_t(R[\opf^2(b \wedge S^4(r_i))]) \  \ + \ \  \m_t(R[\opf^2(d)]) \nonumber \\
       & = & \sum_{t' \in \Bs{t} \cup \Cs{t}} P(t,t') \cdot \vec{v}[t']_2 \ \ + \ \ \m_t(R[\opf^2(d)])
\label{eq-sigma-F1}
\end{eqnarray}

Similarly, $t \models \opf^2_{=\gamma} \big((c\wedge S^4(r_i) \wedge e) \vee d \big)$ implies
\begin{eqnarray}
    \gamma & = & \m_t(R[\opf^2((c\wedge S^4(r_i) \wedge e) \vee d)]) \nonumber \\
        & = & \m_t(R[\opf^2(c\wedge S^4(r_i) \wedge e)] \uplus R[\opf^2(d)]) \nonumber \\
        & = & \m_t(R[\opf^2(c \wedge S^4(r_i) \wedge e)]) \  \ + \ \  \m_t(R[\opf^2(d)]) \nonumber \\
        & = & \sum_{t' \in \Bs{t}} P(t,t') \cdot (1{-}q) \ \ + \ \ \m_t(R[\opf^2(d)]) \nonumber \\
        & = & \vec{v}[t]_2 \cdot (1{-}q) \ \ + \ \ \m_t(R[\opf^2(d)])
 \label{eq-sigma-F2}
\end{eqnarray}

Since the right-hand sides of \eqref{eq-sigma-F1} and \eqref{eq-sigma-F2} are equal, we have that
\begin{equation*}
    \vec{v}[t]_2 \cdot (1{-}q) = \sum_{t' \in \Bs{t} \cup \Cs{t}} P(t,t') \cdot \vec{v}[t']_2
\end{equation*}
Hence,
\begin{equation}
    \frac{\vec{v}[t]_2 \cdot (1{-}q)}{1 {-} \vec{v}[t]_1} = \sum_{t' \in \Bs{t} \cup \Cs{t}} \frac{P(t,t')}{1 {-} \vec{v}[t]_1} \cdot \vec{v}[t']_2
\label{eq-sigma-second}
\end{equation}
Observe that the left-hand side of~\eqref{eq-sigma-second} is equal to $\sigma(\vec{v}[t])_2$. By combining~\eqref{eq-sigma-first} and~\eqref{eq-sigma-second}, we finally obtain
\[
    \sigma(\vec{v}[t]) \ = \ \sum_{t' \in \Bs{t} \cup \Cs{t}} \frac{P(t,t')}{1 {-} \vec{v}[t]_1} \cdot \vec{v}[t']\,.
\] 

%The case when $x = a$ is similar and does not require any new insights. 
\subsubsection{A Proof of Theorem~\ref{thm-one-counter}~(B)}
\label{sec-proof-prop-one-counter-bounded}

For the rest of this proof, we fix a computation $\omega = C_0,C_1,\ldots$ of $\M$. Furthermore, if $\omega$ is periodic, we fix $\alpha,\beta$ such that $\alpha < \beta$ and the infinite sequences $C_{\alpha-1},C_{\alpha},C_{\alpha+1},\ldots$ and $C_{\beta-1},C_{\beta},C_{\beta+1},\ldots$ are the same. If $\omega$ is not periodic, we put $\beta = \infty$ and leave $\alpha$ undefined.

Observe that if $\omega$ is periodic, then our choice of $\alpha$ and $\beta$ ensures that the infinite sequences $C_{\alpha},C_{\alpha+1},\ldots$ and $C_\beta,C_{\beta+1},\ldots$ are also the same, and the compuational steps $C_{\alpha-1} \mapsto C_{\alpha}$ and  $C_{\beta-1} \mapsto C_{\beta}$ are generated by the same instruction of $\M$. As we shall see, this ensures that $C_\alpha$ and $C_\beta$ are represented by the same state in the constructed model of $\psi$. 

We show that there exist a Markov chain $M = (S,P,v)$ and a state $s \in S$ such that $s \models \psi$ and $s$ covers $\omega$. Furthermore, if $\beta < \infty$, then $S$ is finite.

Every state of $S$ is a triple of the form $[\iota,\calL,n]$ where $\iota$ is an index ranging over $\{i \in \N \mid 0\leq i < \beta\} \cup \{\star\}$, $\calL \subseteq \A$ is the set of atomic propositions satisfied in the state, and $n \in \N \cup \{\star\}$ is a counter value. The $\star$ symbol is used when the index or the counter value (or both) are not relevant. 

The Markov chain $M$ is the least Markov chain $M'$ such that $s \equiv [0,\{a,r_0,\ell_1\},0]$ is a state of $M'$, and if $t$ is a state of $M'$ then $M'$ contains all immediate successors of $t$ defined by the following rules:
\smallskip

\noindent
\textbf{Rule I.} If $t = [k,\{x,r_i,\ell_j\},n]$ where $x \in \{a,b\}$, $C_{k} = (j,n)$, and $k < \beta$, then the immediate successors of $t$ are determined as follows. First, let $k'$ be either $\alpha$ or $k+1$ depending on whether $k=\beta - 1$ or $k < \beta -1$, respectively.
Furthermore, let $C_{k'} = (j',n')$, and let $t'$ be either $[k',\{a,S(r_i),\ell_{j'}\},n']$ or $[k',\{b,S^2(r_i),\ell_{j'}\},n']$, depending on whether $n' = n-1$ or $n' \geq n$, respectively. 

Now we distinguish four possibilities (the cases when $\Ins_j$ is a Type~II and Type~I instruction are covered by~(A)--(B) and (C)--(D), respectively. In each case, we distinguish between zero and positive counter values represented by the~$n$).
\smallskip

\noindent
\textbf{(A)} $n' = n = 0$. Then,
\begin{eqnarray*}
    P(t,[\star,\{h,a,S(r_i)\},\star]) & = & \vec{\kappa}_1,\\
    P(t,t') & = & \vec{\kappa}_2,\\
    P(t,[\star,\{h,c,S^2(r_i),e\},\star]) & = & 1 {-} q,\\
    P(t,[\star,\{h,c,S^2(r_i)\},\star]) & = & q {-} \vec{\kappa}_1 {-} \vec{\kappa}_2 \,.
\end{eqnarray*}

\noindent
\textbf{(B)} $n' = n-1$. We put
\begin{eqnarray*}
    P(t,t') & = & \sigma^n(\vec{\kappa})_1,\\
    P(t,[\star,\{h,b,S^2(r_i)\},\star]) & = & \sigma^n(\vec{\kappa})_2,\\ 
    P(t,[\star,\{h,c,S^2(r_i),e\},\star]) & = & 1 {-} q,\\
    P(t,[\star,\{h,c,S^2(r_i)\},\star]) & = & q {-} \sigma^n(\vec{\kappa})_1 {-} \sigma^n(\vec{\kappa})_2 \,.
\end{eqnarray*}

\noindent
\textbf{(C)} $n' = 1$ and $n = 0$. Then,
\begin{eqnarray*}
    P(t,[\star,\{h,a,S(r_i)\},\star]) & = & \vec{\kappa}_1,\\
    P(t,t') & = & \vec{\kappa}_2,\\
    P(t,[\star,\{c,S^2(r_i),e\},1]) & = & 1 {-} q,\\
    P(t,[\star,\{c,S^2(r_i)\},1]) & = & q {-} \vec{\kappa}_1 {-} \vec{\kappa}_2 
\end{eqnarray*}

\noindent
\textbf{(D)} $n' = n+1$ and $n > 0$. Then,
\begin{eqnarray*}
    P(t,[\star,\{a,S(r_i)\},n{-}1]) & = & \sigma^n(\vec{\kappa})_1,\\
    P(t,t') & = & \sigma^n(\vec{\kappa})_2,\\
    P(t,[\star,\{c,S^2(r_i),e\},n{+}1]) & = & 1{-}q,\\
    P(t,[\star,\{c,S^2(r_i)\},n{+}1]) & = & q {-} \sigma^n(\vec{\kappa})_1 {-} \sigma^n(\vec{\kappa})_2 
\end{eqnarray*}

\noindent
\textbf{Rule II.} If $t = [\star,\{a,r_i\},n]$ where $n \geq 0$, we have the following:
\begin{eqnarray*}
    \mbox{if $n = 0$, then}\hspace*{1.2em} P(t,[\star,\{h,a,S(r_i)\},\star]) & = & 
       \sigma^n(\vec{\kappa})_1 = \vec{\kappa}_1,\\
    \mbox{if $n > 0$, then}\hspace*{1em} P(t,[\star,\{a,S(r_i)\},n{-}1]) & = & \sigma^n(\vec{\kappa})_1,\\
    P(t,[\star,\{h,b,S^2(r_i)\},\star]) & = & \sigma^n(\vec{\kappa})_2,\\
    P(t,[\star,\{h,c,S^2(r_i)\},\star]) & = & 1 {-} \sigma^n(\vec{\kappa})_1 {-} \sigma^n(\vec{\kappa})_2 
\end{eqnarray*}

\noindent
\textbf{Rule III.} If $t = [\star,\calL,n]$ where $n>0$ and $\calL$ is $\{c,r_i\}$ or $\{c,r_i,e\}$, then
\begin{eqnarray*}
    P(t,[\star,\{a,S(r_i)\},n{-}1))         & = & \sigma^n(\vec{\kappa})_1,\\
    P(t,[\star,\{h,b,S^2(r_i)\},\star)      & = & \sigma^n(\vec{\kappa})_2,\\
    P(t,[\star,\{h,c,S^2(r_i),d\},\star)    & = & p_n,\\
    P(t,[\star,\{h,c,S^2(r_i)\},\star)      & = & 1 - \sigma^n(\vec{\kappa})_1 - \sigma^n(\vec{\kappa})_2 - p_n
\end{eqnarray*}
where 
\[
     p_n = \frac{\gamma - (1{-}q) \sigma^{n-1}(\vec{\kappa})_2}{1 - \sigma^n(\vec{\kappa})_1 - \sigma^n(\vec{\kappa})_2}    
\]
% Note that $S$ does \emph{not} contain any state of the form $[\star,\calL,0]$ where $\calL$ is $\{c,r_i\}$ or $\{c,r_i,e\}$. Also 
Note that $0 < p_n < 1 - \sigma^n(\vec{\kappa})_1 - \sigma^n(\vec{\kappa})_2$ due to the constraints imposed on $\gamma$ and $\vec{\kappa}$ at the beginning of Section~\ref{sec-counter}. More precisely, the constraint $\gamma > (1{-}q) \vec{\kappa}_2$ ensures that $p_n > 0$, and the constraint $\gamma < \frac{3}{4}q - \frac{5}{4}q + \frac{1}{2}q^2$ ensures that 
$2p_n < 1 - (2 (q{-}\frac{1}{2})+ (1{-}q))$ (recall that $\vec{\kappa}_1 + \vec{\kappa}_1 < q - \frac{1}{2}$ and $\sigma^{m}(\vec{\kappa})$ is strictly less than $\vec{\kappa}$ in both components for every $m \geq 0$). This is more than we need here; the constraints are chosen so that they also satisfy stronger requirements needed in the proof of Theorem~\ref{thm-two-counter}.
\smallskip

\noindent
\textbf{Rule IV.} If $t =[\star,\calL,\star]$, then $P(t,t) = 1$. 
\smallskip

Note that if $\omega$ is periodic, then $M$ has finitely many states. It is easy to check that $s \models \psi$ by verifying that every state of $M$ satisfies either $\bFin$, $\Transient$, or $\bFree$. In particular, every state of the form $[\iota,\{b,r_i,\ell_j\},n]$ where $\Ins_j$ is a Type~I instruction satisfies
$\opf^2_{=\gamma} \big((b \wedge S^4(r_i) ) \vee d\big)$ and
$\opf^2_{=\gamma} \big((c\wedge S^4(r_i) \wedge e) \vee d \big)$. This is where we need the constant $p_n$ of Rule~III. Clearly, $s$ covers the computation~$\omega$ (and no other computation).

\section{Simulating Minsky Machines with Two Counters}
\label{sec-Minsky-simulation}

In this section, we show how to simulate non-deterministic two-counter Minsky machines by  PCTL formulae. Technically, we construct a PCTL formula simulating a synchronized products of two non-deterministic one-counter Minsky machines defined in the next paragraph.

Let 
\begin{eqnarray*}
   \M_1 & \equiv & 1:\Ins_1^1;\cdots m: \Ins_m^1;\\ 
   \M_2 & \equiv & 1:\Ins_1^2;\cdots m: \Ins_m^2;
\end{eqnarray*}
be nondetermnistic one-counter Minsky machines with $m$ instructions and $I = (I_1,I_2)$ a partition of $\{1,\ldots,m\}$, i.e., $I_1 \cup I_2 = \{1,\ldots,m\}$ and $I_1 \cap I_2 = \emptyset$. An \emph{$I$-synchronized product} of $\M_1,\M_2$, denoted by $\M_1 \times_I \M_2$, is an automaton\footnote{$\M_1 \times_I \M_2$ is a non-standard computational model introduced specifically for purposes of this paper. Encoding the computation of $\M_1 \times_I \M_2$ by a PCTL formula is substantially easier than encoding the computation of a two-counter Minsky machine.} operating over two counters in the following way.

A \emph{configuration} of $\M_1 \times_I \M_2$ is a triple $(j,n_1,n_2)$ where $j \in \{1,\ldots,m\}$ and $n_1,n_2 \in \N$ are counter values. For every configuration $(j,n_1,n_2)$ of $\M_1 \times_I \M_2$, the \emph{successor} configurations $(j',n_1',n_2')$ are determined as follows (recall that $\mapsto$ denotes the ``standard'' computational step of a Minsky machine, see Section~\ref{sec-Minsky}):

\begin{itemize}
    \item If $j \in I_1$, then $j',n_1',n_2'$ are integers satisfying $(j,n_1) \mapsto (j',n_1')$ in $\M_1$ and $(j,n_2) \mapsto (j'',n_2')$ in $\M_2$ (for some $j'' \in \{1,\ldots,m\}$).
    \item If $j \in I_2$, then $j',n_1',n_2'$ are integers satisfying $(j,n_1) \mapsto (j'',n_1')$ in $\M_1$ and $(j,n_2) \mapsto (j',n_2')$ in $\M_2$ (for some $j'' \in \{1,\ldots,m\}$).
\end{itemize}

In other words, $\M_1 \times_I \M_2$ simultaneously executes the instructions $\Ins^1_j$ and $\Ins^2_j$ operating on the first and the second counter, and the next $j'$ is determined by either $\Ins^1_j$ or $\Ins^2_j$, depending on whether $j \in I_1$ or $j \in I_2$, respectively.
Note that the counter values $n_1'$ and $n_2'$ are the \emph{same} in every successor configuration of $(j,n_1,n_2)$.

We write $(j,n_1,n_2) \leadsto (j',n_1',n_2')$ when $(j',n_1',n_2')$ is a successor of $(j,n_1,n_2)$. A \emph{computation} of $\M_1 \times_I \M_2$ is an infinite sequence of configurations $\omega \equiv D_0,D_1,\ldots$ such that $D_i \leadsto D_{i+1}$ for all $i \in \N$. The boundedness and the recurrent reachability problems for \mbox{$\M_1 \times_I \M_2$} are defined in the same way as for non-deterministic Minsky machines (see Section~\ref{sec-Minsky}). 

A synchronized product of two one-counter Minsky machines can faithfully simulate a two-counter Minsky machine. Thus, we obtain the following:

\begin{restatable}{proposition}{Minskyproduct}
\label{prop-product}
    The boundedness problem for a synchronized product of two deterministic one-counter Misky machines is $\Sigma_1^0$-hard. The recurrent reachability problem for a synchronized product of two non-deterministic one-counter Misky machines is $\Sigma_1^1$-hard.
\end{restatable}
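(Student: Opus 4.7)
The plan is to reduce from the two-counter Minsky machine problems recalled at the end of Section~\ref{sec-Minsky}: boundedness for deterministic machines (which is $\Sigma_1^0$-hard) and recurrent reachability for non-deterministic machines (which is $\Sigma_1^1$-hard). Given a two-counter Minsky machine $\M$ with instructions $1{:}\Ins_1;\ldots;m{:}\Ins_m$, I will construct one-counter machines $\M_1,\M_2$ and a partition $I = (I_1, I_2)$ so that $\M_1 \times_I \M_2$ simulates $\M$ under the encoding $c_1 = 2 c_1^\M$, $c_2 = 2 c_2^\M$, with each $\M$-instruction implemented by a fixed two-step gadget in the product.

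Each gadget for an $\M$-instruction $\Ins_i$ operating on counter $c_k^\M$ will use $i$ together with a fresh auxiliary label $\bar\imath$, both placed in $I_k$. The \emph{active} machine $\M_k$ performs the real work in two micro-steps: for an \textit{inc}-instruction, two successive increments (increasing $c_k$ by $2$); for a test-or-decrement, a first test-or-dec routing control to either $\bar\imath$ (positive branch) or an auxiliary zero-branch label $Z_i$ (zero branch), followed by a second test-or-dec that further decrements by one and branches into the appropriate set $L$ or $L'$ of $\Ins_i$. The \emph{passive} machine $\M_{3-k}$ performs \textit{inc} on its own counter at label $i$ and \textit{test-or-dec} at $\bar\imath$ and at $Z_i$; because the first-step \textit{inc} guarantees the passive counter is positive, the subsequent \textit{test-or-dec} deterministically decrements it back, so the gadget preserves $c_{3-k}$ exactly. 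Non-deterministic successor sets $L, L'$ of $\M$ are copied verbatim onto the second micro-step of $\M_k$, so that singleton $L,L'$ (the deterministic case of $\M$) yield deterministic $\M_1,\M_2$.

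A straightforward induction on simulation length will then establish the invariant $(c_1, c_2) = (2 c_1^\M, 2 c_2^\M)$ at every \emph{even} product step, giving a bijective correspondence between computations of $\M_1 \times_I \M_2$ and computations of $\M$ (after forgetting the intermediate micro-step). This correspondence preserves periodicity (so $\M$ is bounded iff $\M_1 \times_I \M_2$ is bounded) and the property of visiting label~$1$ infinitely often (so the recurrent reachability problems coincide), giving both hardness claims. Any hypothetical violation of the invariant is routed to an unreachable error label equipped with a self-loop.

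The main obstacle is simulating a single $\M$-step while keeping the passive counter intact: a single product step cannot achieve this uniformly, since both \textit{inc} and \textit{test-or-dec} modify the counter unless it is already $0$. The two-step gadget with the \textit{inc}-then-\textit{test-or-dec} pattern resolves this by exploiting the determinism of \textit{test-or-dec} on a guaranteed-positive counter to provide the needed ``controlled no-op'' on the passive machine, while the factor-of-two scaling ensures that each zero-test of $\M$ on $c_k^\M$ is faithfully detected by two consecutive zero-tests of $\M_k$ on $c_k$ in the active branch.
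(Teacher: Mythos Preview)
Your reduction is correct and complete, but it differs from the paper's construction in a clean way worth noting.

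The paper also reduces from two-counter Minsky machines, but it does \emph{not} rescale the counters. Instead it adds to each label a flag in $\{0,+\}$ recording whether the passive counter has just been incremented; the passive machine alternates \textit{inc}/\textit{dec} across consecutive steps, and when control switches from one counter to the other an auxiliary instruction (from a pre-pass that adds labels $m{+}j$ and $2m{+}j$ to the original machine) performs one extra \textit{dec} to restore the passive counter before the new active machine starts working. The number of product steps per $\M$-step is therefore not uniform: it depends on whether the next instruction stays on the same counter or switches.

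Your construction trades this flag bookkeeping for the factor-of-two encoding $c_k = 2c_k^{\M}$. That buys you a uniform two-step gadget per $\M$-instruction and a guaranteed-positive passive counter at the second micro-step, so the passive \textit{inc}-then-\textit{dec} is always a clean no-op and no switch-handling is needed. It also makes the bijection between $\M$-computations and even-step product computations, and hence the preservation of periodicity and of infinitely many visits to label~$1$, entirely transparent. The paper's approach avoids the scaling and keeps the counter values identical to those of~$\M$, at the price of the extra flag and the restore step; for the purposes of Proposition~\ref{prop-product} neither choice has any advantage over the other, and your argument is arguably the more elementary of the two.
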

 
%also reveals that we can safely assume that both counters are zero when $\M_1 \times_I \M_2$ reaches a halting configuration $(m{+}1,n_1,n_2)$.

Let $\M_1 \times_I \M_2$ be a synchronized product of two non-deterministic one-counter Misky machines with~$m$ instructions. Recall the set $\A$ defined in Section~\ref{sec-one-counter-psi}. For $k \in \{1,2\}$, let $\A^k = \{p^k \mid p \in \A\}$ be a set of atomic propositions such that $\A^1 \cap \A^2 = \emptyset$.

Let $M = (S,P,v)$ be a Markov chain. For all $t \in S$ and $k \in \{1,2\}$, we use $\vec{v}^k[t] \in [0,1]^2$ to denote the $k$-th characteristic vector of $t$ where
\begin{itemize}
    \item $\vec{v}^k[t]_1$ is the probability of satisfying the path formula $\opx a^k$ in the state~$t$;
    \item $\vec{v}^k[t]_2$ is the probability of satisfying the path formula $\opx b^k$ in the state~$t$.
\end{itemize}
A state $t \in S$ \emph{represents a configuration $(i,n_1,n_2)$} of $\M_1 \times_I \M_2$ iff $t \models \ell^1_i \wedge \ell^2_i$, $t \not\models \ell_j^1 \wedge \ell_j^2$ for all $j \neq i$, $\vec{v}^1[t] = \sigma^{n_1}(\vec{\kappa})$, and $\vec{v}^2[t] = \sigma^{n_2}(\vec{\kappa})$. Furthermore, we say that a state $s \in S$ \emph{simulates $\M_1 \times_I \M_2$} if $s$ represents $(1,0,0)$ and every state $t$ reachable from~$s$ satisfies the following condition: If $t$ represents a configuration $D$ of $\M_1 \times_I \M_2$, then at least one immediate successor of $t$ represents a successor configuration of~$D$. Furthermore, for every immediate successor $t'$ of $t$ that does \emph{not} represent a successor of $D$ we have that $t' \models \opg_{=1} \bigwedge_{i=1}^m \neg(\ell_i^1 \wedge \ell_i^2)$. 

Let $s \in S$ be a state simulating $\M_1 \times_I \M_2$, and let $s_0,s_1,\ldots$ be a run of $M$ such that $s_0 = s$ and every $s_i$ represents a configuration $D_i$ of $\M_1 \times_I \M_2$. Then $D_0,D_1,\ldots$ is a computation of $\M_1 \times_I \M_2$ \emph{covered by $s$}. 

Now, we formulate the main technical result of this paper. 

\begin{theorem}
\label{thm-two-counter}
Let $\M_1 \times_I \M_2$ be a synchronized product of two non-deterministic one-counter Minsky machines. Then there is an effectively constructible PCTL formula $\Psi$ satisfying the following conditions:
\begin{itemize}
    \item[(A)] For every Markov chain $M$ and every state $s$ of $M$, we have that if $s \models \Psi$, then $s$ simulates $\M_1 \times_I \M_2$.
    \item[(B)] For every computation $\omega$ of $\M_1 \times_I \M_2$, there exists a Markov chain $M$ and a state $s$ of $M$ such that $s \models \Psi$ and $s$ covers $\omega$. 
    Furthermore, if $\omega$ is periodic, then $M$ has finitely many states.
\end{itemize}
\end{theorem}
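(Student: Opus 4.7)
The plan is to obtain $\Psi$ by taking two disjoint copies of the single-counter formula $\psi$ from Section~\ref{sec-counter}: a copy $\psi_1$ built for $\M_1$ over propositions $\A^1$ with labels $\ell_j^1$, and a second copy $\psi_2$ built for $\M_2$ over the disjoint set $\A^2$ with labels $\ell_j^2$. The formula $\Psi$ will essentially be $\psi_1 \wedge \psi_2$ together with an additional synchronization invariant forcing every reachable ``active'' state to satisfy $\ell_i^1 \wedge \ell_i^2$ for a common index~$i$. The step subformulas of the two copies will be adapted according to the partition $I = (I_1,I_2)$: for an index $i$ with $i \in I_1$, the $\psi_1$-step keeps its usual effect on labels (the successor's new label, coming from the target sets $L, L'$ of $\Ins_i^1$, is enforced), while the $\psi_2$-step is stripped of its label constraint and only enforces the correct update of the second counter according to $\Ins_i^2$; symmetrically for $i \in I_2$. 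Technically this amounts to redefining $\Zsuc_{i,\ell}$, $\Psuc_{i,\ell}$, $\IZsuc_{i,\ell}$, $\IPsuc_{i,\ell}$ in one of the two copies so that they accept any label in $\{\ell_1,\dots,\ell_m\}$ rather than those from a specific target set.

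For part (A), I would fix $s \models \Psi$ and apply Theorem~\ref{thm-one-counter}(A) separately to the $\A^1$-view and the $\A^2$-view. Because the two views use disjoint sets of propositions, the hypotheses of Theorem~\ref{thm-area} and the convex-combination arguments that force $\vec{v}^k[t] \in \Area(\vec{\kappa})$ and $\vec{v}^k[t'] = \tau(\vec{v}^k[t])$ (or $\sigma(\vec{v}^k[t])$) factor through each view independently. Consequently each characteristic vector evolves correctly under $\tau$ or $\sigma$ as prescribed by $\Ins_j^1$ or $\Ins_j^2$, and the synchronization invariant guarantees that, along any run staying on the ``active'' branch, the resulting sequence of joint configurations is a valid computation of $\M_1 \times_I \M_2$.

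For part (B), given $\omega = D_0, D_1, \ldots$ of $\M_1 \times_I \M_2$, I would extract the two induced one-counter computations $\omega_1, \omega_2$ of $\M_1$ and $\M_2$ and invoke Theorem~\ref{thm-one-counter}(B) for each. The covering Markov chain $M$ for $\omega$ will be a product-like construction whose transient states carry the joint configuration, and whose immediate successors are indexed by pairs of ``$\A^1$-type, $\A^2$-type'' drawn from the two single-counter covering chains. Transition probabilities are chosen so that the marginal distribution on $\A^1$-successors coincides with that of the chain built for $\omega_1$ (and likewise for $\A^2$), which is exactly what is needed for each of the $\psi_k$-constraints to hold in the joint state. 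Periodicity of $\omega$ (together with periodicity of both $\omega_1, \omega_2$) yields a finite $M$.

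The main obstacle will be making the product construction in part (B) simultaneously respect all the probability equalities imposed by both copies. On an increment step, each copy contributes three independent constraints: $\opf^2_{=1-q}(a \wedge S^3(r_i))$ and two $\opf^2_{=\gamma}$ formulas from Section~\ref{sec-one-counter-psi}, which in the product must hold \emph{jointly} while the coordinates of the product distribution remain non-negative and sum to one. This is precisely why the constants at the beginning of Section~\ref{sec-counter} were fixed with strict inequalities stronger than those required by Theorem~\ref{thm-one-counter} alone (the paper even remarks ``this is more than we need here''): the slack in the admissible values of the Rule~III parameters $p_n$ is what allows the product of two Rule~III-style distributions to remain a valid probability distribution realizing both $\bEq^1_i$ and $\bEq^2_i$. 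A secondary technical point will be verifying that the adapted, label-agnostic step formulas do not break the convex-combination hypotheses of Theorem~\ref{thm-area} on either side.
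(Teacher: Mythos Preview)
Your plan matches the paper's: two disjoint copies of the one-counter formula joined by a synchronization invariant, with a product-style model for~(B). There is one genuine gap and one smaller oversight.

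The gap concerns part~(B). In any product model, the label-carrying successor on side~$1$ generally has probability $\sigma^{n_1}(\vec\kappa)_j$ while that on side~$2$ has probability $\sigma^{n_2}(\vec\kappa)_{j'}$, and these differ whenever $n_1\neq n_2$ or the two instructions are of different types. The excess mass must then land on a reachable state carrying a label proposition on exactly one side; you cannot strip the label, because each copy's $\opx_{=1}$ successor constraint forces every $a^k$- or $b^k$-successor of the relevant kind to carry some $\ell^k$. Such a state is not free, not final, and not covered by your ``simulate'' branch, so it violates the invariant. The paper closes this hole with an explicit $\Abandon^k$ disjunct in $\LTrans^k$ prescribing the structure of states with $\ell_j^k\wedge\neg\ell_j^{k'}$: they drop the $\ell^k$ label on their successors and continue as ordinary $\bTrans^k$/$\bFin^k$ states, so no descendant ever again carries a matching pair of labels. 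Without this disjunct your $\Psi$ has no model covering~$\omega$.

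The smaller point: the projections $\omega_1,\omega_2$ are \emph{not} computations of $\M_1,\M_2$, because when $j\in I_{k'}$ the next label is dictated by $\M_{k'}$ rather than $\M_k$. Hence Theorem~\ref{thm-one-counter}(B) cannot be invoked as a black box; the paper instead reuses the \emph{rules} from that proof, noting that Rule~I only needs the counter update (not the label transition) to be legal in~$\M_k$. Relatedly, rather than letting the non-controlling copy accept ``any label'', the paper substitutes the controlling machine's target set into the other copy via $\bStep^k_{i,\ell}[L]$; your variant also suffices for~(A), but the paper's choice makes the $\LPass$ argument and the product construction line up more directly.
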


Observe the following:
\begin{itemize}
    \item Let $\M_1 \times_I \M_2$ be a synchronized product of two \emph{deterministic} one-counter Minsky machines. Then, $\M_1 \times_I \M_2$ is bounded iff the only computation $\omega$ of $\M_1 \times_I \M_2$ is periodic. By Theorem~\ref{thm-two-counter}, we obtain that $\M_1 \times_I \M_2$ is bounded iff $\Psi$ is finite-satisfiable.
    \item Let $\M_1 \times_I \M_2$ be a synchronized product of two \emph{non-deterministic} one-counter Minsky machines. Then \mbox{$\M_1 \times_I \M_2$} has a recurrent computation iff the formula 
    \begin{equation*}
        \Psi \ \wedge \ \opg_{=1} \left( (\ell_1^1 \wedge \ell_1^2) \Rightarrow \opf_{>0} (\ell_1^1 \wedge \ell_1^2)\right)
    \end{equation*}
    is (generally) satisfiable.
\end{itemize} 

Thus, we obtain the following corollary to Theorem~\ref{thm-two-counter}:

\begin{corollary}
    The finite satisfiability problem for PCTL is \mbox{$\Sigma_1^0$-hard}, and the general satisfiability problem pro PCTL is \mbox{$\Sigma_1^1$-hard}.
\end{corollary}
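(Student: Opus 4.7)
The plan is to derive both hardness results by direct reduction from the two hard problems in Proposition~\ref{prop-product}, using Theorem~\ref{thm-two-counter} as a black box. For the $\Sigma_1^0$-hardness of finite satisfiability, I would reduce from the boundedness problem for a synchronized product $\M_1 \times_I \M_2$ of two \emph{deterministic} one-counter Minsky machines. Given such an instance, I compute the PCTL formula $\Psi$ provided by Theorem~\ref{thm-two-counter} and claim that $\M_1 \times_I \M_2$ is bounded iff $\Psi$ has a finite model. The ``only if'' direction is immediate from Theorem~\ref{thm-two-counter}~(B): since $\M_1 \times_I \M_2$ is deterministic, boundedness means its unique computation $\omega$ is periodic, and then Theorem~\ref{thm-two-counter}~(B) supplies a finite Markov chain $M$ with a state $s$ such that $s \models \Psi$.

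For the ``if'' direction, suppose $s \models \Psi$ in a finite Markov chain $M = (S,P,v)$. By Theorem~\ref{thm-two-counter}~(A), $s$ simulates $\M_1 \times_I \M_2$, so there exists a run $s_0, s_1, \ldots$ in $M$ starting at $s$ whose states represent a covered computation $\omega = D_0, D_1, \ldots$ of $\M_1 \times_I \M_2$. The representation is injective on configurations: the configuration encoded by $s_i$ is determined by the unique $\ell_j^1 \wedge \ell_j^2$ satisfied in $s_i$ together with the two characteristic vectors $\vec{v}^1[s_i], \vec{v}^2[s_i]$, and by Lemma~\ref{lem-tausigma} the vectors $\sigma^n(\vec{\kappa})$ are pairwise distinct. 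Hence the run visits only finitely many distinct representing states and therefore $\omega$ uses only finitely many distinct configurations; since $\M_1 \times_I \M_2$ is deterministic, $\omega$ is the only computation of $\M_1 \times_I \M_2$, so $\M_1 \times_I \M_2$ is bounded. Combined with Proposition~\ref{prop-product}, this gives $\Sigma_1^0$-hardness of finite PCTL satisfiability.

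For the $\Sigma_1^1$-hardness of general satisfiability, I reduce from recurrent reachability for a synchronized product of two \emph{non-deterministic} one-counter Minsky machines. Given such $\M_1 \times_I \M_2$, I construct
\[
   \Psi' \ \equiv \ \Psi \ \wedge \ \opg_{=1}\!\left( (\ell_1^1 \wedge \ell_1^2) \Rightarrow \opf_{>0} (\ell_1^1 \wedge \ell_1^2)\right),
\]
where $\Psi$ is the formula of Theorem~\ref{thm-two-counter}, and claim that $\Psi'$ is (generally) satisfiable iff $\M_1 \times_I \M_2$ has a recurrent computation. For ``only if'', any model $s \models \Psi'$ satisfies $s \models \Psi$, so by Theorem~\ref{thm-two-counter}~(A) it covers a computation $\omega$; the extra conjunct forces that whenever $\omega$ visits instruction $1$ it eventually visits instruction $1$ again, so instruction $1$ is visited infinitely often along $\omega$ (starting from $D_0 = (1,0,0)$), making $\omega$ recurrent. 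For ``if'', given a recurrent $\omega$, Theorem~\ref{thm-two-counter}~(B) supplies a model of $\Psi$ covering $\omega$, and I must verify that the extra conjunct holds: in that model, every reachable representing state that satisfies $\ell_1^1 \wedge \ell_1^2$ encodes a configuration with control $1$, from which a continuation visits control $1$ again with positive probability. Combined with Proposition~\ref{prop-product}, this gives $\Sigma_1^1$-hardness of general PCTL satisfiability.

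The main obstacle is the converse direction for finite satisfiability: establishing that a finite model of $\Psi$ forces a bounded computation. The key observation is that distinct configurations of $\M_1 \times_I \M_2$ must be represented by distinct states (by injectivity of the encoding via $\sigma^n(\vec{\kappa})$), so in a finite Markov chain only finitely many configurations can appear along the covered run, and determinism of $\M_1 \times_I \M_2$ then upgrades this to boundedness of its unique computation. A secondary subtlety for the $\Sigma_1^1$ case is checking that the additional fairness-like conjunct in $\Psi'$ does hold in the canonical model built by Theorem~\ref{thm-two-counter}~(B); this should be direct from how that model encodes configurations of $\omega$ via one-to-one correspondence with indices $0 \leq i < \beta$.
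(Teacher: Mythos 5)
Your reduction for $\Sigma_1^0$-hardness of finite satisfiability is correct and matches the paper's intended argument: boundedness of the deterministic product gives a periodic computation and hence a finite model by Theorem~\ref{thm-two-counter}~(B), and conversely a finite model covers a computation that passes through at most as many distinct configurations as there are states, since a state determines at most one configuration (the vectors $\sigma^n(\vec{\kappa})$ are pairwise distinct by Lemma~\ref{lem-tausigma}); determinism then yields boundedness.

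The $\Sigma_1^1$ argument has a genuine gap in the ``only if'' direction. As written, the conjunct $\opg_{=1}\big((\ell_1^1\wedge\ell_1^2)\Rightarrow\opf_{>0}(\ell_1^1\wedge\ell_1^2)\big)$ is a PCTL tautology: the paper's $\opf$ is non-strict (the semantics of $\opu$ allows $j\ge 0$), so any state $t$ satisfying $\ell_1^1\wedge\ell_1^2$ automatically satisfies $\opf_{=1}(\ell_1^1\wedge\ell_1^2)$ and hence $\opf_{>0}(\ell_1^1\wedge\ell_1^2)$, making the implication vacuous at every state. Consequently $\Psi'\equiv\Psi$, and since $\Psi$ is \emph{always} generally satisfiable by Theorem~\ref{thm-two-counter}~(B) (every $\M_1\times_I\M_2$ admits some computation), the claimed equivalence ``$\Psi'$ satisfiable iff a recurrent computation exists'' cannot be established from this formula; a non-vacuous strengthening (e.g., an $\opx$ in front of the inner $\opf_{>0}$, or the $\opg_{=1}\opf_{=1}\varphi_3$ shape announced in Section~\ref{sec-intro}) is needed.

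Moreover, even with a non-vacuous conjunct, the step ``the extra conjunct forces that whenever $\omega$ visits instruction~$1$ it eventually visits instruction~$1$ again'' does not follow: $\opf_{>0}\varphi$ at a state asserts only that \emph{some} run from that state eventually reaches $\varphi$, not that the particular covered run $\omega$ you fixed does. What is needed is an argument that \emph{some} covered computation is recurrent, for instance by constructing the run $s_0,s_1,\dots$ greedily, at each control-$1$ state following a finite path back to another control-$1$ state (such a path necessarily stays among representing states, since non-representing successors satisfy $\opg_{=1}\bigwedge_i\neg(\ell_i^1\wedge\ell_i^2)$ and can never regain a label). As written, the ``only if'' direction is unproved, so the $\Sigma_1^1$-hardness claim is not established.
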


\subsection{Constructing $\Psi$}
\label{sec-two-counters-formula}

For the rest of this section, we fix the following non-deterministic one-counter Minsky machines:
\begin{eqnarray*}
    \M_1 & \equiv & 1:\Ins_1^1;\cdots m: \Ins_m^1;\\ 
    \M_2 & \equiv & 1:\Ins_1^2;\cdots m: \Ins_m^2;
 \end{eqnarray*}
Furthermore, we fix a partition $I = (I_1,I_2)$ of $\{1,\ldots,m\}$. 

Roughly speaking, the formula $\Psi$ is obtained by ``merging'' the formulae $\psi_1$ and $\psi_2$ of Section~\ref{sec-counter} constructed for $\M_1$ and $\M_2$ using the disjoint sets of atomic propositions $\A^1$ and $\A^2$. The main modification is in the subformulae $\LTrans$ of $\psi_1$ and $\psi_2$, where we need to adjust the way of selecting the propositions of $\Labels$ passed to the successors so that the operational semantics of $\M_1 \times_I \M_2$ is reflected properly. When constructing $\psi_1$ and $\psi_2$, we assume the constants $q$, $\vec{\kappa}$, and $\gamma$ satisfying the same constraints as in Section~\ref{sec-counter} (these constants are used in both $\psi_1$ and $\psi_2$).

The set of atomic propositions used in $\Psi$ is $\A^1 \cup \A^2$. For every formula $\textit{Form}$ constructed in Section~\ref{sec-counter} and $k \in \{1,2\}$, we use $\textit{\textbf{Form}}^k$ to denote the formula obtained from $\textit{Form}$ by replacing all propositions of $\A$ with $\A^k$. For example, $t \models \pmb{\Ex{a,r_i}}^1$ if $t$ satisfies both $a^1$ and $r_i^1$, and no other proposition of $\A^1$. The formula does not say anything about 
the validity of the propositions of $\A^2$ in~$t$.

When we need to define new formulae over $\A^1$ and $\A^2$ with the same structure up to the upper indexes of atomic propositions, we write a definition of $\textit{Form}^k$ parameterized by the~$k$. For example, by stipulating $\textit{Form}^k \equiv h^k \vee \pmb{\Ex{a,r_i}}^k$, we simultaneously define $\textit{Form}^1 \equiv h^1 \vee \pmb{\Ex{a,r_i}}^1$ and $\textit{Form}^2 \equiv h^2 \vee \pmb{\Ex{a,r_i}}^2$.
Furthermore, for $k \in \{1,2\}$, the other element of $\{1,2\}$ is denoted by $k'$, i.e., $\{k,k'\} = \{1,2\}$. 

We put
\[
     \Psi \ \equiv \ \bInit^1 \wedge \bInit^2 \wedge \opg_{=1} (\Invariant \wedge \LPass)   
\]
where
\begin{eqnarray*}
   \LPass & \equiv & \ \left(\bigvee_{j=1}^m (\ell^1_j \wedge \ell^2_j)\right) \ \Rightarrow \ \opx_{>0}\left( \bigvee_{j=1}^{m} (\ell^1_j \wedge \ell^2_j)\right)\\[1ex]
%    \Disjoint & \equiv & \left( \bigwedge_{j=1}^{m+1} (\neg \ell_j^1 \vee \neg \ell_j^2)\right)
%    \Rightarrow \left( \bigwedge_{a \in \A^1} \neg p \ \vee\ \bigwedge_{a \in \A^2} \neg p \right)
    \Invariant  & \equiv & (\bFin^1 \vee \Transient^1 \vee \bFree^1)\\
    & \wedge & (\bFin^2 \vee \Transient^2 \vee \bFree^2)
\end{eqnarray*}
Intuitively, if $t \models \ell_j^1 \wedge \ell_j^2$, then the instructions $\Ins_j^1$ 
and $\Ins_j^2$ are ``simulated in parallel'' in $t$. The formula $\LPass$ enforces 
the existence of at least one immediate successor $t'$ of $t$ where the successor instructions  $\Ins_{j'}^1$ and $\Ins_{j'}^2$ are again simulated jointly. 
%The formula $\Disjoint$ says that all other states satisfy only the propositions of $\A^1$ or only the propositions of $\A^2$.
%Note that if $s \models \Psi$, then $s \models \ell_1^1 \wedge \ell_1^2$. 
For all $k \in \{1,2\}$, we put
\begin{eqnarray*}
    \Transient^k &  \equiv & \bTrans^k \vee \bCTrans^k \vee \LTrans^k
\end{eqnarray*}
where
\begin{eqnarray*}
    \LTrans^k &  \equiv & \bigg(\bigvee_{j=1}^m (\ell_j^k \wedge  \ell_j^{k'}) \bigg) \Rightarrow \Simulate^k\\
    & \wedge & \bigg(\bigvee_{j=1}^m (\ell_j^k  \wedge   \neg \ell_j^{k'}) \bigg) \Rightarrow \Abandon^k
\end{eqnarray*}

As we shall see, for every state $t$ reachable from a state satisfying $\Psi$ we have that $t \models \ell_j^k$ for at most one $\ell_j \in \Labels$. Intuitively, the formula $\LTrans^k$ says that when $t \models \ell_j^k \wedge \ell_j^{k'}$, we continue with simulating the instruction $\Ins_j^k$. If $t$ satisfies a proposition $\ell_j^k \in \Labels^k$ but not the matching proposition $\ell_j^{k'} \in \Labels^{k'}$, we ``abandon'' the simulation. In the latter case, the immediate successors of $t$ must still have an appropriate structure so that no formula is ``spoilt'' in the immediate predecessor of~$t$. This is enforced by the formula $\Abandon^k$. More precisely, we put
\begin{eqnarray*}
\Abandon^k & \equiv & (\bZero^k \Rightarrow \OZer^k)
             \wedge  (\neg\bZero^k \Rightarrow \OPos^k)\,.
\end{eqnarray*}
The formulae $\OZer^k$ and $\OPos^k$ are defined as follows:
\begin{eqnarray*}
    \OZer^k & \equiv & \bigvee_{i \in \{0,\ldots,4\}}  ( r_i^k \wedge \OZsuc^k_i)\\  
    \OPos^k & \equiv & 
    \bigvee_{i \in \{0,\ldots,4\}} ( r_i^k \wedge \OPsuc_i^k \wedge \bInterval^k \wedge \bEq_i^k )
\end{eqnarray*}
where
\begin{eqnarray*}  
    \OZsuc^k_i & \equiv & \opx_{=1} \big(\pmb{\Ex{h,a,S(r_i)}}^k \vee \pmb{\Ex{h,b,S^2(r_i)}}^k 
            \vee \pmb{\Ex{h,c,S^2(r_i)}}^k\\
    && \hspace*{5ex}  \vee\ \pmb{\Ex{h,c,S^2(r_i),e}}^k\big) \ \wedge \ 
    \opx_{=1{-}q} \pmb{\Ex{h,c,S^2(r_i),e}}^k \\
    \OPsuc^k_i & \equiv & \opx_{=1} \big(\pmb{\Ex{a,S(r_i)}}^k \vee \pmb{\Ex{h,b,S^2(r_i)}}^k 
    \vee \pmb{\Ex{h,c,S^2(r_i)}}^k\\
&& \hspace*{5ex}  \vee\ \pmb{\Ex{h,c,S^2(r_i),e}}^k\big) \ \wedge \ 
\opx_{=1{-}q} \pmb{\Ex{h,c,S^2(r_i),e}}^k
\end{eqnarray*}

The formulae $\Simulate^k$, where $k \in \{1,2\}$, enforce the simulation of one computational step of $\M_1 \times_I \M_2$.
We put
\begin{eqnarray*}   
    \Simulate^k & \equiv &\bigvee_{i \in \{0,\ldots,4\}} \bigvee_{\ell \in \Labels} \bigvee_{x \in \{a,b\}} 
    \  \left(\pmb{\Ex{x,r_i,\ell}}^k \wedge \NStep^k_{i,\ell}\right)
\end{eqnarray*}
where the formula $\NStep^k_{i,\ell}$ is constructed as follows. Let $\Ins_j^k$ be the instruction associated with $\ell$, i.e., $\ell = \ell_j$. If $j \in I_k$, then
\[
    \NStep^k_{i,\ell} \ \equiv \bStep^k_{i,\ell}    
\]

Now let $j \not\in I_k$. For every $L \subseteq \Labels$ where $L$ has one or two elements, let $\bStep^k_{i,\ell}[L]$ be the formula obtained from $\bStep^k_{i,\ell}$ by substituting every occurrence of every set of target labels with $L^k = \{p^k \mid p \in L\}$. Note that this substitution affects only the ``big disjunction'' in the subformulae $\bZsuc_{i,\ell}^k$, $\bPsuc_{i,\ell}^k$, $\bIZsuc_{i,\ell}^k$, and $\bIPsuc_{i,\ell}^k$. Now, we distinguish two cases (recall that $k' \neq k$ is ``the other index'' of $\{1,2\}$).
\smallskip

(A) $\Ins_j^{k'} \equiv \textit{if } c{=}0 \textit{ then goto } L \textit{ else dec } c; \textit{ goto } L'$. Then,
\begin{eqnarray*}
    \NStep_{i,\ell}^k & \equiv &  \left(\bZero^{k'} \Rightarrow \bStep^k_{i,\ell}[L]\right)
     \wedge \ \left(\neg \bZero^{k'} \Rightarrow \bStep^k_{i,\ell}[L']\right)\\
\end{eqnarray*}

(B) $\Ins_j^{k'} \equiv \textit{inc } c; \textit{ goto } L$. Then,
\begin{eqnarray*}
    \NStep_{i,\ell}^k & \equiv &  \bStep^k_{i,\ell}[L]
\end{eqnarray*}
This completes the construction of $\Psi$.

\subsection{A Proof of Theorem~\ref{thm-two-counter}}

% For every state $t$ of some Markov chain and every $k \in \{1,2\}$, we use $\vec{v}^k[t] \in [0,1]^2$ to denote the $k$-th characteristic vector of $t$ where
% \begin{itemize}
%     \item $\vec{v}^k[t]_1$ is the probability of satisfying the path formula $\opx a^k$ in the state~$t$;
%     \item $\vec{v}^k[t]_2$ is the probability of satisfying the path formula $\opx b^k$ in the state~$t$.
% \end{itemize}

% Theorem~\ref{thm-PCTL-undec} is a consequence of the following propositions. 

% \begin{proposition}
% \label{prop-Minsky-unbounded}
% Let $M =(S,P,v)$ be a finite-state model of~$\Psi$. Then for every reachable configuration $(j,n_1,n_2)$ of $\M_1 \times_I \M_2$, there is $t \in S$ such that $t \models \ell^1_j \wedge \ell_j^2$, $\vec{v}^1[t] = \sigma^{n_1}(\vec{\kappa})$, and $\vec{v}^2[t] = \sigma^{n_2}(\vec{\kappa})$.
% \end{proposition}

% \begin{proposition}
% \label{prop-Minsky-bounded}   
% If $\M_1 \times_I \M_2$ is bounded, then $\Psi$ has a finite-state model.
% \end{proposition}

Theorem~\ref{thm-two-counter} is proven by reusing the arguments used in the proof of Theorem~\ref{thm-one-counter}  with some modifications and extensions. Let 
$\M_1 \times_I \M_2$ be a synchronized product of two non-deterministic one-counter Minsky machines with $m$~instructions, and let $\Psi$ be the formula constructed for $\M_1 \times_I \M_2$ in Section~\ref{sec-two-counters-formula}. The two claims of Theorem~\ref{thm-two-counter} are proven separately in the following subsections.

\subsubsection{A Proof of Theorem~\ref{thm-two-counter}~(A)}

Let $M =(S,P,v)$ be a Markov chain such that $s \models \Psi$ for some $s \in S$. For every $k \in \{1,2\}$, let $T^k$ be the set of all $t \in S$ reachable from $s$ such that $t \not\models h^k$. It is easy to verify that the conditions of Theorem~\ref{thm-area} are satisfied for both $T^1$ and $T^2$, where  $\vec{v}^1[t]$ and $\vec{v}^2[t]$ play the role of $\vec{v}[t]$, respectively. 

We show that $s$ simulates $\M_1 \times_I \M_2$. Clearly, $s$ represents the configuration $(1,0,0)$ because $s \models \bInit^1 \wedge \bInit^2$. Let $t$ be a state reachable from $s$ such that $t$ represents a configuration $(j,n_1,n_2)$. Let $n_1',n_2' \in \N$ be the (unique) counter values in a successor configuration of $(j,n_1,n_2)$, and let $L$ be the set of all $j'$ such that $(j,n_1,n_2) \leadsto (j',n_1',n_2')$. 

Observe that  $t \models \Simulate^1 \wedge \Simulate^2$ and hence $t \models \NStep^1_{i,\ell_j} \wedge \NStep^2_{i',\ell_j}$ for some $i,i' \in \{0,\ldots,4\}$. Let $t'$ be an immediate successor of~$t$ such that $t' \models \ell^1_{j'} \wedge \ell^2_{j'}$ for some $j' \leq m$. Then $j' \in L$ by the definition of $\NStep_{i,\ell}^k$. Furthermore, at least one such $t'$ must exist because $t \models \LPass$. By using the arguments of Section~\ref{sec-prop-one-counter-unbounded}, we obtain that $\vec{v}^1[t'] = \sigma^{n'_1}(\vec{\kappa})$ and $\vec{v}^2[t'] = \sigma^{n'_2}(\vec{\kappa})$. Hence, $t'$ represents a successor configuration of $(j,n_1,n_2)$. Also observe that if $t''$ is an immediate successor of $t$ satisfying the formula $\bigwedge_{j=1}^m \neg(\ell_j^1 \wedge \ell_j^2)$, then all states reachable from $t''$ also satisfy this formula.

\subsubsection{A Proof of Theorem~\ref{thm-two-counter}~(B)}

Let $\omega = D_0,D_1,\ldots$ be a computation of $\M_1 \times_I \M_2$. If $\omega$ is periodic, we fix $\alpha,\beta$ such that $\alpha < \beta$ and the computations $D_{\alpha-1},D_{\alpha},D_{\alpha+1},\ldots$ and $D_{\beta-1},D_{\beta},D_{\beta+1},\ldots$ are the same. If $\omega$ is not periodic, then $\beta = \infty$ and $\alpha$ is undefined. 

For every configuration $D_i = (j,n_1,n_2)$ of $\omega$, let $C_i^1 = (j,n_1)$ and $C_i^2 = (j,n_2)$ be the corresponding configurations of $\M_1$ and $\M_2$. Furthermore, we define the infinite sequences $\omega^1 = C_0^1,C_1^1,C_2^1,\ldots$ and $\omega^2 = C_0^2,C_1^2,C_2^2,\ldots$. Note that $\omega_1$ and $\omega_2$ are \emph{not} necessarily computations of $\M_1$ and $\M_2$. However, for all $k \in \{1,2\}$ and $i \in \N$, we have that if $C_i^k = (j,n)$ and $C_{i+1}^k = (j',n')$, then there is $j'' \leq m$ such that $(j,n) \mapsto (j'',n')$ is a computational step of $\M_k$. In other words, $n'$ is obtained from $n$ by executing $\Ins_j$ in $\M_k$.

% Furthermore, for $k \in \{1,2\}$, we define an infinite sequence $\omega_k = D_0^k,D_1^k,\ldots$ of configurations of $\M_k$ inductively as follows:
% \begin{itemize}
%     \item $D_0^k = (1,0)$;
%     \item if $C_i = (j,n_1,n_2)$ $C_i = (j,n_1,n_2)$ and where $j \in I_k$, then 
% \end{itemize} 

We construct a Markov chain $M = (S,P,v)$ and a state $s \in S$ such that $s \models \Psi$ and $s$ covers $\omega$. If $\beta < \infty$, then $S$ is finite.

The states of $S$ are tuples of the form $t = [\iota,\calL_1,\calL_2,n_1,n_2]$ where $\iota \in \{i \in \N \mid 0 \leq i < \beta\} \cup \{\star\}$, $\calL_1,\calL_2 \subseteq \A$, and $n_1,n_2 \in \N \cup \{\star\}$. The set $v(t)$ of atomic propositions satisfied in $t$ is $\{p^1 \mid p \in \calL_1\} \cup \{p^2 \mid p \in \calL_2\}$. 

Each state $t = [\iota,\calL_1,\calL_2,n_1,n_2]$ of $S$ determines two \emph{projections} $t_1 = [\iota,\calL_1,n_1]$ and $t_2 = [\iota,\calL_2,n_2]$. Conversely, for all $u_1 = [\iota,\calL_1,n_1]$ and $u_2 = [\iota,\calL_2,n_2]$ where $\iota,\calL_1,\calL_2,n_1,n_2$ satisfy the conditions of the previous paragraph we define a tuple $u_1 \uplus u_2 = [\iota,\calL_1,\calL_2,n_1,n_2]$.

The Markov chain $M$ is the least Markov chain $M'$ such that $s = [0,\{a,r_0,\ell_1\},\{a,r_0,\ell_1\},0,0]$ is a state of $M'$, and if $t$ is a state of $M'$, then $M'$ contains all immediate successors of $t$ defined by the rules given below. The rules are designed so that for every $t \in S$, the immediate successors of the projections $t_1$ and $t_2$ are defined by the rules of Section~\ref{sec-proof-prop-one-counter-bounded} where $\M_1$ and $\M_2$ are used as the underlying one-counter Minsky machines, the infinite sequences $\omega_1$ and $\omega_2$ play the role of the fixed computation $\omega$, and the constants $\alpha$ and $\beta$ refer to the constants fixed above. Recall that $\omega$ is used only in Rule~I of Section~\ref{sec-proof-prop-one-counter-bounded}, and this rule makes a clear sense also for $\omega^1$ and $\omega^2$. Also recall that the set $\Succ(t_k)$ of all immediate successors of $t_k$ (where $k \in \{1,2\}$) satisfies precisely one of the following conditions:
\begin{itemize}
    \item $\Succ(t_k) = \{t_k\}$. This happens iff $t_k \models h$.
    \item $\Succ(t_k)$ contains precisely three states satisfying the propositions $a$, $b$, and $c$, respectively, such that the last state does not satisfy~$d \vee e$. These states are denoted by $\ta_k$, $\tb_k$, and $\tc_k$, respectively.
    \item $\Succ(t_k)$ has precisely four states; apart of $\ta_k$, $\tb_k$, and $\tc_k$, there is also the fourth state satisfying $d$ or $e$. This fourth state is denoted by $\te_k$.
\end{itemize}

\noindent
Now we can define the closure rules.
\smallskip

\noindent
\textbf{Rule A.} If $t = [k,\{x,r_i,\ell_j\},\{x',r_{i'},\ell_j\},n_1,n_2]$ where $x,x' \in \{a,b\}$, $i,i' \in \{0,\ldots,4\}$, then the immediate successors of $t$ are defined as follows. 
Let $t_1' \in \Succ(t_1)$ and $t_2' \in \Succ(t_2)$ be the unique states satisfying some proposition of $\Labels$. Observe that $t_1' = [k',\calL_1,n_1]$ and $t_2' = [k',\calL_2,n_2]$ where $\calL_1$ and $\calL_2$ contain the \emph{same} proposition of $\Labels$ (this follows immediately by inspecting Rule~I of Section~\ref{sec-proof-prop-one-counter-bounded} and the definitions of $\omega^1$ and $\omega^2$).  We put 
\[
  P(t,t_1' \uplus t_2') = \min\{P(t_1,t_1'),P(t_2,t_2')\}\,.
\]
Furthermore, 
\begin{itemize}
    \item if $P(t_1,t_1') > P(t_2,t_2')$, then 
    \[
       P(t,t_1' \uplus \tc_2) = P(t_1,t_1') - P(t_2,t_2');
    \] 
    \item if $P(t_2,t_2') > P(t_1,t_1')$, then 
    \[
        P(t,\tc_1 \uplus t_2') = P(t_2,t_2') - P(t_1,t_1');
    \] 
    \item for all $t_1'' \in \Succ(t_1)$ where $t_1'' \neq t_1'$ and $t_1'' \in \{\ta_1,\tb_1,\te_1\}$, 
    \[
        P(t,t_1'' \uplus \tc_2) = P(t_1,t_1'');
    \]
    \item for all $t_2'' \in \Succ(t_2)$ where $t_2'' \neq t_2'$ and $t_2'' \in \{\ta_2,\tb_2,\te_2\}$, 
    \[
        P(t,\tc_1 \uplus t_2'') = P(t_2,t_2'');
    \]
    \item finally, we put 
    \(
        P(t,\tc_1 \uplus \tc_2) = 1-s
    \)
    where $s$ is the sum of all $P(t,\cdot)$ defined above.
\end{itemize}
Note that $s < 1$ due to the constraints on $q$ and $\vec{\kappa}$ adopted in Section~\ref{sec-counter}.  
\smallskip

\noindent
\textbf{Rule~B.} If $t = [k,\calL_1,\calL_2,n_1,n_2]$ where $\calL_1 \cap \Labels \neq \emptyset$, \mbox{$\calL_2 \cap \Labels = \emptyset$}, and $h \not\in \calL_2$, then $\calL_1$ contains precisely one $r_i \in \{r_0,\ldots,r_4\}$, and we distinguish two subcases.
\smallskip

\noindent
\pmb{$n_1 = 0$}. Then
\begin{eqnarray*}
    P(t,[\star,\{h,a,S(r_i)\},\star] \uplus \tc_2) & = & \vec{\kappa}_1,\\ 
    P(t,[\star,\{h,b,S^2(r_i)\},\star] \uplus \tc_2) & = & \vec{\kappa}_2,\\
    P(t,[\star,\{h,c,S^2(r_i),e\},\star] \uplus \tc_2) & = & 1-q \,.
\end{eqnarray*}
Furthermore, for every $t_2'' \in \{\ta_2,\tb_2,\te_2\}$, we put
\begin{eqnarray*}
    P(t,[\star,\{h,c,S^2(r_i)\},\star] \uplus t_2'') & = & P(t_2,t_2'')\,.
\end{eqnarray*}
Finally, we put $P(t,[\star,\{h,c,S^2(r_i)\},\star] \uplus \tc_2)  =  1-s$
% \begin{eqnarray*}
%     P(t,(h,c,S^2(r_i)) \uplus \tc_2) & = & 1-s
% \end{eqnarray*}
where $s$ is the sum of all $P(t,\cdot)$ defined above.
\smallskip

\noindent
\pmb{$n > 0$}. Then
\begin{eqnarray*}
    P(t,[\star,\{a,S(r_i)\},n{-}1] \uplus \tc_2) & = & \sigma^n(\vec{\kappa})_1,\\ 
    P(t,[\star,\{h,b,S^2(r_i)\},\star] \uplus \tc_2) & = & \sigma^n(\vec{\kappa})_2,\\
    P(t,[\star,\{h,c,S^2(r_i),e\},\star] \uplus \tc_2) & = & 1-q\,.
\end{eqnarray*}
Furthermore, for every $t_2'' \in \{\ta_2,\tb_2,\te_2\}$, 
\begin{eqnarray*}
    P(t,[\star,\{h,c,S^2(r_i)\},\star] \uplus t_2'') & = & P(t_2,t_2'')\,.
\end{eqnarray*}
Finally, we put $P(t,[\star,\{h,c,S^2(r_i)\},\star] \uplus \tc_2)  =  1-s$
% \begin{eqnarray*}
%     P(t,(h,c,S^2(r_i)) \uplus \tc_2) & = & 1-s
% \end{eqnarray*}
where $s$ is the sum of all $P(t,\cdot)$ defined above.
\smallskip

\noindent
\textbf{Rule~C.} If $t = [k,\calL_1,\calL_2,n_1,n_2]$ where $\calL_2 \cap \Labels \neq \emptyset$, \mbox{$\calL_1 \cap \Labels = \emptyset$}, and $h \not\in \calL_1$, then the immediate successors of $t$ are defined similarly as in Rule~B (Rule~C is fully symmetric to Rule~B).
\smallskip

\noindent
\textbf{Rule~D.} If $t = [k,\calL_1,\calL_2,n_1,n_2]$ where $\calL_1 \cap \Labels = \emptyset$, $\calL_2 \cap \Labels = \emptyset$, $h \not\in \calL_1$, and $h \not\in \calL_2$, then the immediate successors or $t$ are defined as follows. For every 
$t_1'' \in \{\ta_1,\tb_1,\te_1\}$, 
\begin{eqnarray*}
    P(t, t_1'' \uplus \tc_2) & = & P(t_1,t_1'')\,.
\end{eqnarray*}
Similarly, for all $t_2'' \in \{\ta_2,\tb_2,\te_2\}$, 
\begin{eqnarray*}
    P(t, \tc_1 \uplus t_2'') & = & P(t_2,t_2'')\,.
\end{eqnarray*}
Finally, $P(t, \tc_1 \uplus \tc_2) = 1 -s$, where $s$ is the sum of all $P(t,\cdot)$ above.
\smallskip

\noindent
\textbf{Rule~E.} If $t = [k,\calL_1,\calL_2,n_1,\star]$ where $h \not\in \calL_1$ and $h\in \calL_2$, then
\(
     P(t, t_1' \uplus t_2) = P(t_1,t_1')   
\)
for all $t_1' \in \Succ(t_1)$.
\smallskip 

\noindent
\textbf{Rule~F.} If $t = [k,\calL_1,\calL_2,n_1,\star]$ where $h \in \calL_1$ and $h \not\in \calL_2$, then
\(
    P(t, t_1 \uplus t_2') = P(t_2,t_2')   
\)
for all $t_2' \in \Succ(t_2)$.
\smallskip 

\noindent
\textbf{Rule~G.} If $t = [k,\calL_1,\calL_2,n_1,\star]$ where $h \in \calL_1$ and $h \in \calL_2$, then $P(t,t) =1$.
\smallskip 

If the computation $\omega$ is periodic, then the constructed Markov chain~$M$ has finitely many states. A routine check (similar to the one performed in the proof of Theorem~\ref{thm-one-counter}~(B)) reveals that the state $s = [0,\{a,r_0,\ell_1\},\{a,r_0,\ell_1\},0,0]$ satisfies $\Psi$ and covers the computation $\omega$.

\section{Conclusions}
\label{sec-concl}

We have shown that the general/finite PCTL satisfiability problems are undecidable. 
Note that the formula $\Psi$ constructed in the proof of Theorem~\ref{thm-two-counter} is \emph{always} generally satisfiable, which implies that the finite satisfiability problem for PCTL is undecidable even for the subset of generally satisfiable PCTL formulae. Furthermore, the undecidability result remains valid even if the set of eligible finite models is restricted to \emph{tree-like} models, where the underlying graph of a Markov chain is a tree with self-loops on all leaves (this requires a slight modification of our construction).
Finally, let us note that the construction of Theorem~\ref{thm-two-counter} applies also to a universal Minsky machine. Hence, fixed parameterized PCTL formulae exist such that the general/finite satisfiability of their instances is undecidable. 

%\section{Acknowledgments}
% \begin{acks}
%   The work is supported by.  
% \end{acks}

\newpage
%%
%% The next two lines define the bibliography style to be used, and
%% the bibliography file.
\bibliographystyle{ACM-Reference-Format}
\bibliography{str-long,concur}

%%
%% If your work has an appendix, this is the place to put it.

\clearpage
\appendix

\begin{center}
  \huge\bf Appendix
\end{center}

Here, we provide the proofs omitted in the main body of the paper due to space constraints.

\tausigma*
\begin{proof}
Recall that 
\begin{eqnarray*}
    I_q & = & \left(\frac{1-\sqrt{4q-3}}{2},  \frac{1+\sqrt{4q-3}}{2}\right)\\
    W   & = & I_q \times [0,\infty)\\
    \tau(\vec{v}) & = & \left( \frac{q{-}1{+}\vec{v}_1}{\vec{v}_1}, \frac{\vec{v}_2}{\vec{v}_1} \right)\\
    \sigma(\vec{v}) & = & \left( \frac{1{-}q}{1{-}\vec{v}_1}, \frac{\vec{v}_2(1{-}q)}{1{-}\vec{v}_1} \right) 
\end{eqnarray*}
%and $W =I_q \times (0,\infty)$.
\smallskip

\textit{Item~(a).}
Let $\vec{v} \in W$. We show that $\tau(\vec{v}) \in W$. Observe
\begin{eqnarray*}
   \tau(\vec{v})_1 & = & 1 - \frac{1-q}{\vec{v}_1}\\
     & < & 1 - \frac{2(1-q)}{1-\sqrt{4q-3}}\\
     & = & 1- \frac{(2-2q)}{1-\sqrt{4q-3}} \cdot \frac{1+\sqrt{4q-3}}{1+\sqrt{4q-3}}\\
     & = & 1- \frac{(2-2q)(1+\sqrt{4q-3})}{4-4q}\\
     & = & \frac{1+\sqrt{4q-3}}{2}
\end{eqnarray*}
Similarly, we obtain $\tau(\vec{v})_1 > \frac{1-\sqrt{4q-3}}{2}$, and hence $\tau(\vec{v})_1 \in I_q$. Since $\tau(\vec{v})_2 = \vec{v}_2/\vec{v}_1 > 0$, we have that $\tau(\vec{v}) \in W$ as required.

Now we show that $\sigma(\vec{v}) \in W$. Observe
\begin{eqnarray*}
    \sigma(\vec{v})_1 & = & \frac{1-q}{1-\vec{v}_1}\\
      & < & \frac{1-q}{1-\frac{1+\sqrt{4q-3}}{2}}\\[2ex]
      & = & \frac{2(1-q)}{1- \sqrt{4q-3}} \cdot \frac{1+\sqrt{4q-3}}{1+\sqrt{4q-3}}\\[2ex]
      & = & \frac{1+\sqrt{4q-3}}{2}
\end{eqnarray*}
Similarly, we obtain $\sigma(\vec{v})_1 >  \frac{1-\sqrt{4q-3}}{2}$. Since $\sigma(\vec{v})_2 = \frac{\vec{v}_2(1{-}q)}{1{-}\vec{v}_1} > 0$, we have that $\sigma(\vec{v}) \in W$.
\smallskip

\textit{Item~(b).}
Let $\vec{v} \in W$. Observe that $\tau(\vec{v})_1 > \vec{v}_1$ iff $(q{-}1{+}\vec{v}_1)/\vec{v}_1 > \vec{v}_1$ iff $\vec{v}_1^2 - \vec{v}_1 - q +1 < 0$ iff $\vec{v}_1 \in I_q$. This explains our choice of $I_q$. Furthermore, $\tau(\vec{v})_2 = \vec{v}_2/\vec{v}_1 \geq \vec{v}_2$; if $\vec{v}_2 > 0$, then $\tau(\vec{v})_2 > \vec{v}_2$.
\smallskip

\textit{Item~(c).} Let $\vec{v} \in W$. Observe
\begin{eqnarray*}
    \slope(\vec{u},\tau(\vec{v})) & = & \frac{\tau(\vec{v})_2}{\tau(\vec{v})_1 - \vec{v}_1} = \frac{\vec{v}_2}{q-1+\vec{v}_1(1 - \vec{v}_1)}
 \end{eqnarray*}
% \begin{eqnarray*}
%     \slope(\vec{v},\tau(\vec{v})) & = & \frac{\tau(\vec{v})_2 - \vec{v}_2}{\tau(\vec{v})_1 - \vec{v}_1} = \frac{\vec{v}_2(1-\vec{v}_1)}{q-1+\vec{v}_1(1 - \vec{v}_1)}
%  \end{eqnarray*}
 Similarly,
 \begin{eqnarray*}
    \slope(\tau(\vec{v}),\tau^2(\vec{v})) & = &  \frac{\tau(\vec{v})_2(1-\tau(\vec{v})_1)}{q-1+\tau(\vec{v})_1( 1 - \tau(\vec{v}_1))}\\[1ex]
    & = & \frac{\frac{\vec{v}_2}{\vec{v}_1}\left(1-\frac{q-1+\vec{v}_1}{\vec{v}_1}\right)}%
               {q-1 + \frac{q-1+\vec{v}_1}{\vec{v}_1} \left(1 - \frac{q-1+\vec{v}_1}{\vec{v}_1}\right)}\\[1ex]
    & = & \frac{\frac{\vec{v}_2}{\vec{v}_1}\left(\frac{1-q}{\vec{v}_1}\right)}%
    {q-1 + \frac{q-1+\vec{v}_1}{\vec{v}_1} \left(\frac{1-q}{\vec{v}_1}\right)}\\[1ex]  
    & = & \frac{\vec{v}_2}{q-1+\vec{v}_1(1-\vec{v}_1)}       
 \end{eqnarray*}
Hence, $\slope(\vec{u},\tau(\vec{v})) = \slope(\tau(\vec{v}),\tau^2(\vec{v}))$.
\smallskip

\textit{Item~(d).} Realize that 
\begin{eqnarray*}    
    \slope(\vec{u},\tau(\vec{u})) & = & \frac{y(1-\vec{v}_1)}{q-1+\vec{v}_1(1 - \vec{v}_1)}\\[1ex]
    \slope(\vec{v},\tau(\vec{v})) & = & \frac{\vec{v}_2(1-\vec{v}_1)}{q-1+\vec{v}_1(1 - \vec{v}_1)}
\end{eqnarray*}
SInce $0 \leq y < \vec{v}_2$, we have that $\slope(\vec{u},\tau(\vec{u})) < \slope(\vec{v},\tau(\vec{v}))$.

\smallskip

\textit{Item~(e).} It is trivial to verify that $\sigma(\tau(\vec{v})) = \tau(\sigma(\vec{v}) = \vec{v}$ for every $\vec{v} \in W$.
\end{proof}

\gproperties*
\begin{proof}
Let $\vec{v} \in W$. All claims follow directly from Lemma~\ref{lem-tausigma}. More concretely, for every $\vec{u} \in \Points(\vec{v})$, we have that both $\vec{u}$ and $L(\vec{u})$ are \emph{faces} of the convex set $\Area(\vec{v})$ (see, e.g., Section~2.6 in~\cite{Webster:book}), and the claims~(A) and~(B) are just instances of the defining property of a face. 
\end{proof}

\tauconvex*
\begin{proof}
It is easy to verify that for all $\vec{x},\vec{y} \in W$ and all $\lambda \in (0,1]$ we have that
\[
    \tau(\lambda\vec{x} + (1{-}\lambda)\vec{y}) \ = \ 
    \lambda'\tau(\vec{x}) + (1{-}\lambda') \tau(\vec{y})
\]
where 
\[
    \lambda' = \frac{\lambda \vec{x}_1}{\lambda\vec{x}_1 + (1{-}\lambda) \vec{y}_1} 
\]
Observe that $\lambda' \in (0,1]$. The lemma follows by putting $\vec{x} = \vec{w}$, $\vec{y} = \tau(\vec{w})$ and choosing $\lambda$ so that $\vec{u} = \lambda\vec{w} + (1{-}\lambda)\tau(\vec{w})$. 
\end{proof}  

\outlineseg*
\begin{proof}
First, we show that 
\begin{eqnarray}
    \lim_{k \to \infty} \sigma^k(\vec{\kappa})_1 & = & \frac{1-\sqrt{4q-3}}{2} \label{lim-2}
\end{eqnarray}
Let 
\[
    J_q =    \left[\frac{1-\sqrt{4q-3}}{2},  \vec{\kappa}_1\right] 
\]
By Lemma~\ref{lem-tausigma}, the infinite sequence  $\vec{\kappa}_1, \sigma(\vec{\kappa})_1, \sigma^2(\vec{\kappa})_1,\ldots$ is decreasing and bounded from below by $(1-\sqrt{4q-3})/2$. Consequently, the sequence has a limit~$\alpha \in J_q$, and hence it is also a Cauchy sequence, i.e.,
\[
    \lim_{k \to \infty} \sigma^{k+1}(\vec{\kappa})_1 - \sigma^{k}(\vec{\kappa})_1 = 0 \,.
\] 
Consider the function $f : J_q \to \R$ where 
\[
    f(x) \ = \ \frac{1-q-x(1-x)}{1-x}
\]
Observe that $f$ is non-negative and continuous. Furthermore, $f(x) = 0$ iff $x =  (1 - \sqrt{4q-3})/2$. Observe that for every $k \in \N$, we have that
\begin{eqnarray*}
    \sigma^{k+1}(\vec{\kappa})_1 - \sigma^{k}(\vec{\kappa})_1 \ = \ 
        f(\sigma^{k}(\vec{\kappa})_1) \,.
\end{eqnarray*}
Hence, 
\[
    0 \ = \ \lim_{k \to \infty} \sigma^{k+1}(\vec{\kappa})_1 - \sigma^{k}(\vec{\kappa})_1  =  \lim_{k \to \infty} f(\sigma^{k}(\vec{\kappa})_1) \ = \ f(\alpha)
\]
which implies $\alpha = (1 - \sqrt{4q-3})/2$.

Now let $\vec{v} \in W \smallsetminus \Area(\vec{\kappa})$ where $\vec{v}_1 \leq \vec{\kappa}_1$. By~\eqref{lim-2}, there exist $k \in \N$ such that $\sigma^k(\vec{\kappa})_1 \leq \vec{v}_1 < \sigma^{k-1}(\vec{\kappa})_1$. We put $\vec{u}_1 = \sigma^k(\vec{\kappa})_1$ and choose $\vec{u}_2$ so that 
\[
    \slope(\vec{u},\tau(\vec{u})) = \slope(\vec{u},\vec{v})\,.
\]
Hence, we require that
\[
     \frac{\vec{u}_2(1-\vec{u}_1)}{q-1+\vec{u}_1(1-\vec{u}_1)}    =
     \frac{\vec{v}_2 - \vec{u}_2}{\vec{v}_1 - \vec{u}_1}
\]
From this, we obtain
\[
    \vec{u}_2 = \frac{\vec{v}_2(q-1+\vec{u}_1(1-\vec{u}_1))}{\vec{v}_1 - \vec{u}_1 + q - 1 + \vec{u}_1(1-\vec{u}_1)}    
\]
and the proof is finished.
\end{proof}    

\Minskyproduct*
\begin{proof}
    Let $\M \equiv 1:\Ins_1;\cdots m: \Ins_m;$ be a non-deterministic two-counter Minsky machine. We start by transforming $\M$ into another two-counter Minsky machine $\widehat{\M}$ with $3m$ instructions constructed as follows:
    \begin{itemize}
        \item The first $m$ instructions of $\widehat{\M}$ are the same as the instructions of $\M$.
        %, except that all occurrences of $\textit{ goto } m{+}1$ are replaced with $\textit{ goto } 3m{+}1$.
        \item For every $j \in \{1,\ldots,m\}$, the machine $\widehat{M}$ contains the following labeled instructions:
        \begin{itemize}
            \item $\ \ m{+}j: \textit{if } c_1{=}0 \textit{ then goto } \{1\} \textit{ else dec } c_1; \textit{ goto } \{j\}$
            \item $2m{+}j: \textit{if } c_2{=}0 \textit{ then goto } \{1\} \textit{ else dec } c_2; \textit{ goto } \{j\}$
        \end{itemize}
        As we shall see, the target labels in the \textit{then} branches are insignificant and can be chosen arbitrarily.
        %\item The last instruction of $\widehat{M}$ is $3m{+}1 : \Halt$.
    \end{itemize}
    Observe that $\widehat{\M}$ has the same set of computations as $\M$, because the newly added instructions are not reachable from the initial configuration. Furthermore, if $\M$ is deterministic, then $\widehat{\M}$ is also deterministic.
    
    Let $L_1,L_2 \subseteq \{1,\ldots,3m\}$ be the sets of labels of all instructions of $\widehat{\M}$ operating on $c_1$ and $c_2$, respectively. Now, we construct two one-counter Minsky machines $\M_1,\M_2$ whose synchronized product simulates $\widehat{\M}$.
    
    Both $\M_1$ and $\M_2$ have $6m$ instructions. 
    % The instructions with labels $6m + 1$ and $6m+2$ are the same in $\M_1$ and $M_2$:
    % \begin{itemize}
    %     \item $6m+1 : \textit{if } c{=}0 \textit{ then goto 6m{+}2} \textit{ else dec } c; \textit{ goto } 6m{+}2$
    %     \item $6m+2 : \Halt$
    % \end{itemize}
    % As we shall see, the purpose $\Ins_{6m+1}$ is to ensure that both counters are zero when $\M_1 \times_I \M_2$ halts. 
    For notation convenience, the labels of $\M_1,\M_2$ are written as pairs $(\ell,0)$, $(\ell,+)$, where $\ell \in \{1,\ldots, 3m\}$.  
    
    For every $\ell \in \{1,\ldots, 3m\}$, the instructions labeled by $(\ell,0)$ and $(\ell,+)$ are constructed as follows: 
    Let $\Ins_\ell$ be the instruction of $\widehat{\M}$ with label $\ell$. If $\ell \in L_1$, then  
    \begin{itemize}
        \item $\M_1$ contains the instruction $(\ell,0): \overline{\Ins}$, where  $\overline{\Ins}$ is obtained from $\Ins_\ell$ as follows:
        \begin{itemize}
            \item $c_1$ is replaced with $c$;
            \item each set of target labels $L$ occurring in $\Ins_\ell$ is replaced with $\overline{L}$ obtained from $L$ by replacing every $u \in L$ with either $(u,+)$ or $(2m{+}u,0)$, depending on whether \mbox{$u \in L_1$} or $u \in L_2$, respectively.
        \end{itemize}
        \item $\M_2$ contains the instruction $(\ell,0): \textit{inc } c; \textit{goto } \{1\}$
        \item $\M_1$ contains the instruction $(\ell,+): \overline{\Ins}$, where  $\overline{\Ins}$ is obtained from $\Ins_\ell$ as follows:
        \begin{itemize}
            \item $c_1$ is replaced with $c$;
            \item each set of target labels $L$ occurring in $\Ins_\ell$ is replaced with $\overline{L}$ obtained from $L$ by replacing every $u \in L$ with $(u,0)$.
        \end{itemize}
        \item $\M_2$ contains the instruction\\ $(\ell,+): \textit{if } c{=}0 \textit{ then goto } \{1\} \textit{ else dec } c; \textit{ goto } \{1\}$ 
    \end{itemize}
    If $\ell \in L_2$, then 
    \begin{itemize}
        \item $\M_2$ contains the instruction $(\ell,0): \overline{\Ins}$, where  $\overline{\Ins}$ is obtained from $\Ins_\ell$ as follows:
        \begin{itemize}
            \item $c_2$ is replaced with $c$;
            \item each set of target labels $L$ occurring in $\Ins_\ell$ is replaced with $\overline{L}$ obtained from $L$ by replacing every $u \in L$ with either $(u,+)$ or $(m{+}u,0)$, depending on whether \mbox{$u \in L_2$} or $u \in L_1$, respectively.
        \end{itemize}
        \item $\M_1$ contains the instruction $(\ell,0): \textit{inc } c; \textit{goto } \{1\}$
        \item $\M_2$ contains the instruction $(\ell,+): \overline{\Ins}$, where  $\overline{\Ins}$ is obtained from $\Ins_\ell$ as follows:
        \begin{itemize}
            \item $c_2$ is replaced with $c$;
            \item each set of target labels $L$ occurring in $\Ins_\ell$ is replaced with $\overline{L}$ obtained from $L$ by replacing every $u \in L$ with $(u,0)$.
        \end{itemize}
        \item $\M_1$ contains the instruction\\ $(\ell,+): \textit{if } c{=}0 \textit{ then goto } \{1\} \textit{ else dec } c; \textit{ goto } \{1\}$ 
    \end{itemize}
    Furthermore, we put $I = (I_1,I_2)$, where $I_1$ is the set of all $(\ell,0),(\ell,+)$ such that $\ell \in L_1$, and $I_2$ contains the other labels. The computation of $\M_1 \times_I \M_2$ starts by executing the instructions with label $(1,0)$.
    
    Intuitively, $\M_1 \times_I \M_2$ simulates  $\widehat{\M}$ where the instructions on $c_1$ and $c_2$ are performed by $\M_1$ and $\M_2$, respectively.  
    As long as $\M_1$ performs instructions on $c_1$, $\M_2$ keeps incrementing/decrementing $c_2$ alternately (the flags $+$ and $0$ in the label indicate whether the ``inactive'' counter should be decremented or incremented; note that we only decrement the inactive counter when it was incremented before, and hence its value is certainly positive). When an instruction operating on $c_2$ is reached, the control is passed to $\M_2$. Before executing the instruction on $c_2$, $\M_2$ possibly decrements $c_2$ to restore its value. 
    It is easy to check that $\M$ is deterministic and bounded iff $\M_1 \times_I \M_2$ is deterministic and bounded, and $\M$ has a recurrent computation iff $\M_1 \times_I \M_2$ has a recurrent computation.

    %Recall that when $\widehat{M}$ halts, then both $c_1$ and $c_2$ are zero; the same holds for $\M_1 \times_I \M_2$. 
\end{proof}

\end{document}